\newtheorem{theorem}{Theorem}[section]
\newtheorem{lemma}[theorem]{Lemma}
\newtheorem{proposition}[theorem]{Proposition}
\newtheorem{definition}[theorem]{Definition}
\newtheorem{observation}[theorem]{Observation}
\newtheorem{prop}[theorem]{Proposition}
\newtheorem*{conjecture*}{Conjecture}
\newtheoremstyle{nonindented}{1ex}{1ex}{}{}{\bfseries}{.}{.5em}{}
\newtheoremstyle{indented}{1ex}{1ex}{\itshape\addtolength{\leftskip}{0.6cm}\addtolength{\rightskip}{0.6cm}}{}{\bfseries}{.}{.5em}{}
\theoremstyle{nonindented}
\theoremstyle{indented}
\newtheorem*{direction*}{Research Direction}
\theoremstyle{plain}
\newcommand{\1}[1]{\mathds{1}#1}
\renewcommand{\hat}{\widehat}
\renewcommand{\tilde}{\widetilde}
\renewcommand{\bar}{\overline}
\def\ex{\qopname\relax n{E}}
\def\min{\qopname\relax n{min}}
\def\max{\qopname\relax n{max}}
\def\argmax{\qopname\relax n{argmax}}
\newcommand{\jbid}{b^{\qopname\relax n{dual-opt}}_t}
\newcommand{\sbid}{b^{\qopname\relax n{seq}}_t}
\newcommand{\mbid}{b^{\qopname\relax n{min}}_t}
\def\E{\mathcal{E}}
\def\R{\mathcal{R}}
\def\ex{\qopname\relax n{E}}
\def\min{\qopname\relax n{min}}
\def\max2{\qopname\relax n{max2}}
\def\max{\qopname\relax n{max}}
\def\argmax{\qopname\relax n{argmax}}
\newcommand{\opt}{\mathsf{OPT}}
\def\E{\mathcal{E}}
\def\R{\mathcal{R}}
\def\part{P} 
\def\v{\boldsymbol{v}}
\newcommand{\xt}{x_t}
\newcommand{\bt}{b_t}
\newcommand{\vt}{v_t}
\newcommand{\pt}{p_t}
\newcommand{\reg}{\mathrm{Regret}}
\newcommand{\rew}{\mathrm{Reward}}
\newcommand{\gseq}{\overrightarrow{\gamma}}
\global\long\def\1{\mathbf{1}}%
\global\long\def\vt{v_{t}}%
\global\long\def\xt{x_{t}}%
\global\long\def\pt{p_{t}}%
\global\long\def\bt{b_{t}}%
\global\long\def\0{\mathbf{0}}%
\global\long\def\ftscombined{f_{t}^{\star}}
\global\long\def\reg{\mathsf{Regret}}%
\global\long\def\rew{\mathsf{Reward}}%
\global\long\def\alg{\mathsf{Alg}}%
\global\long\def\mpacing{\mathsf{MinPacing}}%
\global\long\def\calp{\mathcal{P}}%
\global\long\def\E{\mathbb{E}}%
\newenvironment{lp*}{\begin{equation*}  \begin{array}{lll}}{\end{array}\end{equation*}}
\def\part{P} 
\def\v{\boldsymbol{v}}
\crefname{prob}{Problem}{Problems}
\crefname{fact}{Fact}{Facts}
\crefname{alg}{Algorithm}{Algorithms}
\crefname{sec}{Section}{Sections}
\crefname{equation}{Equation}{Equations}
\crefname{lem}{Lemma}{Lemmas}
\crefname{rem}{Remark}{Remarks}
\crefname{lemma}{Lemma}{Lemmas}
\crefname{defn}{Definition}{Definitions}
\crefname{cor}{Corollary}{Corollaries}
\crefname{prop}{Proposition}{Propositions}
\crefname{ineq}{Inequality}{Inequalities}
\crefname{assumption}{Assumption}{Assumptions}
\let\ref\cref
\begin{document}
\date{}

\title{A Field Guide for Pacing Budget and ROS Constraints}

\author{Santiago R. Balseiro\thanks{Columbia Business School and Google Research (srb2155@columbia.edu).} \and Kshipra Bhawalkar\thanks{Google Research (kshipra@google.com)} \and Zhe Feng\thanks{Google Research (zhef@google.com)} \and Haihao Lu\thanks{The University of Chicago, Booth School of Business (haihao.lu@chicagobooth.edu). Part of the work was done at Google Research.} \and Vahab Mirrokni\thanks{Google Research (mirrokni@google.com)}\and Balasubramanian Sivan\thanks{Google Research (balusivan@google.com)} \and Di Wang\thanks{Google Research (wadi@google.com)}}

\maketitle

\begin{abstract}
Budget pacing is a popular service that has been offered by major internet advertising platforms since their inception. Budget pacing systems seek to optimize advertiser returns subject to budget constraints through smooth spending of advertiser budgets. In the past few years, autobidding products that provide real-time bidding as a service to advertisers have seen a prominent rise in adoption. A popular autobidding stategy is value maximization subject to return-on-spend (ROS) constraints. For historical or business reasons, the systems that govern these two services, namely budget pacing and ROS pacing, are not necessarily always a single unified and coordinated entity that optimizes a global objective subject to both constraints. The purpose of this work is to theoretically and empirically compare algorithms with different degrees of coordination between these two pacing systems. 

In particular, we compare (a) a fully-decoupled \emph{sequential algorithm} that first constructs the advertiser's ROS-pacing bid and then lowers that bid for budget pacing; (b) a minimally-coupled \emph{min-pacing algorithm} that runs these two services independently, obtains the bid multipliers from both of them and applies the minimum of the two multipliers as the effective multiplier; and (c) a \emph{fully-coupled} dual-based algorithm that optimally combines the dual variables from both the systems. Our main contribution is to theoretically analyze the min-pacing algorithm and show that it attains similar guarantees to the fully-coupled canonical dual-based algorithm. On the other hand, we show that the sequential algorithm, even though appealing by virtue of being fully decoupled, could badly violate the constraints. We validate our theoretical findings empirically by showing that the min-pacing algorithm performs almost as well as the canonical dual-based algorithm on a semi-synthetic dataset that was generated from a large online advertising platform's auction data. 
\end{abstract}

\section{Introduction}\label{sec:intro}
Internet advertisers purchase advertising opportunities by bidding in real-time auctions, and, to control their expenditures, it is common for advertisers to set budgets for their campaigns~\cite{GoogleBudgetPacing, FacebookBudgetPacing, TwitterBudgetPacing}. Budget pacing is a popular service offered by most advertising platforms that allows advertisers to specify their budgets and then optimizes advertiser bids in real-time to maximize advertisers' return subject to the spend being at most the budget. In the past few years, thanks to the increasing availability of ROS-related metrics, and the vastly improved conversion prediction models, autobidding products have seen a prominent rise in adoption~\cite{FacebookAutoBidding, GoogleAutoBidding}. These are tools that provide value-optimizing real-time bidding subject to return-on-spend (ROS) constraints (on top of the existing budget constraints) as a service to advertisers. Autobidding takes as input high-level advertiser goals like the target cost per conversion or acquisition of an advertiser and places real-time bids on a per-query basis to optimize advertiser returns. 

The algorithms that govern budget and ROS pacing, namely value-optimization subject to budget and ROS constraints, are not necessarily always a unified entity that optimizes a global objective. These services are often managed by different business units within the same organization or by different organizations altogether (many third-party demand-side platforms offer autobidding services), which results in different algorithms independently choosing/modifying advertisers' bids. This is not surprising in light of the meaningful gap between the times at which these products gained traction, with budget pacing systems having been standard and popular much earlier. As a result, even if the objectives of both services are aligned, the presence of budget and ROS constraints can introduce inefficiencies in the bidding process when the systems are decoupled. How do the fully decoupled and fully coupled optimal pacing services compare? Is there a way to operate the pacing service that obtains the best of both worlds: i.e., (a) maintain the theoretical guarantees of the fully coupled optimal pacing service, while (b) still being only minimally coupled? Our contribution in this work is to design and analyze an algorithm that approaches the best of both worlds. We establish this fact both theoretically and empirically.

\subsection{Pacing Services} 

Pacing services are online algorithms that adaptivelty adjust advertisers' bids based on auction feedback to maximize certain objectives while satisfying different constraint. Nowadays, a popular paradigm in internet advertising markets is that of \emph{value maximization}~\cite{FacebookAutoBidding, GoogleAutoBidding}. Unlike the usual quasilinear utility model, where the bidder seeks to maximize the difference between their value and payment, the bidder's stated objective in autobidding/budgeting products is to maximize their overall value (e.g., the number of conversions or conversion value) while respecting their budget and ROS constraints. For example, a bidder could ask to maximize the total number of conversions they get, subject to spending at most $\$1000$ and not paying more than $\$5$ per conversion. Figure~\ref{fig:joint} illustrates a \emph{joint optimization pacing service}, which we also refer to as a \emph{dual-optimal pacing service}, which takes as input the advertiser's budget and ROS target, and then automatically bids on behalf of the advertiser in the platform's auction. Importantly, the pacing services maintain a feedback loop that monitors the real-time spend and conversions from the auction and uses this information to adjust bids.

As we discussed, in many cases, the budget and pacing services maintain separate feedback loops. For historical reasons, budget pacing services are offered by platforms themselves, and ROS pacing services are built on top of them (they are either offered by the same advertising platform or third parties). In Figure~\ref{fig:sequential} we illustrate a typical \emph{sequential pacing service} in which the ROS pacing services feeds bids to the budget pacing service, which, in turn, bids in the platform's auction. Each service consumes the spend and conversion feedback from the auction to adjust bids dynamically. The benefit of the sequential optimization architecture is its decoupled nature, i.e., it could operate separate modules for budget pacing and ROS pacing.

We also consider a third minimally coupled architecture (Figure~\ref{fig:min}), which we call the \emph{min pacing service}. Rather than organizing the pacing services sequentially, they are organized in parallel. For each auction, the bid is obtained by taking the minimum of the bids generated by the two systems. While more generally one can think about other reduction operations of the two pacing systems' bids, as we show in this work, the min pacing already performs quite well and approaches the performance of the joint dual-optimal pacing service while still being only minimally coupled.

\begin{figure*}[htb!]
\centering
\begin{subfigure}[b]{0.33\textwidth}
\centering
\includegraphics[scale=0.4]{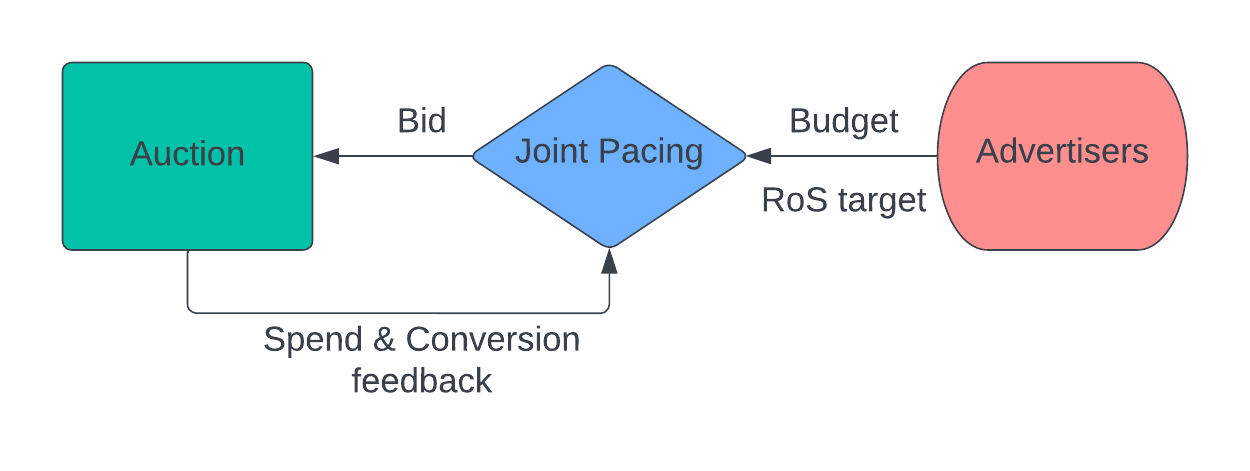}
\caption{Dual-Optimal Pacing Service}
\label{fig:joint}
\end{subfigure}
\begin{subfigure}[b]{0.33\textwidth}
\centering
\includegraphics[scale=0.4]{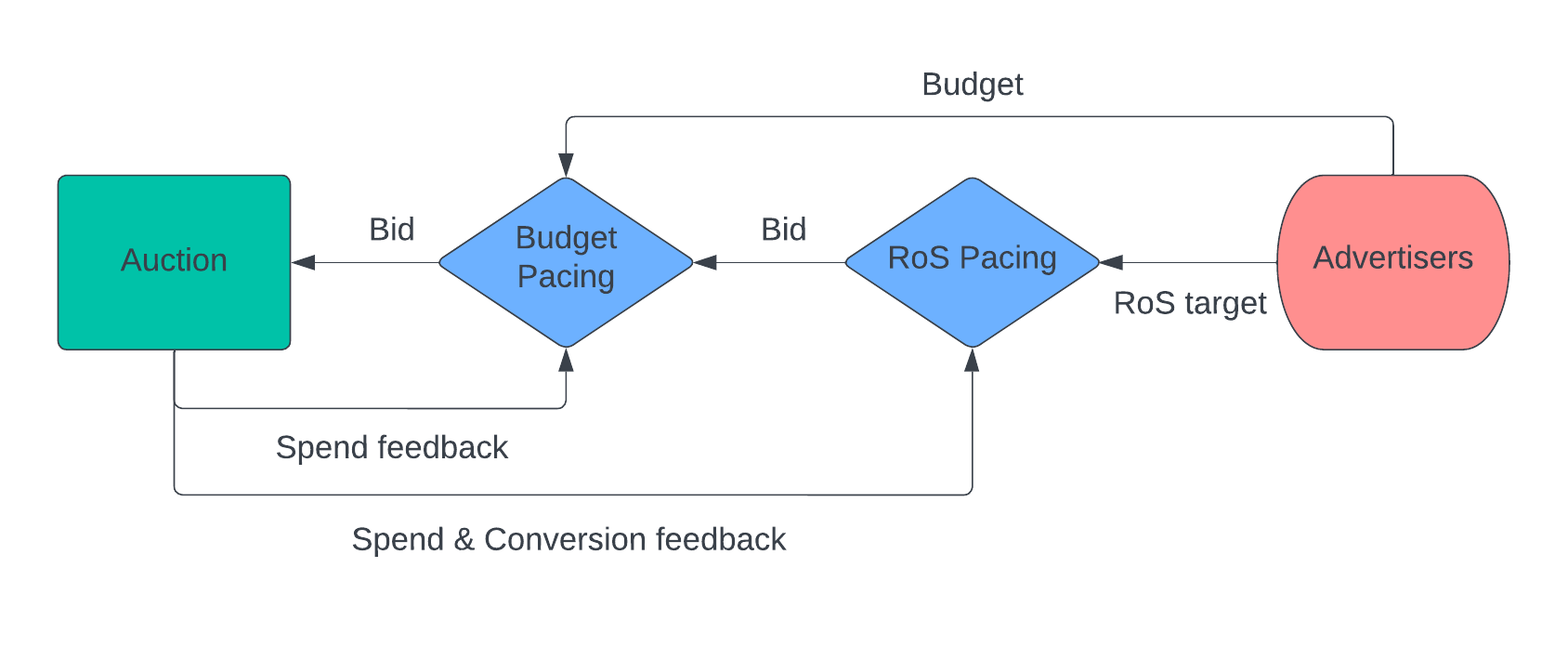}
\caption{Sequential Pacing Service}
\label{fig:sequential}
\end{subfigure}
\begin{subfigure}[b]{0.33\textwidth}
\centering
\includegraphics[scale=0.4]{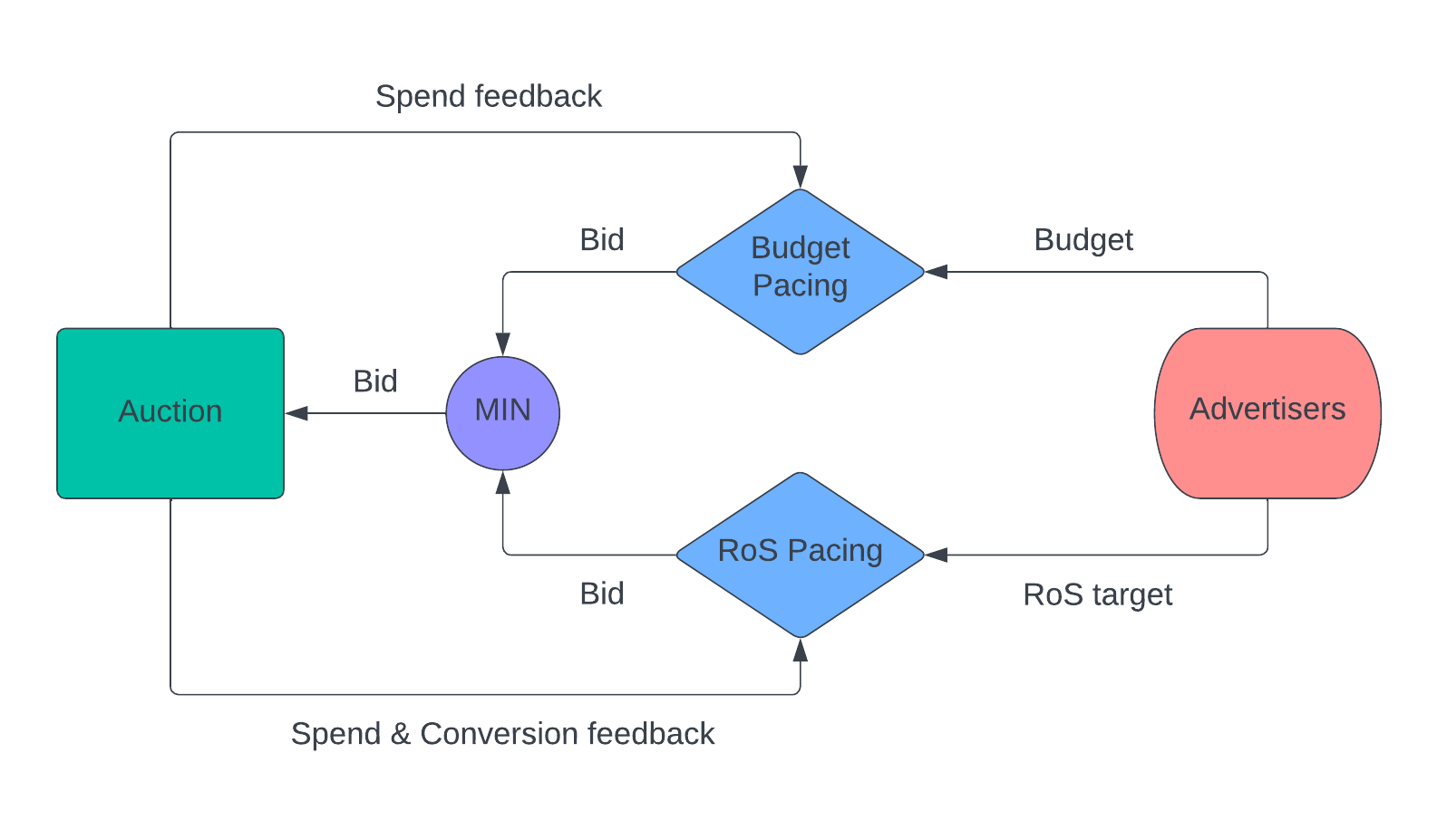}
\caption{Min Pacing Service}
\label{fig:min}
\end{subfigure}
\caption{Three Different Pacing Services for Budget and ROS Constraints}
\label{fig:three services}
\end{figure*}




\subsection{Our Results}

We compare all three algorithms described above, both theoretically and empirically. We next overview the algorithmic implementations of the pacing services, the empirical evaluation, and our theoretical analysis. Our main contribution is a theoretical analysis of the min-pacing algorithm and shows that it obtains the best of both worlds, i.e., its performance approaches that of the joint dual-optimal pacing service, while still being essentially decoupled much like the sequential pacing architecture. On the other hand, we show that the sequential architecture itself is a very poor choice: it either violates constraints by $\Omega(T)$ or has a regret of $O(T)$, when there is a finite horizon of $T$ repeated auctions. 

\paragraph{Algorithmic implementation.} In this work, we consider \emph{uniform bidding policies} (which were first proposed and analyzed in~\citep{FMPS07}) that multiplicatively scale advertisers' values, which are usually generated using advanced machine learning prediction algorithms~\citep{mcmahan2013ad,he2014practical,zhou2018deep,juan2016field,lu2017practical}. Uniform bidding is appealing for its simplicity, can be shown to be optimal in many settings, and is extensively used in practice~\citep{ABM19}. The bid multiplier $k$ of the uniform bidding policy is adjusted in real-time using a feedback loop. While many choices are possible for the feedback loop, in this work we consider Lagrangian dual algorithms, which are the work-horse algorithms of budget pacing~\citep{BM22}. At a high level, these algorithms introduce a dual variable for each constraint and then adjust these dual variables dynamically using a first-order algorithm. The final bid multiplier is calculated using these dual variables. Dual-based algorithms have strong performance guarantees and have been shown to subsume PID controllers---one of the most popular feedback controllers used in practice~\citep{tashman2020dynamic,zhang2016feedback,smirnov2016online,yang2019bid,ye2020cold,BLMS22}. Therefore, we believe the algorithms studied in this paper are representative of those used by pacing services in practice. We provide more details on the concrete algorithmic implementation in Section~\ref{sec:setup}.

\paragraph{Theoretical evaluation.} We evaluate the three algorithms along two dimensions: ROS constraint error, and conversion value. Budget constraints are hard in practice, i.e., pacing algorithms can no longer participate in auctions when budgets are exhausted. 
In contrast, ROS constraints are often soft: while small violations are permitted, large violations are undesirable. Finally, the conversion value garnered before the budget runs out should be as large as possible. We benchmark algorithms by looking at their regret relative to the conversion value of an offline optimum pacing strategy satisfying budget and ROS constraints. Our results are summarized in Table~\ref{tab:summary}.


\begin{table}[]
\small
\begin{center}
\begin{tabular}{|c|c|c|}
\hline
           & ROS Violation & Regret \\ \hline
Dual-Optimal      & $O(\sqrt T)$  & $O(\sqrt T)$ \\ \hline
Sequential & $\Omega(T)$   & $\Omega(T)$  \\ \hline
Min        & $O(\sqrt T)$  & $O(\sqrt T)$\\ \hline
\end{tabular}
\caption{Summary of our theoretical evaluation.}
\label{tab:summary}
\end{center}
\end{table}

\paragraph{Technical contribution.} Our evaluation is performed in a statistical environment under uncertainty. In other words, we assume that values and competing bids are drawn independently from a distribution that is unknown to the algorithms. We consider a finite horizon with $T$ repeated auctions in which the budget $B$ is proportional to the number of auctions, i.e., $B = \rho T$ for some fixed $\rho>0$. Recently, \cite{FPW22} showed that joint dual-optimal algorithm scores high along three dimensions. It runs out of budget at most $O(\sqrt{T})$ auctions from the end of the horizon, violates the ROS constraint by an amount $O(\sqrt{T})$, and attains a regret (conversion value relative to offline optimum) of $O(\sqrt{T})$. Our main result in this paper is to show that the min pacing algorithm also scores high along three dimensions, achieving  $O(\sqrt{T})$ bounds similar to those of the dual-optimal pacing algorithm. The analysis of the min pacing algorithm is challenging because we do not have access to a Lyapunov function, as in the dual-optimal pacing case. Instead, we analyze the algorithm by carefully studying the dynamics of the dual variables, which evolve according to a complex stochastic process. In particular, using the ODE technique for recursive algorithms, we first prove that the min-pacing algorithm quickly identifies which constraint binds and reaches the orbit of an optimal solution in $O(\sqrt T)$ time steps. Then, using stochastic stability tools, we argue that the algorithm never leaves the orbit of an optimal solution with high probability. We conclude by showing that the regret accumulated once the algorithm is in orbit is small using results from online convex optimization.

We finally argue that sequential pacing leads to unacceptable levels of ROS constraint violation or regret. In particular, we show any instantiation of the sequential pacing algorithm can have either linear (i.e. $\Omega(T)$) ROS violation or linear regret on some instances.  

\paragraph{Empirical evaluation.}
Section~\ref{sec:empirical} explains in detail our evaluation methodology, including how we construct our semi-synthetic dataset, how we obtain the different quantities in our optimization formulation~\eqref{eq:obj_x} based on real auction data. Here we give a high-level summary of our result. The objective of the algorithms is to maximize conversion value subject to budget and ROS constraints. In our simulations, as we explain in Section~\ref{sec:empirical} we enforce a hard stop once the budget constraint is violated, but we do not enforce a hard stop for the ROS constraint. This is aligned with practice as budget constraints are usually enforced more strictly than ROS constraints. As a result, we cannot compare conversion values directly because some algorithms might produce solutions that are infeasible, i.e., they could violate the ROS constraints. Therefore, we evaluate the different algorithms as follows. For each algorithm, we determine for each percentual level $z\%$ violation of the ROS constraint, the total conversion value obtained by the algorithm over all the campaigns that violated the constraint by at most $z\%$. By comparing these quantities, we can obtain the following critical insight: what percentage of ROS constraint violation does the naive sequential pacing need, to obtain the same value as the dual-optimal pacing does, at say $1\%$ constraint violation, or the min pacing does, at say $5\%$ constraint violation. Such plots are shown in Figure~\ref{fig:cumulative_value}. Similar plots, but instead focusing on the number of campaigns that violate the ROS constraint by $z\%$ is portrayed in Figure~\ref{fig:num_of_campaigns}. 

The high level summary is quite evident from these figures: \emph{the naive sequential pacing needs to violate the ROS constraint by a very significant percentage to approach anywhere near the dual-optimal pacing, while the min pacing approaches the dual optimal pacing at a much smaller percentage of ROS constraint violation. Moreover, in sequential pacing, the feedback loops of budget and ROS can lead to unstable dynamics.} Our findings suggest avoiding the sequential implementation despite its simplicity and appeal, and point towards having the two feedback loops either operating in a centralized manner, or at least minimally coupled as in the min pacing architecture.  Overall, our work has implications for the design and operation of pacing services. Our findings suggest that the lack of coordination of sequential pacing can lead to suboptimal and unstable outcomes. Advertising should, whenever possible, adopt algorithms that have some level of coordination between budget and ROS pacing. If centralized architecture is not an option, then the minimally-coupled min pacing architecture is a simple, practical and high-performant option to consider.

\subsection{Related Work}
We discuss here the paper that is most related to ours, and due to lack of space, we discuss other related work in Appendix~\ref{app:related}. In independent work, Lucier et al.~\cite{LPSZ23} studied a conceptually similar, yet different, algorithm that uses the final bid as the minimum of bid from the two pacing services. But unlike ours, the bid they use from each pacing service is different from the dual-optimal bid for that service. More importantly, they study a multi-bidder setting (unlike our single bidder setting) and their primary quantity of interest is the loss in liquid welfare, namely, the budget-capped sum of values obtained by all agents, when all of them employ this bidding algorithm. They establish that when the autobidding algorithms of agents play against each other, the resulting expected liquid welfare is at least half of the optimal expected liquid welfare achievable. But for the individual agent's regret, even in a single bidder stochastic setting where the competing bids are drawn i.i.d. (rather than being set by other players simultaneously adopting the same algorithm), their regret bound is $O(T^{7/8})$ as opposed to the tight $O(\sqrt{T})$ guarantee we prove. In general, proving a low regret guarantee when all bidders are simultaneously using the same algorithm requires strong assumptions that guarantee that the optimal dual variables of all agents converge to a unique Nash equilibrium. Such an analysis is provided by Balseiro ang Gur~\cite{BG19} for the case of utility-maximizing budget-constrained bidders under a strong-monotonicity assumption of the bidders expenditures. We conjecture our analysis could be extended to show similar low-regret guarantees in multi-bidder settings under similar strong assumptions, and we leave this is as an interesting research direction.

\section{The setup}\label{sec:setup}

In this section, we define a formal model for budget and ROS constraint pacing. We consider a single bidder who participates in $T$ repeated auctions. The bidder derives a value of $v_t\in [0,1]$ from getting allocated in auction $t = 1,\ldots,T$. Upon submitting a bid of $b_t$, the bidder gets an allocation of $x_t(b_t)$ and an expected payment of $p_t(b_t)$. I.e., $x_t: \R_{\geq 0} \rightarrow [0,1]$, and $p_t: \R_{\geq 0} \rightarrow [0,1]$ are the allocation and payment functions respectively. Note we assume without loss of generality (by scaling) that $v_t,x_t,p_t$ are all in $[0,1]$. The tuple $\gamma_t=(v_t, x_t, p_t)$ is drawn i.i.d. every round from an unknown distribution $\calp$. We denote the sequence of $T$ samples by $\gseq := \{\gamma_1, \gamma_2, \dots, \gamma_T\} \sim \calp^T$ and sequences of length $\ell\neq T$ by $\gseq_{\!\ell}$ where needed.
At the beginning of round $t$, the bidder has knowledge of the value $v_t$ and the historical information of past auctions to decide on a bid, $b_t$. Denote $\delta_t = (x_t(b_t), p_t(b_t))$ to represent the outcome of the auction at round $t$. At the end of round $t$, the bidder observes $\delta_t$. Thus the historical information at the beginning of round $t$ is $h_t = \{(v_s, \delta_s)\}_{s\leq t-1}$. 

\paragraph{The optimization objective} 
The advertiser is a \emph{value-maximizer} and seeks to maximize the overall value while respecting the budget of $B$ dollars and the ROS constraint.  Formally, the bidder's optimization problem is stated as follows:
\begin{equation}
\label{eq:obj_x}
\begin{array}{ll}
\underset{b_{t}: t=1, \cdots, T}{\mbox{maximize}} & \sum_{t=1}^T v_{t}\cdot x_t(b_t)\\
\mbox{subject to } &  \sum_{t=1}^T p_t(b_t) \leq  \sum_{t=1}^T v_{t} \cdot x_t(b_t),
\\
& \sum_{t=1}^T p_t(b_t) \leq B
\end{array}
\end{equation}
The first constraint is the ROS constraint, which states that for every dollar spent, there is at least a dollar of value.\footnote{More generally, one can have the constraint to state that for every dollar spent, there is at least $\tau$ dollars of value. But without loss of generality, one can set $\tau=1$. The update to the bidding formula as a function of $\tau$ is quite straightforward, and we skip this here to avoid carrying the notational clutter of $\tau$ everywhere.} The second constraint is the budget constraint. We define the per-round budget by $\rho:=B/T$. In round $t$ the bidder bids $b_t = \pi_t(v_t, h_t)$. The function $\pi_t(\cdot, \cdot)$ could be randomized. 


\paragraph{Truthful auctions, nontruthful auctions, uniform bidding policy.} 
{We restrict attention to a uniform bidding policy, i.e., 
one computes a bid multiplier $k_t$ independently of the current value $v_t$, and the bid submitted is $b_t = k_t\cdot v_t$.} If the underlying auction is truthful 
\footnote{An auction is truthful if the allocation function $x_t(b_t)$ is weakly monotonically increasing, and the payment function satisfies $p_t(b_t) = p_t(0) +  b_tx_t(b_t) - \int_0^{b_t}x_t(z)dz.$ In truthful auctions, quasi-linear utility maximizers are willing to report their value truthfully.}, Aggarwal et al.~\cite{ABM19} showed that the optimal bidding algorithm for problem~\eqref{eq:obj_x} is indeed a uniform bidding policy, and hence the restriction to uniform bidding is without loss of generality. If the underlying auction is non-truthful, the restriction to uniform bidding can be made without loss if the buyer has access to an optimizer $g_t(v)$ that computes the optimal bid to submit in a one-shot auction for any given true value\footnote{If the bidder had access to $x_t(\cdot)$ and $p_t(\cdot)$ before placing the bid at time $t$, the optimizer is $g_t(v) \in \argmax_b \left\{ v \cdot x_t(b) - p_t(b) \right\}$.} $v$. In this case, bidding $b_t = g_t(k_t \cdot v)$ would be optimal for the bidder due to the revelation principle.
\vspace{-5pt}
\subsection{The Bidding  Algorithms}\label{sec:algorithms} 
Despite the simplicity and appeal of uniform bidding, computing the optimal multiplier $k_t$ 
requires knowledge of the entire set of $\{v_t, x_t, p_t\}_{t=1\dots T}$, while information is only revealed in an online manner. 
Thus, to approach the performance of uniform bidding policy in an online setting,
a standard technique is to dualize the constraints and look at the Lagrangian dual of the problem. 
We introduce dual variables $\mu \ge 0$ for the budget constraint and $\lambda$ for the ROS constraint and write Lagrangian dual of the problem~\eqref{eq:obj_x}:
\begin{align}
\label{eq:Langrangian}
\underset{\lambda \ge0, \mu\ge 0}{\mbox{min}} ~~ ~~\underset{b_{t}: t=1, \cdots, T}{\mbox{max}} \left\{ T \rho \mu + \sum_{t=1}^T \Big( (1+\lambda) v_{t}\cdot x_t(b_t) - (\mu+\lambda) p_t(b_t) \Big) \right\}\ .
\end{align}

At each time $t$, the Lagrangian dual variables $\lambda_t, \mu_t$ can be updated using online mirror descent, which is a generic framework such that different instantiations can lead to distinct dual update steps and theoretical guarantees.
Then we compute the multiplier $k_t$ as a function of 
$\lambda_t, \mu_t$ and set the bid of $b_t = k_t\cdot v_t$. 

{
\paragraph{Dual-Optimal Pacing}
We now discuss how to derive the optimal bidding multiplier $k_t$ when the underlying auction is truthful. Note that the Lagrangian dual problem \eqref{eq:Langrangian} becomes separable over time after dualizing the constraints. Therefore, at time $t$, assuming that the dual variables are $\mu_t$ and $\lambda_t$, the optimal bid by solving \eqref{eq:Langrangian} is
}

\begin{align}
\jbid &= \argmax_{b} \left\{ (1+\lambda_t) \cdot v_{t}\cdot x_t(b) - (\mu_t+\lambda_t) p_t(b) \right\} \nonumber \\
&= \argmax_{b} \left\{ \frac{1+\lambda_t}{\mu_t+\lambda_t} \cdot v_{t}\cdot x_t(b) -  p_t(b) \right\} 
= \frac{1+\lambda_t}{\mu_t + \lambda_t} \cdot v_t\,,\label{eq:joint-bid}
\end{align}
where the second equation follows from extracting the factor $(\mu_t+\lambda_t)$ and the last because the bidder's problem is equivalent to that of bidding in a truthful auction when the value is $(1+\lambda_t) / (\mu_t + \lambda_t) v_t$. In other words $k_t = (1+\lambda_t)/(\mu_t + \lambda_t)$. Note that $k_t$ is multiplicatively \emph{inseparable} across $\lambda_t$ and $\mu_t$, {therefore, we need a centralized pacing to update $k_t$}.

The dual variables are updated using feedback loops based on the auction result that have natural self-correcting features to prevent constraint violations (see Algorithm~\ref{alg:joint-updates-mirror-descent}). For example, in the case of the budget constraint, the feedback loop in \eqref{eq:budget-dual-update} seeks to equate the actual spend of the auction $p_t(b_t)$ with the per-round budget $\rho$ to satisfy the budget constraint (whenever this constraint is binding). 
Mathematically, the dual variable updates (i.e. the mirror descent step) are derived from solving a dual optimization problem, and we {apply exponential updates for both budget and ROS dual variables}
in \eqref{eq:ros-dual-update} and 
\eqref{eq:budget-dual-update}\footnote{These exponential update rules are derived by instantiating with a particular Bregman divergence function in the mirror descent setup.}. 
We refer the reader to \cite{BG19,FPW22} for more details. \citet{FPW22} show that this specific setup obtains near-optimal regret $O(\sqrt{T})$, where regret is the difference between the offline optimal total value and the bidding policy's total value. 

\begin{algorithm}
\caption{Dual-Optimal Pacing 
}\label{alg:joint-updates-mirror-descent}
\textbf{Initialize:} Initial dual variables $\lambda_1=1$, $\mu_1=0$, total initial budget $B_1 := \rho T$,  gradient descent step-sizes $\alpha$ and $\eta$;

\For{$t=1,2,\cdots, T$}
{ Observe the value $v_t$, and set the bid \[b_t = \min\left\{\frac{1+  \lambda_t}{\mu_t + \lambda_t}\cdot v_t, B_t\right\}.\]


Update the dual variable of the ROS constraint 
\begin{eqnarray}\label{eq:ros-dual-update}
\lambda_{t+1} := \lambda_t \cdot \exp\Big(-\alpha \cdot \left( v_t \cdot x_t({b}_t) - p_t(b_t) \right) \Big)\,.
\end{eqnarray}%


Update the dual variable of the budget constraint as 
\begin{eqnarray}\label{eq:budget-dual-update}
\mu_{t+1} := 
\mu_t \cdot \exp\Big( - \eta \cdot \left(\rho - p_t(b_t)\right)\Big)\,.
\end{eqnarray}
    
Update the leftover budget $B_{t+1} = B_t - p_t(b_t)$;
}
\label{alg:joint-pacing}
\end{algorithm}

\paragraph{Sequential Pacing.} 
If one were to consider the problem~\eqref{eq:obj_x} with just the budget constraint, the bidding policy (from Lagrangian duality with the ROS dual variable $\lambda_t = 0$) would be to bid $b_t = v_t/\mu_t$, with the dual variable $\mu_t$ alone getting updated as in Algorithm~\ref{alg:joint-updates-mirror-descent}. Similarly, if one were to consider the problem ~\eqref{eq:obj_x} with just the ROS constraint, the bidding policy (from Lagrangian duality with budget dual variable $\mu_t = 0$) would be to bid $b_t = v_t\cdot\frac{1+\lambda_t}{\lambda_t}$, with the dual variable $\lambda_t$ alone updated as in Algorithm~\ref{alg:joint-updates-mirror-descent}. Given the historical context mentioned earlier, budget pacing systems have been around for longer than ROS pacing optimization. Therefore, it is not unexpected to have separate servers handling the feedback loops of the budget and ROS constraints and the final bid constructed in a sequential manner, namely, 
\begin{eqnarray}\label{eq:seq-update}
\sbid = \min\left\{\frac{1 + \lambda_t}{\lambda_t} \cdot \frac{1}{\mu_t} \cdot v_t,B_t\right\}.
\end{eqnarray}
In other words, the ROS constraint pacing service determines an intermediary bid  $\hat{b}_t = (1 + \lambda_t)/\lambda_t \cdot v_t$ which is fed to the budget service and, in turn, the budget pacing service operates on the scaled bid $\hat{b}_t$ to get the final bid of $\hat{b}_t/\mu_t$ (and also capped by the remaining budget $B_t$). While not optimal, this implementation has the benefit of being decentralized, i.e., it could operate separate servers for budget pacing and ROS pacing, that (a) only communicate the temporary bid $\hat{b}_t$ and (b) could update their respective variables at different frequencies.

\paragraph{Min Pacing.}
If the transition from sequential to dual-optimal pacing proves prohibitively expensive in the short term for organizational or engineering reasons, we propose and study another decentralized optimization, that we call the \emph{min pacing} service. Rather than applying the bid-lowering operations of the two pacing systems sequentially, we take the minimum of the bids generated by the two systems:
\begin{eqnarray}\label{eq:min-update}
\mbid = \min\left\{\frac{1 + \lambda_t}{\lambda_t}\cdot  v_t, \frac{1}{\mu_t} \cdot v_t, B_t\right\}\,.
\end{eqnarray}
The corresponding dual variables can follow the same update rules in Algorithm~\ref{alg:joint-updates-mirror-descent} (\eqref{eq:ros-dual-update} and \eqref{eq:budget-dual-update}). The min pacing service operates in parallel instead of sequentially and also requires minimum coordination between budgeting and ROS pacing. We will show that even though $\mbid$ is in general different from $\jbid$, and thus not the optimizer of the Lagrangian dual \eqref{eq:Langrangian}, bidding $\mbid$ nonetheless achieves the same asymptotically optimal guarantees on regret and constraint violation as the dual-optimal pacing algorithm.
\section{Theoretical Analysis of the Bidding Algorithms}
In this section we analyze the performances of the pacing algorithms introduced in the previous section, and we use the notions of regret and constraint violation. To define the regret, we first define the reward of some pacing algorithm $\alg$ for a sequence of requests $\gseq$ over a time horizon $T$ as \[ \rew(\alg, \gseq):=\sum_{t = 1}^T \vt\cdot \xt(b_t), \numberthis\label{eq:defReward}\] where $b_t$'s are the algorithm's bids. Note the definition doesn't require $\alg$ to satisfy the budget and ROS constraints. Next, we define the optimal reward $\rew(\opt,\gseq)$ for a sequence $\gseq$ as the optimal objective of the offline optimization problem~\eqref{eq:obj_x} given $\gseq$. The regret of $\alg$ is \[\reg(\alg, \calp^T):= \E_{\gseq\sim \calp^T} \left[ \rew(\opt,\gseq) - \rew(\alg, \gseq)\right].\numberthis\label{eq:defRegret}\] We remark that we define $\rew$ for some specific drawn sequence, whereas $\reg$ is defined with respect to a distribution. 

{
Note that $\reg$ itself does not fully measure the performance of $\alg$ since the reward of $\alg$ does not capture the budget and ROS constraints. All the pacing algorithms we discuss will cap the bid by the remaining budget, so the budget constraint is always satisfied. To evaluate our algorithms, we first need the following notion of stopping time.
\begin{definition}\label[defn]{defn:Defs}
The stopping time $\tau$ of \cref{alg:joint-pacing}, with budget $B$ is the first time $\tau$ at which $\sum_{t = 1}^{\tau} \pt(\bt) + 1 \geq B.$ 
\end{definition}

Intutively, $\tau$ is the first time step when the total payment almost exceeds the total budget. By budget endurance, we mean that $T - \tau$ is small for any $\gseq$ or, in other words, the budget always runs out close to the end of the horizon.}

In addition, we focus on the violation of the ROS constraint, i.e. $\sum_{t=1}^T p_t(b_t) - \sum_{t=1}^T v_{t} \cdot x_t(b_t)$. For constraint error we look at ex-post guarantees that hold for any $\gseq$.

In particular, both the joint pacing and min pacing algorithms achieve asymptotically nearly optimal guarantees in terms of the regret and constraint error in the stochastic i.i.d. setting. For simplicity we assume in our analysis that the allocation and payment functions are from truthful auctions. The result for the joint algorithm is already known from previous work in~\cite{BLM22,FPW22}. 

We start with analyzing the regret of $\mpacing$ by considering a continuous-time approximation of the algorithm in which multipliers are updated using the expected gradients instead of their noisy stochastic counterparts used in the real algorithm. 

Before proceeding with our analysis we provide some useful definitions. We denote by 
\begin{align*}
    \gb(k) = \rho - \E_{v} [p(k \cdot v)] \ ,~
    \gr(k) =\E_{v}[ v \cdot x(k \cdot v) - p(k \cdot v)]
\end{align*}
the expected error in the budget and ROS constraints when the multiplier is $k$. We plot some examples in Figure~\ref{fig:gradients}, which is located in the appendix. We require the following assumptions in our analysis.

Our first assumption is that the functions $\gr$ and $\gb$ cross zero once and from above, and that they are Lipschitz continuous.

\begin{ass}\label{ass:single-crossing}We assume that the functions $\gr(k)$ and $\gb(k)$ are $L_g$-Lipschitz continuous in $k$ and bounded. Moreover, the following hold: 

(1) The function $\gb(k)$ crosses the non-negative $k$-axis once at $\kb>0$ and from above. That is, for any $0\le k<\kb$, we have $\gb(k) > 0$ and for any $k>\kb$, we have $\gb(k)<0$.
            
(2) The function $\gr(k)$ crosses the positive $k-$axis once at $\kr>1$ and from above. That is, for any $0<k<\kr$, we have $\gr(k)>0$ and for any $k>\kr$, we have $\gr(k)< 0$.
\end{ass}

{
When the auction is truthful, it can be shown that the functions $\gr$ and $\gb$ always cross the positive axis from above. Therefore, the set of crossing points is always an interval. Assumption~\ref{ass:single-crossing} rules out the possibility of multiple crossing points and, as we shall discuss later, implies the uniqueness of the optimal bidding strategy. This assumption is related to the so-called ``general position'' condition, which is pervasive in online allocation problems (see, e.g., \citealt{DH09,AWY14}). The Lipschitz continuity of the gradients is a common assumption in the analysis of online algorithms~\citep{hazan2016introduction} and holds when either the interim allocation and payment are smooth, or the distribution of values is absolutely continuous. For example, this assumption might fail to hold in a second-price auction when values and competing bids are discrete (there, $\gr$ and $\gb$ are piecewise constant). In this case, it is possible to recover Lipschitz continuity by adding a small amount of random noise to the bids, which mollifies the functions $\gr$ and $\gb$, without significantly impacting the performance of our algorithm. As a result, Assumption~\ref{ass:single-crossing} is not too restrictive.}

Under Assumption~\ref{ass:single-crossing}, we can upper bound the optimal performance in terms of the value collected by a uniform bidding policy that bids the minimum of the multipliers $\kb$ and $\kr$, and provide a simple characterization of an optimal dual solution. The dual problem becomes $\min_{\mu \ge 0, \lambda \ge 0} D(\mu, \lambda)$ where 
{\footnotesize
\begin{align*}
    D(\mu, \lambda):= \underset{k\ge0} {\mbox{maximize}} \left\{ (1+\lambda) \E_{v}[ v \cdot x(k \cdot v) ] + \rho\cdot \mu- (\lambda+\mu) \E_{v} [p(k \cdot v)]\right\}\,,
\end{align*}}
is the dual function.

\begin{lemma}\label{lemma:optimal-multipler} Suppose Assumption~\ref{ass:single-crossing} holds. There exists an optimal solution with $\lambda^*=0$ and $\mu^* = 1/\kb$ if $\kb \le \kr$ or $\lambda^*=1/(\kr - 1)$ and $\mu^* = 0$ if $\kr \le \kb$. Moreover, we have that 
\[
    \E_{\gseq\sim \calp^T} \left[ \rew(\opt,\gseq) \right] \le T \cdot D(\mu^*,\lambda^*) = T \cdot \E_{v}\left[ v \cdot x(k^* \cdot v)\right]\,,
\]
where $k^* = \min(\kr,\kb)$.
\end{lemma}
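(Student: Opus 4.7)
The plan is weak Lagrangian duality at the expected-primal level, followed by a direct computation of $D(\mu^*,\lambda^*)$ for each of the two candidate dual solutions.

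First I would start from the sample-path Lagrangian relaxation: for any $(\mu,\lambda)\ge 0$ and any bids $\{b_t\}$ that are feasible for \eqref{eq:obj_x} on a realized $\gseq$, both primal constraints are non-negatively weighted, so $\sum_t v_t x_t(b_t)\le T\rho\mu+\sum_t\bigl[(1+\lambda)v_t x_t(b_t)-(\lambda+\mu)p_t(b_t)\bigr]$, and this is further bounded by replacing each summand with its pointwise maximum over $b$. For truthful auctions the per-round maximizer equals the uniform bid $b=\tfrac{1+\lambda}{\lambda+\mu}\,v_t$ from \eqref{eq:joint-bid}, so taking expectations under $\gseq\sim\calp^T$ and using i.i.d. yields $\E[\rew(\opt,\gseq)]\le T\cdot D(\mu,\lambda)$ for every $(\mu,\lambda)\ge 0$. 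This is the weak-duality inequality and will handle the first half of the lemma once we specialize to the proposed $(\mu^*,\lambda^*)$.

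Next, I would plug the two candidates into $D$ and compute. In the case $\kb\le\kr$, the choice $\lambda^*=0$, $\mu^*=1/\kb$ produces induced multiplier $k=(1+\lambda^*)/(\lambda^*+\mu^*)=\kb=k^*$, so $D(\mu^*,0)=\E_v[v\cdot x(\kb v)]+\mu^*\bigl(\rho-\E_v[p(\kb v)]\bigr)=\E_v[v\cdot x(k^* v)]+\mu^*\gb(\kb)=\E_v[v\cdot x(k^* v)]$, using $\gb(\kb)=0$ from Assumption~\ref{ass:single-crossing}. The case $\kr\le\kb$ is symmetric: $\lambda^*=1/(\kr-1)$, $\mu^*=0$ gives $k=(1+\lambda^*)/\lambda^*=\kr=k^*$ and $D(0,\lambda^*)=\E_v[v\cdot x(\kr v)]+\lambda^*\bigl(\E_v[v\cdot x(\kr v)]-\E_v[p(\kr v)]\bigr)=\E_v[v\cdot x(k^* v)]+\lambda^*\gr(\kr)=\E_v[v\cdot x(k^* v)]$ since $\gr(\kr)=0$. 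This establishes $D(\mu^*,\lambda^*)=\E_v[v\cdot x(k^* v)]$ in both cases.

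Finally, to justify that these $(\mu^*,\lambda^*)$ are genuinely \emph{optimal} dual solutions as the lemma asserts and not just feasible witnesses, I would invoke complementary slackness at $k^*$: in the first case the budget binds at $k^*$ while ROS has slack $\gr(\kb)\ge 0$, consistent with $\mu^*>0$, $\lambda^*=0$, and symmetrically in the second case; moreover the uniform policy with multiplier $k^*$ is primal-feasible and achieves per-round expected value exactly $\E_v[v\cdot x(k^* v)]$, matching the dual bound, so strong duality at the expected-primal level holds and $(\mu^*,\lambda^*)$ minimize $D$. The only step that is not pure algebra is identifying the per-round Lagrangian maximizer with the uniform-bid multiplier, which is where truthfulness enters via \eqref{eq:joint-bid}; everything downstream is a one-line cancellation driven by the crossing identities $\gb(\kb)=0$ and $\gr(\kr)=0$.
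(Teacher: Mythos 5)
Your argument is correct, and it is the natural Lagrangian-duality proof that the paper's framework implicitly presupposes (the paper states this lemma without a proof, and the surrounding text and appendix use exactly this weak-duality plus complementary-slackness machinery). You correctly handle the one place where a subtlety lurks: passing from the per-sample pointwise max $\E_\gamma[\max_b\{\cdot\}]$ to the uniform-multiplier max $\max_k\{\E_\gamma[\cdot]\}$ is an equality here precisely because, for truthful auctions, the pointwise maximizer $b = \tfrac{1+\lambda}{\lambda+\mu}v$ is itself a uniform bid, so the usual inequality $\E[\max]\ge \max\E$ collapses. The crossing identities $\gb(\kb)=0$ and $\gr(\kr)=0$ then make the dual evaluation a one-line cancellation, and the final strong-duality observation (uniform $k^*$ is fluid-primal feasible because the non-binding constraint has non-negative slack by single-crossing, and it achieves the same value as the dual bound) correctly certifies that $(\mu^*,\lambda^*)$ is a dual minimizer rather than merely a dual-feasible witness.
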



\begin{ass}\label{ass:nondegenerate}
    The problem is non-degenerate, namely, $\kb\not=\kr$.
\end{ass}

The non-degeneracy assumption guarantees that only one of the budget constraint or the ROS constraint can be binding for the uniform bidding policy. In practice, the data comes from a random process, and the budget and targets are given by the advertiser. Notice that the degenerate case stays in a lower dimension manifold, thus it is very likely that the non-degenerate assumption holds. Under the non-degeneracy assumption, the optimal multiplier is either $k^* = \kb$ or $k^*=\kr$. { We remark that this assumption is only imposed to simplify the analysis---we can provide a similar regret bound of $\sqrt{T}$ for the degenerate case using techniques similar to the ones presented in this paper.}

\begin{ass}\label{ass:second-moment}The gradients of the budget and ROS constraints have second moments bounded by $\bar {G_2}$.
\end{ass}

{
Assumption~\ref{ass:second-moment} is common in the analysis of first-order algorithms for online optimization, where second moments are usually required to be bounded~\citep{hazan2016introduction}. When the auction is truthful, a sufficient condition for this assumption to hold is that values have bounded second moments, i.e., $\E_{v} [v^2] < \infty$.}

\begin{ass}\label{ass:strongly-monotone} There exists some $\delta > 0$ and $\ell > 0$ such that for all $k \in [k^* - \delta, k^*+\delta]$ we have that either $-\gb(k) (k - k^*) \ge \ell \cdot (k - k^*)^2$ if $k^* = \kb$ or $-\gr(k)(k-k^*) \ge \ell \cdot (k - k^*)^2$ if $k^* = \kr$.  
\end{ass}

{Our final assumption requires that the spend and conversion value are locally strongly monotone around the optimal solution. In other words, we require the gradients to be locally linear around point where they cross the positive axis, for example, when the budget constraint is binding, we require that $\gb(k) \approx k^* - k$ around $k^*$ (see Fig~\ref{fig:gradients}). Similarly to our Lipschtiz condition, we can guarantee this assumption holds by randomly perturbing bids. Assumption~\ref{ass:strongly-monotone} is also common in the analysis of online algorithms, where it is sometimes assumed that objective functions are strongly convex, which is equivalent to gradients being strongly monotone~\citep{hazan2016introduction}.}

{
The next theorem shows that the $\mpacing$ algorithm has an $O\left(T^{1/2}\right)$ regret bound.
\begin{thm}\label{thm:min_regret}
    Suppose Assumption \ref{ass:single-crossing}-\ref{ass:strongly-monotone} hold. Then, the regret of $\mpacing$ can be bounded as:
    \[
    \reg(\mpacing,\calp^T) = O\left(T^{1/2}\right)\,.
    \]
\end{thm}
One can show that the optimal joint algorithm also has $O\left(T^{1/2}\right)$ regret bound, which showcases that the $\mpacing$ algorithm achieves good practical performance.
}

Analyzing the min algorithm is challenging as we do not have access to a Lyapunov function as in the joint pacing case. We analyze the algorithm by carefully studying the dynamics of the dual variables under the min algorithm. Note that in light of Lemma~\ref{lemma:optimal-multipler}, if we knew in advance which constraint is binding, then we could attain low regret by bidding using the multiplier associated with the binding constraint ($k=1/\mu$ for the budget constraint and $k = (1+\lambda)/\lambda$ for the ROS constraint). Our proof technique is to show that with high probability the algorithm detects in $\sqrt{T}$ steps which constraint is binding and then bids according to the optimal bidding multiplier for the binding constraint.

To do so, we consider a continuous time approximation of the multipliers $(\bar\lambda(s)), \bar \mu(s))$ in which we update them continuously according to the expected gradients, and dynamics are governed by an ODE. The ODE traces the ``expected'' path of the multipliers when the step-size is small. Here, we assume that step-sizes are $\alpha>0$ for both constraints. The ODEs are obtained by considering the continuous approximation of multiplicative weight updates:
\begin{align}
    \frac d {ds} \log(\bar \lambda(s)) &= - \gr\left( k^{\min}(\bar \lambda(s), \bar \mu(s) \right)\,,\label{eq:ode}\\
    \frac d {ds} \log(\bar \mu(s)) &= - \gb\left( k^{\min}(\bar \lambda(s), \bar \mu(s) \right)\,,    \nonumber
\end{align}
where 
\[
    k^{\min}(\lambda,\mu) = \min\left( \frac{1+\lambda} \lambda , \frac 1 {\mu}\right)\,.
\]
The time in the ODE, which is denoted by $s \ge 0$ can be mapped to a step $t$ in the discrete-time stochastic system by setting $s = \alpha t$. In other words, the time in the ODE corresponds to the total distance traveled according to the step-size. We assume throughout that the ROS constraint binds at optimality. A similar analysis holds for the budget constraint. Our proof strategy is the following.

(1) \emph{Binding Constraint Identification.} Setting the step-size to be $\alpha \approx T^{-1/2}$, we show it takes order $\sqrt{T}$ steps to be get to an orbit of size $\epsilon$ of an dual optimal solution $(\lambda^*,0)$. The orbit is chosen so that the bidding formula in this region is $k = (1+\lambda)/\lambda$. We do so by first arguing that the ODE gets in a constant amount of time to the orbit of the optimal and then arguing that the actual algorithm remains close to the expected path traced by the ODE with probability $T^{-1/2}$ using a discrete version of Gronwall’s Lemma to bound the absolute deviations and then invoking a concentration argument to bound the maximum deviation in a stochastic sense.

(2) \emph{Orbital Stability.} Once the algorithm reaches an orbit of an optimal solution, we show that it never leaves the orbit with probability $T^{-1/2}$. {We prove this result by constructing a local stochastic Lyaponuv function using the KL divergence and then invoking a classical result from stochastic stability.}

(3) \emph{Regret Analysis.} We conclude by showing that the regret accumulated once the algorithm is in the orbit of the optimal solution is $\sqrt{T}$. {For this step, we first lower bound the conversion value collected by the algorithm in terms of the dual function $D(\lambda, \mu)$ and a complementary slackness term. Using weak duality, we can relate the first term to the optimal performance. The complementary slackness term is controlled using standard regret bounds for multiplicative weight updates.}
{ A detailed proof of Theorem \ref{thm:min_regret} is presented in Appendix \ref{app:min_proof}.}



{
Next, we present the constraint violation of the MinPacing algorithm. Recall that the bid is capped by the remaining budget; thus, the budget constraint will always be satisfied. Instead, we show that the budget always runs out $O(\sqrt{T})$ close to the horizon's end. On the other hand, for ROS constraint, we allow small violations throughout the horizon, and we can show that the violation is at most $O(\sqrt{T})$.

\begin{thm}\label{thm:constaints}
    Suppose payments are at most the bid, i.e., $p_t(b) \le b \cdot x_t(b)$ for all $b \ge 0$. Then, $\mpacing$ satisfies the following:
    
    \textbf{(ROS constraint.)} The violation of the ROS constraint is at most $O(\sqrt{T}\log T)$, i.e.,
        $\sum_{t=1}^T p_t(b_t) -v_t \cdot x_t({b}_t) \leq O(\sqrt{T}\log T).$
   
    \textbf{(Budget endurance.)} The budget always runs out close to the end of the horizon, i.e., stopping time $\tau$ satisfies $T-\tau\le O(\sqrt{T})$.
\end{thm}
The proof of Theorem \ref{thm:constaints} can be found in Appendix \ref{app:constraints}. 

Finally, we show that the sequential algorithm fails to work -- it may have $\Omega(T)$ regret and/or $\Omega(T)$ constraint violation, thus it is not a sub-optimal algorithm. The proof of Proposition \ref{prop:sequencial} is presented in Appendix \ref{app:sequential_fails}.

\begin{proposition}\label{prop:sequencial}
    For any initial dual variables $\mu_0,\lambda_0$ and step-sizes $\eta,\alpha$, there is an instance on which the algorithm either violates the ROS constraint by at least $\Omega(T)$ or has a regret at least $\Omega(T)$.
\end{proposition}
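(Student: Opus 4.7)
The plan is, for each choice of $(\mu_0,\lambda_0,\eta,\alpha)$, to exhibit an auction instance on which the sequential algorithm provably fails. The central observation is that the sequential bid multiplier
\[
    k^{\mathrm{seq}}_t \;=\; \frac{1+\lambda_t}{\lambda_t\, \mu_t}
\]
depends multiplicatively on \emph{both} dual variables, so for the joint dual dynamics to admit a fixed point one would need $\gr(k^{\mathrm{seq}}) = 0$ \emph{and} $\gb(k^{\mathrm{seq}}) = 0$ to hold simultaneously---that is, $k^{\mathrm{seq}} = k^r = k^b$, which is ruled out by the non-degeneracy Assumption~\ref{ass:nondegenerate}. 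Hence the iterate $(\log\lambda_t, \log\mu_t)$ must drift indefinitely, and I would exploit this drift to force a $\Omega(T)$ failure along one of the two evaluation dimensions.

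Concretely I plan to construct two complementary instances: (i) a \emph{budget-binding} one with strictly monotone, Lipschitz gradients and $k^b < k^r$, and (ii) a \emph{ROS-binding} one with $k^r < k^b$ in which the per-round budget $\rho$ is large enough that $\gb(k) > 0$ on the entire relevant range of multipliers. For the first instance, at any multiplier in $[k^b,k^r)$ the ROS gap $\gr$ is bounded below by a positive constant, so the $\lambda$-update drives $\log \lambda_t$ downward at a rate bounded below; keeping $k^{\mathrm{seq}}_t$ close to $k^b$ would require $\mu_t$ to grow like $(1+\lambda_t)/(\lambda_t k^b)$, but the $\mu$-update can only increase $\mu$ in response to realized overspend. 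A careful accounting then shows that whatever the relative growth rates of the two duals, $k^{\mathrm{seq}}_t$ remains bounded strictly above $k^b$ after an $O(\sqrt T)$ transient, so expected per-round spend exceeds $\rho$ by $\Omega(1)$; a martingale concentration argument concludes that the stopping time $\tau$ satisfies $T - \tau = \Omega(T)$ with high probability, hence regret $\Omega(T)$. On the second instance a symmetric argument applies: since $\gb>0$ throughout, $\log \mu_t$ drifts down at a bounded rate, so $k^{\mathrm{seq}}_t$ is pushed strictly above $k^r$ and the $\lambda$-update cannot grow $(1+\lambda_t)/\lambda_t$ (which is bounded by $2$) fast enough to compensate; thus $-\gr(k^{\mathrm{seq}}_t) \ge c > 0$ for a linear number of rounds, giving $\Omega(T)$ total ROS violation. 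For any given $(\mu_0,\lambda_0,\eta,\alpha)$, choosing whichever of the two instances matches the binding structure already suffices to prove the proposition.

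The main technical obstacle is ruling out cyclic behavior of the two-dimensional log-dual dynamics: since no fixed point exists, one has to exclude the possibility that periodic oscillations average out the per-round constraint error to $o(T)$. I would handle this by choosing each instance so that the deterministic ODE's vector field has a strict sign in one coordinate throughout the region of interest---for example, $\frac{d}{ds}\log\mu > 0$ everywhere in the budget-binding instance and $\frac{d}{ds}\log\mu < 0$ everywhere in the ROS-binding one---which precludes cycles and forces monotone drift into the bad regime. A Gronwall-plus-Azuma argument, mirroring the technique used for $\mpacing$ in Appendix~\ref{app:min_proof}, would then transfer the ODE-level conclusion to the stochastic algorithm with only $O(\sqrt T)$ fluctuations, preserving the $\Omega(T)$ lower bound on either regret or ROS violation.
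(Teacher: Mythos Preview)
Your high-level intuition—that the sequential dynamics admit no joint fixed point under non-degeneracy and must therefore drift into a bad regime—is sound, and a two-instance strategy is indeed the right shape. But the plan has a structural gap: you never say how the choice between the two instances depends on $(\mu_0,\lambda_0,\eta,\alpha)$, and two \emph{fixed} instances cannot suffice. Your ROS-binding argument needs $\mu_t$ to drift toward zero fast enough that $1/\mu_t$ eventually dominates; when $\eta$ is of order $1/T$ this simply fails and $k^{\mathrm{seq}}_t$ can remain below $k^r$ throughout. Conversely, your budget-binding argument needs $\mu_t$ to be unable to keep up with the growth of $(1+\lambda_t)/\lambda_t$; for moderate $\eta$ this need not hold. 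The paper resolves this by making the dependence explicit: it first runs the algorithm on a fixed ROS-binding instance and shows, by a direct case split on total spend, that avoiding $\Omega(T)$ ROS violation there forces $\mu_0\exp(-0.73\,\eta T)>1/3$, hence $\eta=O((\log\mu_0)/T)$. Only in that regime does it switch to a budget-binding instance whose per-round target $\tilde\rho = 1/(200\mu_0^4)$ is chosen as a function of $\mu_0$, guaranteeing $\mu_t<3\mu_0^2$ throughout and hence budget depletion after at most $0.36\,T$ rounds. Both instances are \emph{deterministic} point-masses, so no ODE approximation or concentration is needed; the argument is short and purely algebraic.

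Two local errors are also worth flagging. The claim that $(1+\lambda_t)/\lambda_t$ is ``bounded by $2$'' is false—it is bounded \emph{below} by $1$ and diverges as $\lambda_t\to 0$. The correct lower bound on the sequential multiplier is $k^{\mathrm{seq}}_t\ge 1/\mu_t$, but exploiting this again requires the case split on $\eta$. And the device you propose for ruling out cycles—engineering a strict sign of $\tfrac{d}{ds}\log\mu$ throughout the region of interest—does not work for the budget-binding instance as you describe it, since $\gb$ changes sign precisely at $k^b$. The paper sidesteps any trajectory-level analysis by using concavity of the value-versus-spend curve on its explicit instance to bound total value directly in terms of total spend.
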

}

\section{Empirical Study}\label{sec:empirical}




\begin{table*}[h]
\caption{Cumulative fraction of campaigns and total conversion (normalized by total benchmark) over the ROS relative error buckets.} 
\centering
 \tiny
\begin{tabular}{cccccccccccccc}
\toprule
 & & \multicolumn{12}{c}{Relative Constraint Violation}\\
 & Alg &  $(\le)0$ & $0.05$ & $0.10$ & $0.15$ & $0.20$ & $0.25$ & $0.30$ & $0.35$ & $0.40$ & $0.45$ & $0.50$ & $\infty$ \\
\midrule 


Frac. & Dual-optimal & $0.62$ & $0.71$ & $0.75$ & $0.78$ & $0.80$ & $0.82$ & $0.83$ & $0.84$ & $0.85$ & $0.86$ & $0.87$ & $1.00$  \\
of & Min.
& $0.49$ & $0.64$ & $0.72$ & $0.77$ & $0.80$ & $0.81$ & $0.83$ & $0.84$ & $0.85$ & $0.86$ & $0.87$ & $1.00$  \\
Campaigns & Seq.
& $0.11$ & $0.15$ & $0.18$ & $0.22$ & $0.26$ & $0.29$ & $0.32$ & $0.35$ & $0.38$ & $0.40$ & $0.43$ & $1.00$ \\
\midrule
Cum. & Dual-optimal
& $0.62$ & $0.75$ & $0.77$ & $0.81$ & $0.83$ & $0.84$ & $0.84$ & $0.85$ & $0.85$ & $0.85$ & $0.86$ & $0.88$ \\
Total & Min.
& $0.42$ & $0.73$ & $0.82$ & $0.86$ & $0.88$ & $0.89$ & $0.89$ & $0.90$ & $0.91$ & $0.91$ & $0.91$ & $0.94$ \\
Value & Seq. 
& $0.19$ & $0.23$ & $0.27$ & $0.30$ & $0.33$ & $0.36$ & $0.38$ & $0.40$ & $0.42$ & $0.44$ & $0.46$ &  $1.44$ \\

\bottomrule
\end{tabular}
\vspace{-0.25cm}
\label{table:result}
\end{table*}
We empirically evaluate the three algorithms discussed in Section~\ref{sec:algorithms}. For confidentiality and advertiser privacy reasons, we evaluate their performance on a semi-synthetic dataset based on actual online advertising auctions. In particular, we focus on advertising campaigns from an online advertising platform that use a bidding product which is captured by our optimization formulation~\eqref{eq:obj_x}. More specifically, an advertiser bids (and therefore also pays) for clicks, i.e., submits bids for cost-per-click,  and the objective is to maximize expected acquisitions (e.g. site visits, calls, conversions) with constraints on total spend being below an input budget and average cost per acquisition below an input target cost ($tcpa$). In our formulation~\eqref{eq:obj_x}, this corresponds to:
$(i)$ The value $v_t$ is equal to $tcpa\cdot pconv_t$, where $pconv_t$ is the probability of a conversion conditioned on a click (note both $tcpa$ and $pconv_t$ are taken to be independent of the bid; while it is obvious for $tcpa$ to be independent of the bid, $pconv$'s independence is supported by empirical studies~\cite{Varian09}); $(ii)$ The allocation $x_t(b_t)$ is the number of clicks won by the advertiser at a bid of $b_t$; $(iii)$  The payment $p_t(b_t)$ is the cost of the clicks won at a bid of $b_t$. 


\subsection{Semi-synthetic Dataset}
Since we study the stochastic setting where the functions $x_t(\cdot),p_t(\cdot)$ are drawn i.i.d. from some distribution, our dataset consists of a set of generative models. The parameters of the generative model for any given (actual) advertising campaign we study are derived from the performance of that campaign in the (actual) auction. 
We discuss the generative model itself and how we pick the parameters for each campaign in more detail in Appendix~\ref{app:empirical-dataset}. 

\subsection{Evaluation Setup}
Our dataset includes $10,000$ randomly selected campaigns, and for each campaign, we set the budget constraint (i.e. $\rho T$ in~\eqref{eq:obj_x}) using its actual daily budget $B$.
We divide the day into $10$-minute periods and use $T=144$. For each campaign, we simulate an algorithm $10$ times to take the average total $\spend$ and total $\conv$ or conversion value as the result of the algorithm on that campaign. We include more details on how we simulate the algorithms as well as visualizations in Appendix~\ref{app:empirical-eval}. For each algorithm, we take the $10,000$ $(\spend,\conv)$ pairs from all the campaigns, and arrange them into buckets based on the relative ROS constraint error\footnote{ROS constraint states that $\spend \leq \conv$. So a constraint violation would imply $\spend > \conv$, i.e., $\spend/\conv-1 > 0$.} $\max\left(0,\spend/\conv - 1\right)$. We look at the cumulative total value achieved by the algorithm through the ROS violation buckets. That is, for the bucket of at most $z\%$ relative error in the ROS constraint, we get the total value over all campaigns such that the algorithm has a relative ROS violation of at most $z\%$. The cumulative total value over ROS error buckets gives us the picture of how an algorithm performs with respect to both the optimization objective and the constraints. 

\paragraph{Benchmark. } For each campaign, our benchmark is the fluid relaxation of~\eqref{eq:obj_x}, but restricted to uniform bidding, i.e., $b_t = k\cdot v_t$ for all $t$. Formally, the benchmark is given by
\begin{equation}
\label{eq:obj_relaxation}
\begin{array}{ll}
\underset{k \ge 0}{\mbox{maximize}} & \sum_{t=1}^T \ex[v_{t}\cdot x_t(k
\cdot v_t)]\\
\mbox{subject to } &  \sum_{t=1}^T \ex[p_t(k\cdot v_t)] \leq  \sum_{t=1}^T \ex[v_{t}x_t(k\cdot v_t)],
\\
& \sum_{t=1}^T \ex[p_t(k \cdot v_t)] \leq \rho T\,.
\end{array}
\end{equation}
We defer the details of how to calculate the benchmark value for campaigns in our semi-synthetic dataset to Appendix~\ref{app:empirical-benchmark}.

\subsection{Results}
We show the performance evaluations of the three algorithms in Table~\ref{table:result}, where each column is associated with a particular error bound, and we show the cumulative fraction of campaigns (top) and cumulative total value of campaigns (bottom) with relative ROS error up to the bound in each column. We normalize the quantities in the table: for value we normalize by our benchmark, i.e. sum of expected opt for all campaigns, and for number of campaigns we normalized by the total number $10000$. We look at the results both in terms of how well the algorithms respect the ROS constraint, and also the optimization objective of value maximization.

\paragraph{ROS constraint.} Both the dual-optimal and min pacing algorithms perform well at keeping the relative ROS error reasonably small, e.g. both have a reasonably large $80\%$ of campaigns finish with at most $20\%$ relative ROS error. 
The sequential pacing algorithm performs poorly in obeying the ROS constraint: only around $11\%$ of campaigns satisfy the ROS constraint, and in Figure~\ref{fig:result}$(a)$ we see a considerable fraction $>20\%$ of campaigns spend more than twice the conversion value.

\paragraph{Value maximization.} Both the dual-optimal and min pacing algorithms also do well at achieving good value. Recall our benchmark on each campaign should be fairly close to the expected optimal value of the fluid relaxation where both the ROS constraint and budget constraint are satisfied on expectation, so it is roughly an upper bound on the expected offline or hindsight optimal, and will be especially meaningful when an algorithm also obey the constraints relatively well. The dual-optimal pacing and min algorithms both achieve very large fraction of the benchmark with fairly small ROS error, e.g., for dual-optimal pacing the campaigns with $\leq 15\%$ relative ROS error in total get $81\%$ of the total benchmark conversion values over all campaigns, and for min pacing it is $86\%$ of the total benchmark. The sequential pacing algorithm gets much smaller total value compared to the dual-optimal and min pacing algorithms over campaigns finishing with small ROS error. 

\paragraph{Stability and convergence} We observe that the trajectory of bidding multipliers generated by the dual-optimal and min pacing algorithms converge to the optimal solution of the benchmark~\eqref{eq:obj_relaxation}. Figure~\ref{fig:example-execution} shows a representative campaign for which the ROS constraint is binding in the benchmark (but the budget constraint is not). After a small learning phase, the dual-optimal pacing algorithm converges to the optimal multiplier of around $k^* \approx 1$. The return-on-spend constraint is mostly obeyed and the total spend is smaller than the budget. 

For the sequential pacing algorithm, however, we do not observe the convergence of bid multipliers. In Figure~\ref{fig:example-execution}, it can be seen that the bid multipliers generated by the sequential pacing algorithm for the same campaign are highly unstable. Moreover, the ROS constraint is violated by a significant amount and the budget is exactly depleted by the end of the horizon.  Interestingly, the behavior of the sequential pacing algorithm is driven by conflicting feedback loops between the budget and ROS pacing services. Recall that, at optimality, only the ROS constraint should bind. Initially, as the ROS pacing service detects a violation of the ROS constraint, it starts increasing its dual variable $\lambda_t$ to satisfy the constraint. This results in a smaller bid multiplier $k_t$ and reduced spend. The budget pacing service, however, believing that the budget constraint is not binding reacts to the lower spend by decreasing its dual variable $\mu_t$, which in turn, results in a higher multiplier. These two opposing feedback loops generate unstable dynamics and one constraint ends up being violated. Similar behaviors are observed across campaigns even when the budget constraint is binding.

\begin{figure}
\centering
\begin{subfigure}{0.49\textwidth}
\centering
\includegraphics[scale=0.3]{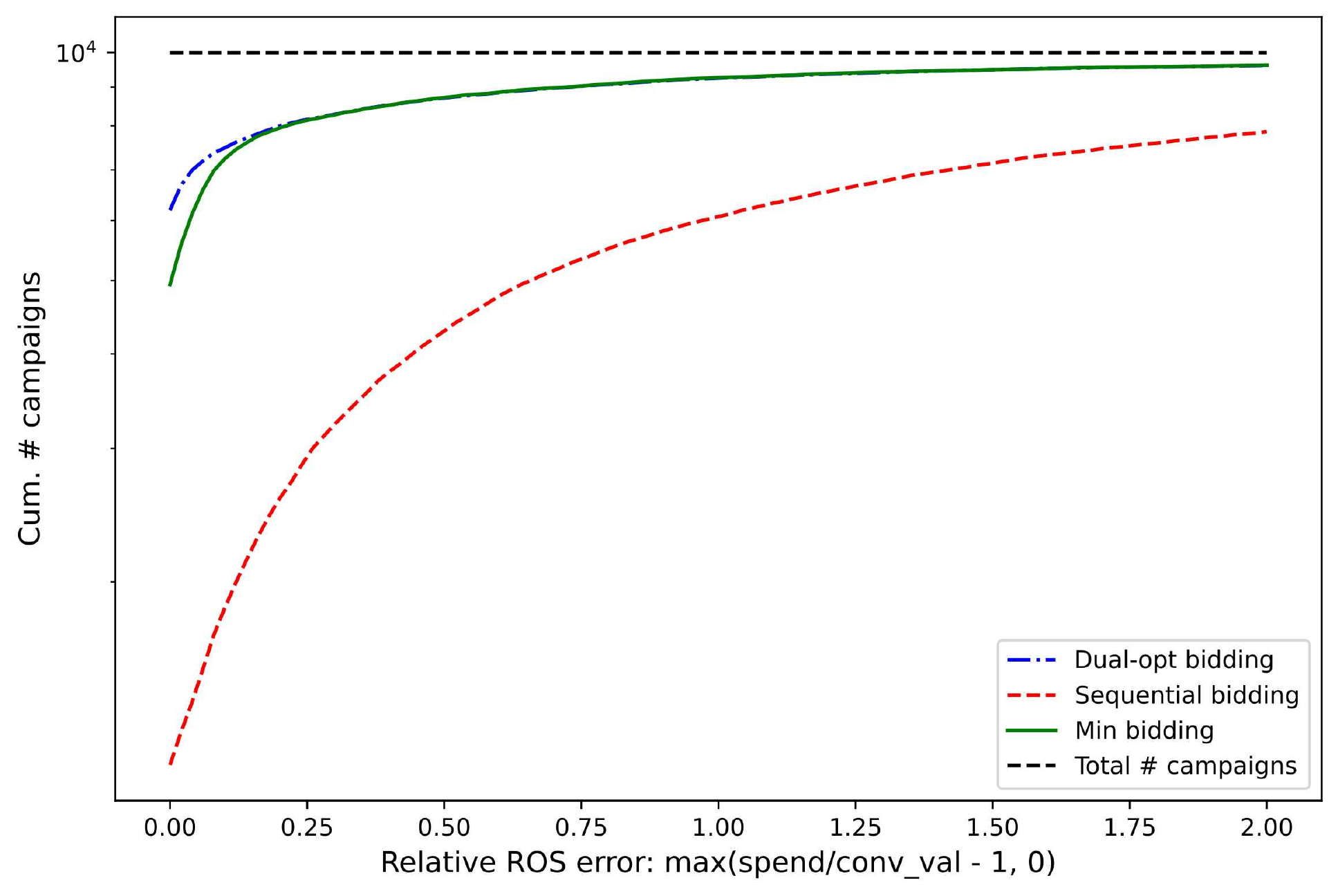}		
\caption{Cumulative number of campaigns}
\label{fig:num_of_campaigns}
\end{subfigure}
\hfill
\begin{subfigure}{0.49\textwidth}
\centering
\includegraphics[scale=0.3]{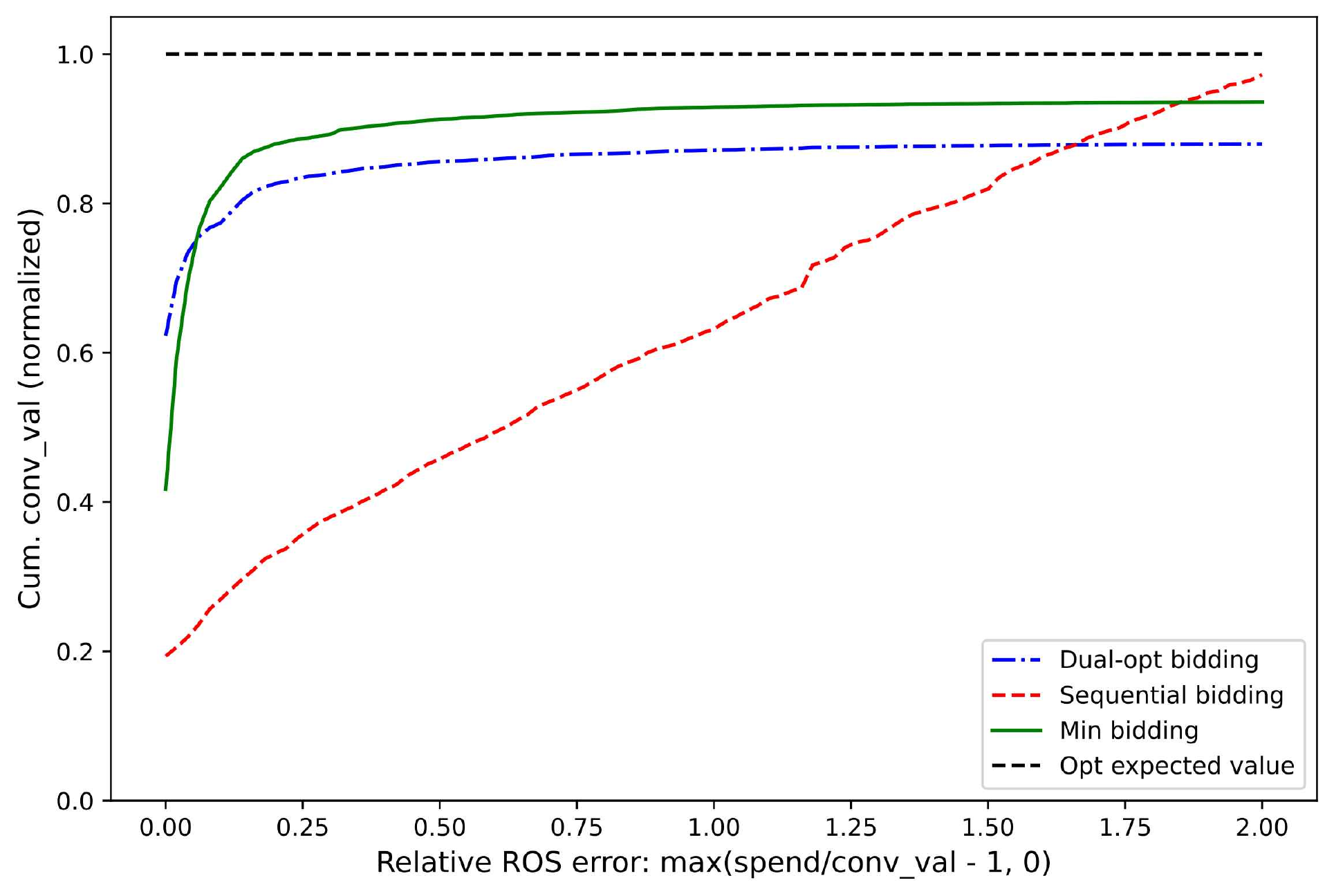}
\caption{Cumulative total \conv}
\label{fig:cumulative_value}
\end{subfigure}
\caption{The cumulative number of campaigns and total \conv{} for each algorithm over the ROS relative error buckets.} 
\label{fig:result}
\end{figure}

\newpage
\bibliographystyle{alpha}
\bibliography{Arxiv_v2/main}

\newcommand{\etalchar}[1]{$^{#1}$}
\begin{thebibliography}{BDM{\etalchar{+}}21b}

\bibitem[ABM19]{ABM19}
Gagan Aggarwal, Ashwinkumar Badanidiyuru, and Aranyak Mehta.
\newblock Autobidding with constraints.
\newblock In {\em Web and Internet Economics - 15th International Conference,
  {WINE} 2019, New York, NY, USA, December 10-12, 2019, Proceedings}, volume
  11920 of {\em Lecture Notes in Computer Science}, pages 17--30. Springer,
  2019.

\bibitem[AD15]{AD15}
Shipra Agrawal and Nikhil~R. Devanur.
\newblock Fast algorithms for online stochastic convex programming.
\newblock In Piotr Indyk, editor, {\em Proceedings of the Twenty-Sixth Annual
  {ACM-SIAM} Symposium on Discrete Algorithms, {SODA} 2015, San Diego, CA, USA,
  January 4-6, 2015}, pages 1405--1424. {SIAM}, 2015.

\bibitem[AWY14]{AWY14}
Shipra Agrawal, Zizhuo Wang, and Yinyu Ye.
\newblock A dynamic near-optimal algorithm for online linear programming.
\newblock {\em Oper. Res.}, 62(4):876--890, 2014.

\bibitem[AZO14]{allen2014using}
Zeyuan Allen-Zhu and Lorenzo Orecchia.
\newblock Using optimization to break the epsilon barrier: A faster and simpler
  width-independent algorithm for solving positive linear programs in parallel.
\newblock In {\em Proceedings of the twenty-sixth annual ACM-SIAM symposium on
  Discrete algorithms}, pages 1439--1456. SIAM, 2014.

\bibitem[B{\etalchar{+}}15]{bubeck2015convex}
S{\'e}bastien Bubeck et~al.
\newblock Convex optimization: Algorithms and complexity.
\newblock {\em Foundations and Trends{\textregistered} in Machine Learning},
  8(3-4):231--357, 2015.

\bibitem[BBW15]{BBW15}
Santiago~R. Balseiro, Omar Besbes, and Gabriel~Y. Weintraub.
\newblock Repeated auctions with budgets in ad exchanges: Approximations and
  design.
\newblock {\em Manag. Sci.}, 61(4):864--884, 2015.

\bibitem[BCI{\etalchar{+}}07]{borgs2007dynamics}
Christian Borgs, Jennifer Chayes, Nicole Immorlica, Kamal Jain, Omid Etesami,
  and Mohammad Mahdian.
\newblock Dynamics of bid optimization in online advertisement auctions.
\newblock In {\em Proceedings of the 16th international conference on World
  Wide Web}, pages 531--540, 2007.

\bibitem[BDM{\etalchar{+}}21a]{BDMMZ21b}
Santiago~R. Balseiro, Yuan Deng, Jieming Mao, Vahab~S. Mirrokni, and Song Zuo.
\newblock The landscape of auto-bidding auctions: Value versus utility
  maximization.
\newblock In P{\'{e}}ter Bir{\'{o}}, Shuchi Chawla, and Federico Echenique,
  editors, {\em {EC} '21: The 22nd {ACM} Conference on Economics and
  Computation, Budapest, Hungary, July 18-23, 2021}, pages 132--133. {ACM},
  2021.

\bibitem[BDM{\etalchar{+}}21b]{BDMMZ21a}
Santiago~R. Balseiro, Yuan Deng, Jieming Mao, Vahab~S. Mirrokni, and Song Zuo.
\newblock Robust auction design in the auto-bidding world.
\newblock In Marc'Aurelio Ranzato, Alina Beygelzimer, Yann~N. Dauphin, Percy
  Liang, and Jennifer~Wortman Vaughan, editors, {\em Advances in Neural
  Information Processing Systems 34: Annual Conference on Neural Information
  Processing Systems 2021, NeurIPS 2021, December 6-14, 2021, virtual}, pages
  17777--17788, 2021.

\bibitem[BDM{\etalchar{+}}22]{BDMMZ22}
Santiago~R. Balseiro, Yuan Deng, Jieming Mao, Vahab~S. Mirrokni, and Song Zuo.
\newblock Optimal mechanisms for value maximizers with budget constraints via
  target clipping.
\newblock In David~M. Pennock, Ilya Segal, and Sven Seuken, editors, {\em {EC}
  '22: The 23rd {ACM} Conference on Economics and Computation, Boulder, CO,
  USA, July 11 - 15, 2022}, page 475. {ACM}, 2022.

\bibitem[BG19]{BG19}
Santiago~R. Balseiro and Yonatan Gur.
\newblock Learning in repeated auctions with budgets: Regret minimization and
  equilibrium.
\newblock {\em Manag. Sci.}, 65(9):3952--3968, 2019.

\bibitem[BGM{\etalchar{+}}19]{Balseiro19}
Santiago Balseiro, Negin Golrezaei, Mohammad Mahdian, Vahab Mirrokni, and Jon
  Schneider.
\newblock Contextual bandits with cross-learning.
\newblock In {\em Advances in Neural Information Processing Systems 32}, pages
  9679--9688. 2019.

\bibitem[BKMM17]{BKMM17}
Santiago~R. Balseiro, Anthony Kim, Mohammad Mahdian, and Vahab~S. Mirrokni.
\newblock Budget management strategies in repeated auctions.
\newblock In Rick Barrett, Rick Cummings, Eugene Agichtein, and Evgeniy
  Gabrilovich, editors, {\em Proceedings of the 26th International Conference
  on World Wide Web, {WWW} 2017, Perth, Australia, April 3-7, 2017}, pages
  15--23. {ACM}, 2017.

\bibitem[BLM22]{BLM22}
Santiago~R. Balseiro, Haihao Lu, and Vahab Mirrokni.
\newblock The best of many worlds: Dual mirror descent for online allocation
  problems.
\newblock {\em Operations Research}, 2022.

\bibitem[BLMS22]{BLMS22}
Santiago~R Balseiro, Haihao Lu, Vahab Mirrokni, and Balasubramanian Sivan.
\newblock On dual-based pi controllers for online allocation problems.
\newblock {\em arXiv preprint arXiv:2202.06152}, 2022.

\bibitem[BM22]{BM22}
Santiago Balseiro and Vahab Mirrokni.
\newblock Robust online allocation with dual mirror descent.
\newblock
  \url{https://ai.googleblog.com/2022/09/robust-online-allocation-with-dual.html},
  2022.

\bibitem[CKK21]{CKK21}
Xi~Chen, Christian Kroer, and Rachitesh Kumar.
\newblock The complexity of pacing for second-price auctions.
\newblock In {\em Proceedings of the 22nd ACM Conference on Economics and
  Computation}, EC '21, page 318, New York, NY, USA, 2021. Association for
  Computing Machinery.

\bibitem[CKSSM22]{CKSS22}
Vincent Conitzer, Christian Kroer, Eric Sodomka, and Nicolas~E. Stier-Moses.
\newblock Multiplicative pacing equilibria in auction markets.
\newblock {\em Oper. Res.}, 70(2):963–989, mar 2022.

\bibitem[DH09]{DH09}
Nikhil~R. Devanur and Thomas~P. Hayes.
\newblock The adwords problem: online keyword matching with budgeted bidders
  under random permutations.
\newblock In John Chuang, Lance Fortnow, and Pearl Pu, editors, {\em
  Proceedings 10th {ACM} Conference on Electronic Commerce (EC-2009), Stanford,
  California, USA, July 6--10, 2009}, pages 71--78. {ACM}, 2009.

\bibitem[DJSW19]{DJSW19}
Nikhil~R. Devanur, Kamal Jain, Balasubramanian Sivan, and Christopher~A.
  Wilkens.
\newblock Near optimal online algorithms and fast approximation algorithms for
  resource allocation problems.
\newblock {\em J. {ACM}}, 66(1):7:1--7:41, 2019.

\bibitem[DMMZ21]{DMMZ21}
Yuan Deng, Jieming Mao, Vahab~S. Mirrokni, and Song Zuo.
\newblock Towards efficient auctions in an auto-bidding world.
\newblock In Jure Leskovec, Marko Grobelnik, Marc Najork, Jie Tang, and Leila
  Zia, editors, {\em {WWW} '21: The Web Conference 2021, Virtual Event /
  Ljubljana, Slovenia, April 19-23, 2021}, pages 3965--3973. {ACM} / {IW3C2},
  2021.

\bibitem[Fac]{FacebookAutoBidding}
Bid strategies: Meta business help center.

\bibitem[Fac23]{FacebookBudgetPacing}
About pacing: Meta business help center, Accessed 10/01/2023.

\bibitem[FHK{\etalchar{+}}10]{FHKMS10}
Jon Feldman, Monika Henzinger, Nitish Korula, Vahab~S. Mirrokni, and Clifford
  Stein.
\newblock Online stochastic packing applied to display ad allocation.
\newblock In Mark de~Berg and Ulrich Meyer, editors, {\em Algorithms - {ESA}
  2010, 18th Annual European Symposium, Liverpool, UK, September 6-8, 2010.
  Proceedings, Part {I}}, volume 6346 of {\em Lecture Notes in Computer
  Science}, pages 182--194. Springer, 2010.

\bibitem[FMPS07]{FMPS07}
Jon Feldman, S.~Muthukrishnan, Martin P{\'{a}}l, and Clifford Stein.
\newblock Budget optimization in search-based advertising auctions.
\newblock In Jeffrey~K. MacKie{-}Mason, David~C. Parkes, and Paul Resnick,
  editors, {\em Proceedings 8th {ACM} Conference on Electronic Commerce
  (EC-2007), San Diego, California, USA, June 11-15, 2007}, pages 40--49.
  {ACM}, 2007.

\bibitem[FPS18]{Feng18}
Zhe Feng, Chara Podimata, and Vasilis Syrgkanis.
\newblock Learning to bid without knowing your value.
\newblock In {\em Proceedings of the 2018 ACM Conference on Economics and
  Computation}, page 505–522, 2018.

\bibitem[FPW22]{FPW22}
Zhe Feng, Swati Padmanabhan, and Di~Wang.
\newblock Online bidding algorithms for return-on-spend constrained
  advertisers, 2022.

\bibitem[FT22]{FT22}
Giannis Fikioris and {\'{E}}va Tardos.
\newblock Liquid welfare guarantees for no-regret learning in sequential
  budgeted auctions.
\newblock {\em CoRR}, abs/2210.07502, 2022.

\bibitem[GLL{\etalchar{+}}22]{GLLLS22}
Jason Gaitonde, Yingkai Li, Bar Light, Brendan Lucier, and Aleksandrs Slivkins.
\newblock Budget pacing in repeated auctions: Regret and efficiency without
  convergence, 2022.

\bibitem[GM14]{GM14}
Anupam Gupta and Marco Molinaro.
\newblock How experts can solve lps online.
\newblock In Andreas~S. Schulz and Dorothea Wagner, editors, {\em Algorithms -
  {ESA} 2014 - 22th Annual European Symposium, Wroclaw, Poland, September 8-10,
  2014. Proceedings}, volume 8737 of {\em Lecture Notes in Computer Science},
  pages 517--529. Springer, 2014.

\bibitem[Goo]{GoogleAutoBidding}
Automated bidding strategies: Google product support.

\bibitem[Goo23]{GoogleBudgetPacing}
Google budget management, Accessed 10/01/2023.

\bibitem[H{\etalchar{+}}16]{hazan2016introduction}
Elad Hazan et~al.
\newblock Introduction to online convex optimization.
\newblock {\em Foundations and Trends{\textregistered} in Optimization},
  2(3-4):157--325, 2016.

\bibitem[HPJ{\etalchar{+}}14]{he2014practical}
Xinran He, Junfeng Pan, Ou~Jin, Tianbing Xu, Bo~Liu, Tao Xu, Yanxin Shi,
  Antoine Atallah, Ralf Herbrich, Stuart Bowers, et~al.
\newblock Practical lessons from predicting clicks on ads at facebook.
\newblock In {\em Proceedings of the Eighth International Workshop on Data
  Mining for Online Advertising}, pages 1--9, 2014.

\bibitem[HZF{\etalchar{+}}20]{HZFOW20}
Yanjun Han, Zhengyuan Zhou, Aaron Flores, Erik Ordentlich, and Tsachy Weissman.
\newblock Learning to bid optimally and efficiently in adversarial first-price
  auctions.
\newblock {\em CoRR}, abs/2007.04568, 2020.

\bibitem[JLZ20]{JLZ20}
Jiashuo Jiang, Xiaocheng Li, and Jiawei Zhang.
\newblock Online stochastic optimization with wasserstein based
  non-stationarity.
\newblock {\em CoRR}, abs/2012.06961, 2020.

\bibitem[JZCL16]{juan2016field}
Yuchin Juan, Yong Zhuang, Wei-Sheng Chin, and Chih-Jen Lin.
\newblock Field-aware factorization machines for ctr prediction.
\newblock In {\em Proceedings of the 10th ACM conference on recommender
  systems}, pages 43--50, 2016.

\bibitem[KMS22]{KMS22}
Bhuvesh Kumar, Jamie Morgenstern, and Okke Schrijvers.
\newblock Optimal spend rate estimation and pacing for ad campaigns with
  budgets.
\newblock {\em CoRR}, abs/2202.05881, 2022.

\bibitem[KTRV14]{KTRV14}
Thomas Kesselheim, Andreas T\"{o}nnis, Klaus Radke, and Berthold V\"{o}cking.
\newblock Primal beats dual on online packing lps in the random-order model.
\newblock In {\em Proceedings of the Forty-Sixth Annual ACM Symposium on Theory
  of Computing}, STOC '14, page 303–312, New York, NY, USA, 2014. Association
  for Computing Machinery.

\bibitem[Kus67]{kushner1967stochastic}
Harold~Joseph Kushner.
\newblock {\em Stochastic stability and control}, volume~33.
\newblock Academic press New York, 1967.

\bibitem[LMP22]{LMP22}
Christopher Liaw, Aranyak Mehta, and Andr{\'{e}}s Perlroth.
\newblock Efficiency of non-truthful auctions under auto-bidding.
\newblock {\em CoRR}, abs/2207.03630, 2022.

\bibitem[LPSZ23]{LPSZ23}
Brendan Lucier, Sarath Pattathil, Aleksandrs Slivkins, and Mengxiao Zhang.
\newblock Autobidders with budget and {ROI} constraints: Efficiency, regret,
  and pacing dynamics.
\newblock {\em CoRR}, abs/2301.13306, 2023.

\bibitem[LPW{\etalchar{+}}17]{lu2017practical}
Quan Lu, Shengjun Pan, Liang Wang, Junwei Pan, Fengdan Wan, and Hongxia Yang.
\newblock A practical framework of conversion rate prediction for online
  display advertising.
\newblock In {\em Proceedings of the ADKDD'17}, pages 1--9. 2017.

\bibitem[LSY20]{li2020simple}
Xiaocheng Li, Chunlin Sun, and Yinyu Ye.
\newblock Simple and fast algorithm for binary integer and online linear
  programming.
\newblock {\em Advances in Neural Information Processing Systems},
  33:9412--9421, 2020.

\bibitem[LYS{\etalchar{+}}20]{li2020incentive}
Bin Li, Xiao Yang, Daren Sun, Zhi Ji, Zhen Jiang, Cong Han, and Dong Hao.
\newblock Incentive mechanism design for roi-constrained auto-bidding.
\newblock {\em arXiv preprint arXiv:2012.02652}, 2020.

\bibitem[Meh22]{Mehta22}
Aranyak Mehta.
\newblock Auction design in an auto-bidding setting: Randomization improves
  efficiency beyond {VCG}.
\newblock In Fr{\'{e}}d{\'{e}}rique Laforest, Rapha{\"{e}}l Troncy, Elena
  Simperl, Deepak Agarwal, Aristides Gionis, Ivan Herman, and Lionel
  M{\'{e}}dini, editors, {\em {WWW} '22: The {ACM} Web Conference 2022, Virtual
  Event, Lyon, France, April 25 - 29, 2022}, pages 173--181. {ACM}, 2022.

\bibitem[MHS{\etalchar{+}}13]{mcmahan2013ad}
H~Brendan McMahan, Gary Holt, David Sculley, Michael Young, Dietmar Ebner,
  Julian Grady, Lan Nie, Todd Phillips, Eugene Davydov, Daniel Golovin, et~al.
\newblock Ad click prediction: a view from the trenches.
\newblock In {\em Proceedings of the 19th ACM SIGKDD international conference
  on Knowledge discovery and data mining}, pages 1222--1230, 2013.

\bibitem[Mye81]{Myerson81}
Roger~B. Myerson.
\newblock Optimal auction design.
\newblock {\em Mathematics of Operations Research}, 6(1):58--73, 1981.

\bibitem[SLL16]{smirnov2016online}
Yury Smirnov, Quan Lu, and Kuang-chih Lee.
\newblock Online ad campaign tuning with pid controllers, April~21 2016.
\newblock US Patent App. 14/518,601.

\bibitem[Twi23]{TwitterBudgetPacing}
How we built twitter’s highly reliable ads pacing service, Accessed
  10/01/2023.

\bibitem[TXH{\etalchar{+}}20]{tashman2020dynamic}
Michael Tashman, Jiayi Xie, John Hoffman, Lee Winikor, and Rouzbeh Gerami.
\newblock Dynamic bidding strategies with multivariate feedback control for
  multiple goals in display advertising.
\newblock {\em arXiv preprint arXiv:2007.00426}, 2020.

\bibitem[Var09]{Varian09}
Hal Varian.
\newblock Conversion rates don't vary much with ad position.
\newblock
  \url{https://adwords.googleblog.com/2009/08/conversion-rates-dont-vary-much-with-ad.html},
  2009.

\bibitem[WPR16]{weed2016online}
Jonathan Weed, Vianney Perchet, and Philippe Rigollet.
\newblock Online learning in repeated auctions.
\newblock In {\em Conference on Learning Theory}, pages 1562--1583. PMLR, 2016.

\bibitem[YLW{\etalchar{+}}19]{yang2019bid}
Xun Yang, Yasong Li, Hao Wang, Di~Wu, Qing Tan, Jian Xu, and Kun Gai.
\newblock Bid optimization by multivariable control in display advertising.
\newblock In {\em Proceedings of the 25th ACM SIGKDD International Conference
  on Knowledge Discovery \& Data Mining}, pages 1966--1974, 2019.

\bibitem[YZZ{\etalchar{+}}20]{ye2020cold}
Zikun Ye, Dennis Zhang, Heng Zhang, Renyu~Philip Zhang, Xin Chen, and Zhiwei
  Xu.
\newblock Cold start to improve market thickness on online advertising
  platforms: Data-driven algorithms and field experiments.
\newblock {\em Available at SSRN 3702786}, 2020.

\bibitem[ZCL08]{ZCL08}
Yunhong Zhou, Deeparnab Chakrabarty, and Rajan Lukose.
\newblock Budget constrained bidding in keyword auctions and online knapsack
  problems.
\newblock In {\em Proceedings of the 17th International Conference on World
  Wide Web}, WWW '08, page 1243–1244, New York, NY, USA, 2008. Association
  for Computing Machinery.

\bibitem[ZRW{\etalchar{+}}16]{zhang2016feedback}
Weinan Zhang, Yifei Rong, Jun Wang, Tianchi Zhu, and Xiaofan Wang.
\newblock Feedback control of real-time display advertising.
\newblock In {\em Proceedings of the Ninth ACM International Conference on Web
  Search and Data Mining}, pages 407--416, 2016.

\bibitem[ZZS{\etalchar{+}}18]{zhou2018deep}
Guorui Zhou, Xiaoqiang Zhu, Chenru Song, Ying Fan, Han Zhu, Xiao Ma, Yanghui
  Yan, Junqi Jin, Han Li, and Kun Gai.
\newblock Deep interest network for click-through rate prediction.
\newblock In {\em Proceedings of the 24th ACM SIGKDD International Conference
  on Knowledge Discovery \& Data Mining}, pages 1059--1068, 2018.

\end{thebibliography}

\newpage
\appendix
\section{Related Work}\label{app:related}
Traditional auction theory in microeconomics studies maximizing objectives such as welfare, revenue and gains from trade in the presence of buyer(s) with quasilinear utility, namely, a utility of $v-p$ where $v$ is the value derived and $p$ be the payment. In this work, we adopt a different behavioral model, namely, one where advertisers maximize their value, subject to constraints on the return-on-spend (ROS) and total budget. As mentioned earlier, the significant rise in the adoption of autobidding algorithms in the past few years~\cite{FacebookAutoBidding, GoogleAutoBidding} motivates the study of this model.

\paragraph{Optimal bidding algorithm for a single value-maximizing bidder with budget and/or ROS constraints.} \citet{ABM19} initiated the study of value-maximizing bidders (value maximizers for short) subject to quite general constraints on value and cost. In particular, their model includes budget and ROS constraints. They show how the uniform bidding strategy is optimal if and only if the underlying auction is truthful (where truthfulness is defined from the point-of-view of a quasilinear bidder). Closest to our work is \cite{FPW22} who study the advertiser's value maximization problem in the presence of both budget and ROS constraints in an online repeated auction setting. They show that a specific instantiation of what we call the joint pacing algorithm in this work achieves a $O(\sqrt{T}\log T)$ regret while respecting both the budget and RoS constraints in the stochastic i.i.d. setting. Their algorithm computes the bid as a function of the two Lagrange multipliers exactly as in Equation~\eqref{eq:joint-bid}. 

\paragraph{Welfare in equilibrium among value maximizers.} While the description so far, and also our work, focuses on a single bidder's optimal bidding problem, the equilibrium under the presence of multiple value maximizing bidders has also been a very active area recently.~\citet{ABM19} show how the VCG mechanism, which is welfare maximizing with quasilinear utility maximizers, can achieve, in the worst case, only a fraction $\frac{1}{2}$ of the optimal social welfare. 
Recent work by~\citet{Mehta22} shows how randomization can improve the efficiency beyond the $\frac{1}{2}$ guaranteed by VCG, by establishing a POA of $1.89$ for $2$ bidders and how the POA is unimprovable beyond $2$ even with randomized mechanisms when $n\to\infty$.~\cite{LMP22} study whether non-truthfulness can improve the POA beyond $2$ and show that this is not possible with a deterministic mechanism. But with the combined power of randomization and non-truthful mechanisms, they show how a randomized first-price auction can improve the POA to $1.8$ for two bidders, but again show it is unimprovable beyond $2$ when the number of bidders is large.
Departing from the no information case studied by the above referenced papers, recent works by~\citet{BDMMZ21a,DMMZ21} show how to improve the efficiency under equilibrium beyond $\frac{1}{2}$ by adding boosts and reserves respectively, based on additional information from machine learned advice. 

\paragraph{Revenue-optimal auction for value maximizers with budget and/or ROS constraints.} Much like the design of optimal auctions for utility-maximizing bidders~\citep{Myerson81}, a recent line of work has focused on the design of revenue optimal mechanisms for value maximizers.~\citet{BDMMZ21b,li2020incentive} initiate this line of work, studying the revenue optimal mechanism in the presence of RoS constraints, but no budget constraints, under various information structures regarding whether or not the value is private, whether or not the advertiser specified target is private.~\cite{BDMMZ22} extend this work to include budget constraints for advertisers, and consider the information structure where value is public, so are advertiser budgets, but  advertiser specified target is private.

\paragraph{Optimal bidding algorithm for a single utility maximizing bidder with \& without budget constraint.} While works dealing with budget and ROS constraints in the presence of value maximizers have already been discussed, there has been a long line of work on doing the same for utility maximizers, but usually with just budget constraints. When values and competing bids are drawn from i.i.d. distributions,~\citet{BG19} show that the dual subgradient descent algorithm gives the optimal $O(\sqrt{T})$ regret, and in the adversarial setting they show that it obtains the optimal asymptotic competitive ratio, namely, $B/T$ divided by the maximum value.~\citet{ZCL08} also study pacing in the adversarial setting and give an optimal competitive ratio, but one that is differently parameterized compared to~\cite{BG19}.~\citet{KMS22} study an episodic setting and show how to compute per-period target expenditures based on estimating the probability density based on samples, and ultimately pace based on these target expenditures. On similar lines~\citet{JLZ20} also show how to obtain the optimal $\sqrt{T}$ regret in a non-stationary setting by first learning the probability distributions and then computing target expenditures based on those, using $T\log T$ samples per distribution. Our paper is also loosely related with the rich literature about \emph{Learning to bid in repeated auctions}~\cite{borgs2007dynamics, weed2016online, Feng18, Balseiro19, HZFOW20}, in which the existing papers usually abstract this problem as contextual bandits and do not incorporate budget or ROS constraints into them.

\paragraph{Equilibrium among budget-pacing strategies of utility maximizers.} There is a line of work studying equilibrium outcomes of budget pacing agents interacting with each other. We refer the reader to~\citep{GLLLS22,FT22,CKK21,CKSS22,BBW15} and the references therein for more on this topic. Interestingly, these papers show that uniform bidding is also optimal in the presence of budget constraints. Also,~\citet{BKMM17} perform a comprehensive study of different common budget-pacing strategies and compare the system equilibrium in terms of their welfare, platform revenue, and advertiser utility.

\paragraph{Online resource allocation problems.} The budget pacing problem discussed in the preceding paragraphs is known to be a special case of online resource allocation problems, which have a long line of work. Most of the literature on this topic has focused on the i.i.d.~input model or the slightly more general random permutation model.~\citet{DH09}  introduce a training-based algorithm that learns the optimal dual variables from a batch of initial requests and then uses those to assign the rest of the requests. They show how to obtain a $O(T^{2/3})$ regret for the budgeted allocation problem (also known as the adwords problem) in the random permutation model.~\citet{FHKMS10} obtain a $O(T^{2/3})$ regret for more general linear packing problems in the random permutation model.~\citet{AWY14} obtain an improved $O(\sqrt{T})$ regret by repeatedly solving for the optimal dual variables at geometrically increasing time lengths.~The algorithm of \citet{KTRV14} further solves a linear program at every step and apart from $O(\sqrt{T})$, also obtain the optimal dependence on the number of resources.~\citet{DJSW19} consider more general online packing and covering LPs, but in the i.i.d.~model and obtain a $O(\sqrt{T})$ regret with the optimal dependence on the number of resources. Their algorithm does not need to solve auxiliary linear programs if given an estimate of OPT.~\citep{GM14,AD15,BLM22} make the formal connection between dual descent algorithms and online resource allocation, and show how one can use dual descent algorithms as a black box to obtain a $O(\sqrt{T})$ regret.~In particular, \citep{BLM22,li2020simple} present simple algorithms that do not require solving auxiliary optimization problems.


\pgfmathsetmacro{\picrho}{9/16}
\pgfmathsetmacro{\pickb}{3}
\pgfmathsetmacro{\pickr}{4}
\begin{figure}
    \centering
    \begin{subfigure}{0.4\textwidth}    
        \centering
        \begin{tikzpicture}[scale=1,every node/.style={scale=1}]
            \begin{axis}[xlabel={$k$},ylabel={}, ytick={0}, xtick={0}, yticklabels={}, xticklabels={}, extra x ticks={\pickr},extra x tick labels={$\kr$}, legend pos=north east, legend cell align=left, xmin=0, xmax=8, ymin=-0.25, ymax=0.5,clip=false,axis x line = middle,axis y line = left]       
            \addplot[domain=0:8,color=black,line width=1pt,samples=100]{-(x-4)*x/(2*(x+2)^2)};
            \legend{$\gr(k)$}
            \end{axis}
        \end{tikzpicture}
        \caption{Expected gradient of the ROS constraint $\gr(k) =\E_{v}[ v \cdot x(k \cdot v) - p(k \cdot v)]$.}
        \label{fig:gr}
    \end{subfigure}
    \hspace{1cm}
    \begin{subfigure}{0.4\textwidth}    
        \centering
        \begin{tikzpicture}[scale=1,every node/.style={scale=1}]
            \begin{axis}[xlabel={$k$},ylabel={}, ytick={0}, xtick={0}, yticklabels={}, xticklabels={}, extra x ticks={\pickb},extra x tick labels={$\kb$}, legend pos=north east, legend cell align=left, xmin=0, xmax=8, ymin=-0.5, ymax=1,clip=false,axis x line = middle,axis y line = left]       
            \addplot[domain=0:8,color=black,line width=1pt,samples=100]{\picrho- x*x/(x*x+2*x+1)};
            \legend{$\gb(k)$}
            \end{axis}
        \end{tikzpicture}
        \caption{Expected gradient of the budget constraint $\gb(k)=\rho - \E_{v} [p(k \cdot v)]$.}
        \label{fig:gb}        
    \end{subfigure}
    \caption{Expected gradients for an example in which values and competing bids are independent and exponentially distributed with means 1/2 and 1, respectively. Also, $\rho = 9/16$. Both curves cross the positive $k$-axis once and from above (Assumption~\ref{ass:single-crossing}). Also, strong monotonicity holds for this example (Assumption~\ref{ass:strongly-monotone}).}
    \label{fig:gradients}
\end{figure}
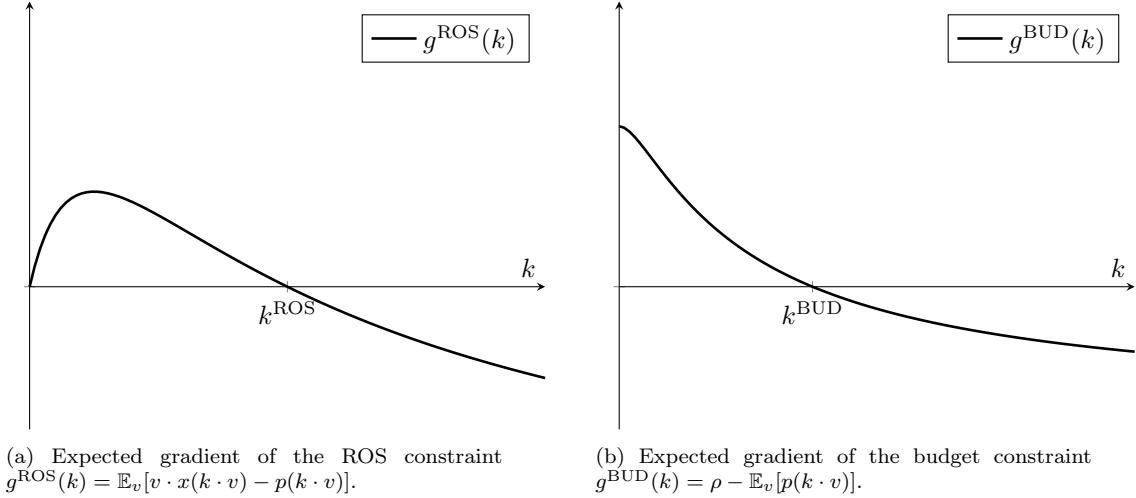
\section{Proof of the regret bound of the $\mpacing$ algorithm}\label{app:min_proof}
We choose the orbit to be a ball of size $\epsilon>0$ around the optimal solution
\[
    \mathcal O_\epsilon = \left\{ \{(\lambda,\mu) \in \mathbb R_+^2 : \max\left( |\lambda - \lambda^*|, |\mu - \mu^*|\right) < \epsilon\right\}\,.
\]
The value of $\epsilon>0$ is chosen so that 
\begin{enumerate}
    \item Assumption~\ref{ass:strongly-monotone} is satisfied for all $(\lambda,\mu) \in \mathcal O_\epsilon$,
    \item The algorithm bids according to the binding constraint for all $(\lambda,\mu) \in \mathcal O_\epsilon$, i.e., $k^{\min}(\lambda,\mu) = (1+\lambda)/\lambda$, 
    \item The gradient of the budget constraint satisfies $\gb(k^{\min}(\lambda,\mu)) > 0$ for all $(\lambda,\mu) \in \mathcal O_\epsilon$,
    \item The gradients satisfy $\gr(k^{\min}(\lambda,\mu)) < 0$ and $\gb(k^{\min}(\lambda,\mu)) < 0$ if either $\lambda < \epsilon$ or $\mu < \epsilon$.
\end{enumerate}
The second condition can be satisfied by Lemma~\ref{lemma:optimal-multipler} because there exists an optimal dual optimal solution with  $\mu^*=0$ and $\lambda^* = 1/(\kr - 1) > 0$, and the algorithm bids according to the ROS multiplier when $(1+\lambda)/\lambda < 1/\mu$. The third condition can be satisfied because the single-crossing property (Assumption~\ref{ass:single-crossing}) implies that $\gb(k) > 0$ for $k < \kb$ and non-degeneracy (Assumption~\ref{ass:nondegenerate}) implies that for $\kr < \kb$. The fourth condition holds by the single-crossing property because the gradients are negative for large enough multipliers $k$.  

\pgfmathsetmacro{\utau}{1}
\pgfmathsetmacro{\aa}{3}
\pgfmathsetmacro{\bb}{1/10}

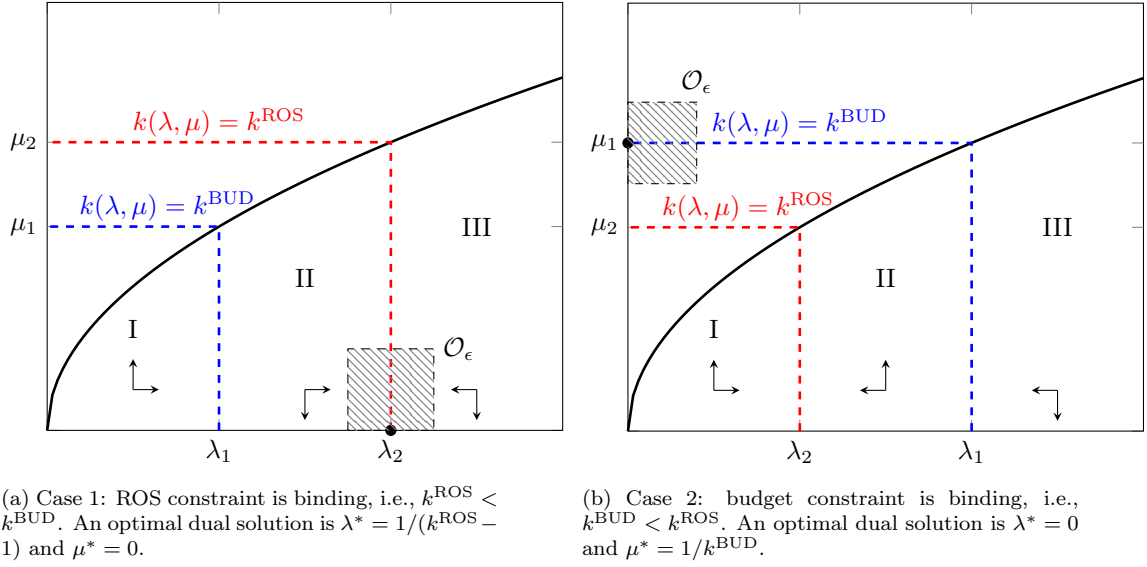
\begin{figure}
\centering
\begin{subfigure}{0.4\textwidth}
  \centering
  \begin{tikzpicture}[scale=1,every node/.style={scale=1}]
        \begin{axis}[xlabel={},ylabel={}, ytick={0}, xtick={0}, yticklabels={}, xticklabels={}, extra x ticks={\utau, 2},extra x tick labels={$\lambda_1$, $\lambda_2$},  extra y ticks={\utau, 1.414}, extra y tick labels={$\mu_1$, $\mu_2$},legend pos=outer north east, legend cell align=left, xmin=0, xmax=3, ymin=0, ymax=2.1,clip=false]
        \addplot[domain=0:3,color=black,line width=1pt,samples=100]{ sqrt(x)};
        \addplot[forget plot,color=blue,line width=1pt,dashed] coordinates {(\utau,0) (\utau,\utau) (0,\utau)};
        \addplot[forget plot,color=red,line width=1pt,dashed] coordinates {(2,0) (2,1.414) (0,1.414)};
        \node at (axis cs: 0.5,0.5) (R1) {I};
        \draw [-stealth](axis cs: 0.50,0.20) -- (axis cs: 0.50,0.35);
        \draw [-stealth](axis cs: 0.50,0.20) -- (axis cs: 0.65,0.20);
        \node at (axis cs: 1.5,0.75) {II};
        \draw [-stealth](axis cs: 1.50,0.20) -- (axis cs: 1.50,0.05);
        \draw [-stealth](axis cs: 1.50,0.20) -- (axis cs: 1.65,0.20);
        \node at (axis cs: 2.5,1) {III};
        \draw [-stealth](axis cs: 2.50,0.20) -- (axis cs: 2.50,0.05);
        \draw [-stealth](axis cs: 2.50,0.20) -- (axis cs: 2.35,0.20);
        \node at (axis cs: 0.7,1.1) {\textcolor{blue}{{$k(\lambda,\mu)=\kb$}}};
        \node at (axis cs: 1,1.51) {\textcolor{red}{{$k(\lambda,\mu)=\kr$}}};
        \node at (axis cs: 2,0)[circle,fill,inner sep=1.5pt]{};
        \draw[pattern=north west lines, pattern color=gray,dashed] (axis cs: 1.75,0) rectangle (axis cs:2.25,0.4) node[right] {$\mathcal O_\epsilon$} ;        
        \end{axis}
        \end{tikzpicture}
  \caption{Case 1: ROS constraint is binding, i.e., $\kr < \kb$. An optimal dual solution is $\lambda^*=1/(\kr - 1)$ and $\mu^*=0$.}
  \label{fig:case1}
\end{subfigure}%
\hspace{1cm}
\begin{subfigure}{0.4\textwidth}
    \centering
    \begin{tikzpicture}[scale=1,every node/.style={scale=1}]
        \begin{axis}[xlabel={},ylabel={}, ytick={0}, xtick={0}, yticklabels={}, xticklabels={}, extra x ticks={\utau, 2},extra x tick labels={$\lambda_2$, $\lambda_1$},  extra y ticks={\utau, 1.414}, extra y tick labels={$\mu_2$, $\mu_1$},legend pos=outer north east, legend cell align=left, xmin=0, xmax=3, ymin=0, ymax=2.1,clip=false]
        \addplot[domain=0:3,color=black,line width=1pt,samples=100]{ sqrt(x)};
        \addplot[forget plot,color=blue,line width=1pt,dashed] coordinates {(2,0) (2,1.414) (0,1.414)};
        \addplot[forget plot,color=red,line width=1pt,dashed] coordinates {(\utau,0) (\utau,\utau) (0,\utau)};
        \node at (axis cs: 0.5,0.5) {I};
        \draw [-stealth](axis cs: 0.50,0.20) -- (axis cs: 0.50,0.35);
        \draw [-stealth](axis cs: 0.50,0.20) -- (axis cs: 0.65,0.20);        
        \node at (axis cs: 1.5,0.75) {II};
        \draw [-stealth](axis cs: 1.50,0.20) -- (axis cs: 1.50,0.35);
        \draw [-stealth](axis cs: 1.50,0.20) -- (axis cs: 1.35,0.20);
        \node at (axis cs: 2.5,1) {III};
        \draw [-stealth](axis cs: 2.50,0.20) -- (axis cs: 2.50,0.05);
        \draw [-stealth](axis cs: 2.50,0.20) -- (axis cs: 2.35,0.20);        
        \node at (axis cs: 1,1.51) {\textcolor{blue}{{$k(\lambda,\mu)=\kb$}}};
        \node at (axis cs: 0.7,1.1) {\textcolor{red}{{$k(\lambda,\mu)=\kr$}}};
        \node at (axis cs: 0,1.414)[circle,fill,inner sep=1.5pt]{};
        \draw[pattern=north west lines, pattern color=gray,dashed] (axis cs: 0,1.414-0.2) rectangle (axis cs:0.4,1.414+0.2) node[above] {$\mathcal O_\epsilon$} ;        
        \end{axis}
    \end{tikzpicture}
    \caption{Case 2: budget constraint is binding, i.e., $\kb < \kr$. An optimal dual solution is $\lambda^*=0$ and $\mu^*=1/\kb$.}
    \label{fig:case2}
\end{subfigure}
\caption{Illustration of the two cases for the MIN dynamics. The black dot indicates an optimal solution and the hatched rectangle is an orbit $\mathcal O_\epsilon$ of size $\epsilon$ around the optimal solution. The solid black curve gives the points for which the $(1+\lambda)/\lambda = 1/\mu$, i.e., the multipliers of both constraints are equal. Above the curve, the algorithm bids $1/\mu$ according to the budget constraint, and below it bids $(1+\lambda)/\lambda$ according to the ROS constraint. The arrows indicate the drift of the stochastic process in each region. The red (blue, resp.) dashed curve gives the set of dual variables for which $k^{\min}(\lambda,\mu)=\kr$ ($=\kb$, resp.).}
\end{figure}

\subsubsection{Step 1: Binding Constraint Identification}

For the first step, we show that if dual variables are positive, it takes the ODE a constant amount of time to get to the interior of the orbit.

\begin{lemma}\label{lemma:ode-path} For any initial dual solution $(\lambda(0),\mu(0)) \not\in [0,\epsilon)^2$, there exits a finite time $\sigma > 0$ such that the solution of \eqref{eq:ode} satisfies $(\lambda(\sigma), \mu(\sigma)) \in \mathcal O_{\epsilon/2}$ and for all $s \in [0,\sigma]$ we have $(\lambda(s),\mu(s)) \not\in [0,\epsilon)^2$.
\end{lemma}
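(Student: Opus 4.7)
The plan is to establish two facts: (A) the complement $U:=\mathbb{R}_{\ge 0}^2\setminus[0,\epsilon)^2$ is positively invariant under the ODE~\eqref{eq:ode}; and (B) every trajectory starting in $U$ enters $\mathcal{O}_{\epsilon/2}$ in finite time. Together these yield the lemma, giving both the finite hitting time and the containment $(\bar\lambda(s),\bar\mu(s))\in U$ for $s\in[0,\sigma]$.

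For (A), the relevant boundary inside $\mathbb{R}_{\ge 0}^2$ is $\{\epsilon\}\times[0,\epsilon]\cup[0,\epsilon]\times\{\epsilon\}$. On either segment at least one of $(1+\lambda)/\lambda$ and $1/\mu$ is at least $1/\epsilon$, so $k^{\min}(\lambda,\mu)\ge 1/\epsilon$; shrinking $\epsilon$ if needed beyond its four defining conditions, we may assume $k^{\min}>\kb$, whence Assumption~\ref{ass:single-crossing} forces $\gr(k^{\min})<0$ and $\gb(k^{\min})<0$. The ODE then gives $\dot{\log\bar\lambda}>0$ and $\dot{\log\bar\mu}>0$ on a neighborhood of the boundary, so the vector field points strictly outward of $[0,\epsilon)^2$, and a standard Nagumo-type tangency argument precludes entry.

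For (B), I would use the Bregman-divergence Lyapunov function
\[
V(\lambda,\mu):=\lambda-\lambda^*-\lambda^*\log(\lambda/\lambda^*)+\mu,
\]
which is non-negative, vanishes only at $(\lambda^*,0)$, and whose sufficiently small sublevel sets lie inside $\mathcal{O}_{\epsilon/2}$. Along~\eqref{eq:ode} a direct computation gives
\[
\dot V = -(\lambda-\lambda^*)\,\gr(k^{\min}(\lambda,\mu)) - \mu\,\gb(k^{\min}(\lambda,\mu)).
\]
When the ROS branch $k^{\min}=(1+\lambda)/\lambda$ is active, the map $\lambda\mapsto(1+\lambda)/\lambda$ is strictly decreasing with value $\kr$ at $\lambda=\lambda^*$, so single-crossing of $\gr$ at $\kr$ forces $\mathrm{sign}(\gr(k^{\min}))=\mathrm{sign}(\lambda^*-\lambda)$, and hence $(\lambda-\lambda^*)\gr(k^{\min})\ge 0$; Assumption~\ref{ass:strongly-monotone} upgrades this to a quadratic lower bound in a neighborhood of $\lambda^*$. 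Once $(1+\lambda)/\lambda<\kb$ we also get $\gb(k^{\min})>0$ and hence $-\mu\gb(k^{\min})\le 0$. Putting the two estimates together yields $\dot V\le -c(\epsilon)<0$ on the ROS-active part of $U\setminus\mathcal{O}_{\epsilon/2}$.

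The principal obstacle is the ``corner'' region $\{k^{\min}>\kb\}$ and the budget-active branch $\{k^{\min}=1/\mu\}$ away from $(\lambda^*,0)$, where $V$ may temporarily increase. My plan is to show these are transients of bounded duration. Inside the corner, $\gr$ and $\gb$ are uniformly negative, so $\dot{\log\bar\lambda}$ and $\dot{\log\bar\mu}$ are uniformly positive and the trajectory exits the corner in bounded time. On the budget branch, $\dot{\log\bar\mu}=-\gb(1/\mu)$ changes sign only at $\mu=1/\kb$, which pushes $\mu$ monotonically toward $1/\kb$ and then past the switching curve $1/\mu=(1+\lambda)/\lambda$ into the ROS branch. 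After these transients, the Lyapunov estimate above applies until entry into $\mathcal{O}_{\epsilon/2}$, yielding a hitting time at most $V(\lambda(0),\mu(0))/c(\epsilon)$. Combined with (A), this completes the proof.
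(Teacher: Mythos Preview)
Your approach takes a genuinely different route from the paper's. The paper does a direct region-by-region funneling argument: it partitions the plane into three regions according to whether $k^{\min}$ lies in $(\kb,\infty)$, $(\kr,\kb)$, or $(0,\kr)$, reads off the sign of $(\dot\lambda,\dot\mu)$ in each, and then shows in four sequential steps that the trajectory is corralled first to $\mu\le(\mu_1+\mu_2)/2$, then $\lambda\ge(\lambda_1+\lambda_2)/2$, then $\mu\le\epsilon/2$, and finally $|\lambda-\lambda^*|\le\epsilon/2$. Your Lyapunov-based approach is more systematic in spirit, and part (A) matches the paper's invariance argument, modulo a slip: you write ``at least one of $(1+\lambda)/\lambda$ and $1/\mu$ is at least $1/\epsilon$, so $k^{\min}\ge1/\epsilon$'', but $k^{\min}$ is a minimum, so you need \emph{both}; fortunately both do hold on that boundary, so the conclusion survives.

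The real issue is part (B). Your Lyapunov decrease fails precisely on the corner $\{k^{\min}>\kb\}$ and on the budget-active branch, and your transient arguments for those regions are underspecified and contain errors. In the corner, $\gb$ is \emph{not} uniformly negative (it vanishes as $k^{\min}\downarrow\kb$); exit is actually driven by $\gr$, which is bounded away from zero there since $\kb>\kr$. On the budget branch, your claim that $\mu\to1/\kb$ forces a crossing of the switching curve is not justified: $\mu$ settling near $1/\kb$ does not by itself cross $\mu=\lambda/(1+\lambda)$; the switch happens because once $\mu<1/\kr$ the $\lambda$-drift $-\gr(1/\mu)$ turns positive and $(1+\lambda)/\lambda$ decreases below $1/\mu$. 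Filling these in amounts to redoing the paper's regional argument, so the Lyapunov wrapper does not shortcut the proof. Finally, the hitting-time bound should be in terms of $V$ at the end of the transients (or include a bound on the increase of $V$ during them), not $V(\lambda(0),\mu(0))$.
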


\begin{proof}
It follows from Assumption \ref{ass:single-crossing} and \ref{ass:nondegenerate} that $\gr(k) > 0$ for $k< \kr$ and $\gr(k)<0$ for $k>\kr$. Furthermore, denote $\mu_1=1/\kb$, $\mu_2=1/\kr$, $\lambda_1=1/(\kb-1)$, $\lambda_2=1/(\kr-1)$. We consider two cases depending on whether $\kb$ or $\kr$ is smaller.

\textbf{Case 1: $\kb>\kr$.}
In this case, we have $k^*=\kr$, and the unique stationary point is given by $\mu^*=0$ and $\lambda^*=\lambda_2 = 1/(\kr-1)$. 

The whole space can be split into three regions (see Figure \ref{fig:case1}):

\textbf{Region I: $k>\kb$.} This region corresponds to $\{(\mu,\lambda):\mu<\mu_1, \lambda<\lambda_1\}$. In this region, we have $\gr(k)<0$ and $\gb(k)<0$, thus $\dmu>0$ and $\dlambda>0$. 

\textbf{Region II: $\kr<k<\kb$.} This region corresponds to $\{(\mu,\lambda):\mu<\mu_2, \lambda<\lambda_2\}$ subtracting region I. In this region, we have $\gb(k)>0$ and $\gr(k)<0$, thus $\dmu<0$ and $\dlambda>0$.

\textbf{Region III: $k>\kr$.} This region corresponds to the complementary set of region I and II. In this region, we have $\gb(k)>0$ and $\gr(k)>0$, thus $\dmu<0$ and $\dlambda<0$.

Now, we are ready to show the result. Before proceeding, note that by definition $\epsilon$, we have that $\dot \mu > 0$  and $\dot \lambda > 0$ if $\mu < \epsilon$ and $\lambda < 0$. Therefore, the ODE can never get closer to a distance $\epsilon$ from the origin.

First, we claim that for any initial solution $\mu(0), \lambda(0)$, there exists $s_1$ such that it holds for all $s>s_1$ that $\mu(s)\le \hmu:= \frac{1}{2}\pran{\mu_1+\mu_2}$. This is because once $\mu(s)\le \hmu$, $\mu(s)$ would never go above $\hmu$ due to the dynamics in regions II and III. So we just need to consider the first time $\mu(s)\le \hmu$. Notice that for all $(\mu,\lambda)$ such that $\mu> \hat \mu$, there exists $\delta_1$ such that we have $\dmu < \delta_1<0 $. Thus, we just need to choose $s_1= \frac{1}{|\delta_1|}((\mu(0)-\mu_1)^+)$.

Second, we claim there exists $s_2>s_1$ such that for $s>s_2$, we have $\mu(s)\le \hmu$ and $\lambda(s) \ge \hlambda:= \frac{1}{2}(\lambda_1+\lambda_2)$. This is because after $s_1$, $\mu(s)\le \hmu$. Thus, once $\lambda(s) \ge \hlambda$, $\lambda(s)$ would never go below $\hlambda$ due to the dynamics in the regions I and II. So we just need to consider the first time $\lambda(s) \ge \hlambda$. Notice that for all $(\mu,\lambda)$ such that $\mu\le \mu_1, \lambda\le \hlambda$, there exists $\delta_2$ such that we have $\dlambda \ge \delta_2>0 $. Thus, we just need to choose $s_2=s_1+ \frac{1}{\delta_2}((\hlambda-\lambda(s_1))^+)$.

Third, we claim there exists $s_3>s_2$ such that for $s>s_3$, we have $\mu(s) \le \epsilon/2$ and $\lambda(s) \ge \hlambda$. This is because after $s_2$, $\mu(s)\le \hmu, \lambda(s) \ge \hlambda$. 
In this region, there exists $\delta_3 < 0$ such that $\dmu\le \delta_3 \mu <0$ and we just need to choose $s_3=s_2+ \log(\epsilon/(2\mu))/|\delta_3|$.

Fourth, we claim there exists $s_4 > s_3$ such that for $s > s_4$, we have that $\mu(s) \le \epsilon/2$ and $|\lambda(s) - \lambda_2| \le \epsilon/2$. This is because after $s_3$ we have that $\dot \mu \le 0$ and hence $\mu(s) \le \epsilon/2$ for all $s>s_3$. The single-crossing property implies that $\dot \lambda = 0$ only at $\lambda_2$, so we should reach $|\lambda(s) - \lambda_2| \le \epsilon/2$ in finite time.

\textbf{Case 2: $\kb<\kr$.}
This case is exactly symmetric to Case 1 by flipping $\mu$ and $\lambda$ (see Figure \ref{fig:case2}).
\end{proof}

We invoke the following result, which bounds the maximum error between a discrete-time stochastic system and its continuous-time ODE approximation.

\begin{lemma}\label{lemma:ode-deviations} Consider the stochastic process $\{Y_t\}_{t\ge0}$ with $Y_t \in \mathbb R^n_{++}$ satisfying
\[
    Y_{t+1} = Y_t + \alpha h_t(Y_t)\,,
\]
where $h_t :\mathbb R^n \mapsto \mathbb R^n$ is a random function and $\alpha > 0$ is the step-size. The initial state $Y_0$ lies in an open subset $\mathcal Y \subseteq \mathbb R^n$. The random functions are i.i.d.~with expectation $\E h_t(y) = \bar h(y)$. We assume that the random functions have uniformly bounded expectation $\bar h_i(y) \le \bar H$ for all $y \in \mathcal Y$, uniformly bounded variance $\operatorname{Var}[h_{t,i}(y)] \le \bar H_2$ for all $y \in \mathcal Y$, and its expectation is $L$-Lipschitz continuous in $\mathcal Y$ w.r.t.~the max-norm, i.e., $\|\bar h(y) - \bar h(y')\|_\infty \le L \|y - y'\|_\infty$ for all $y,y'\in \mathcal Y$. Then, the following holds:
\begin{enumerate}
    \item The ODE $\frac d {ds} \bar Y(s) = \bar h(\bar Y(s))$ with $\bar Y(0) = Y_0 \in \mathcal Y$ has a unique solution in $\mathcal Y$.

    \item Fix $\epsilon > 0$. Let $\sigma \ge 0$ be such that $\| \bar Y(s) - y\|_\infty > \epsilon$ for all $s \in [0,\sigma]$ and $y \not\in \mathcal Y$. Then, 
    \[
    \mathbb P\left\{ \max_{t : \alpha t \le \sigma \ } \left\| Y_t - \bar Y(\alpha t) \right\| _\infty > \epsilon \right\} \le \epsilon^{-2}
    \left( \alpha \sigma L \bar H + \sqrt{4 n \alpha \sigma \bar H_2} \right)^2 \exp (2 L \sigma)
    \]
\end{enumerate}
\end{lemma}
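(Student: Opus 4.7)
For part 1, the plan is to invoke the classical Picard-Lindelöf theorem. Since $\bar h$ is $L$-Lipschitz continuous on the open set $\mathcal Y$ and $Y_0 \in \mathcal Y$, local existence and uniqueness of the ODE solution $\bar Y(\cdot)$ follow immediately, and the solution is extended maximally inside $\mathcal Y$ by standard continuation.

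For part 2, the strategy is to decompose the error $e_t := Y_t - \bar Y(\alpha t)$ into a Lipschitz drift term, a time-discretization term, and a martingale noise term, then combine the discrete Gronwall inequality with Doob's $L^2$ maximal inequality. Concretely, writing $Y_t$ as a telescoping sum and $\bar Y(\alpha t)$ as an integral yields the decomposition
\[
e_t = \alpha \sum_{s=0}^{t-1}\bigl[\bar h(Y_s) - \bar h(\bar Y(\alpha s))\bigr] + \Bigl(\alpha \sum_{s=0}^{t-1}\bar h(\bar Y(\alpha s)) - \int_0^{\alpha t}\bar h(\bar Y(u))\,du\Bigr) + M_t,
\]
where $M_t := \alpha \sum_{s=0}^{t-1}\bigl[h_s(Y_s) - \bar h(Y_s)\bigr]$ is a martingale with respect to the natural filtration, since $h_t$ is independent of $\mathcal F_t = \sigma(h_0,\ldots,h_{t-1})$. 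Using Lipschitz continuity on the first sum and the bound $\|\bar h\|_\infty \le \bar H$ together with the fact that $\bar Y$ moves at speed at most $\bar H$ on the second, the discretization error is $O(\alpha^2 L \bar H)$ per step and thus at most $\alpha \sigma L \bar H$ in total over the horizon. Setting $\tilde e_t := \|e_t\|_\infty$ produces the recursion $\tilde e_t \le \alpha L \sum_{s<t}\tilde e_s + \alpha \sigma L \bar H + \|M_t\|_\infty$, to which the discrete Gronwall lemma gives
\[
\max_{t:\alpha t \le \sigma}\tilde e_t \le \bigl(\alpha \sigma L \bar H + \max_t \|M_t\|_\infty\bigr)\exp(L\sigma).
\]

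To control $\max_t \|M_t\|_\infty$, I would apply Doob's $L^2$ maximal inequality componentwise. Each summand in $M_t$ has coordinatewise variance at most $\alpha^2 \bar H_2$, so $\E[M_{t,i}^2] \le \alpha^2 t \bar H_2 \le \alpha \sigma \bar H_2$. Doob yields $\E[\max_t M_{t,i}^2] \le 4\alpha\sigma \bar H_2$, and summing the max over $n$ coordinates gives $\E[\max_t \|M_t\|_\infty^2] \le 4n\alpha\sigma \bar H_2$; Cauchy-Schwarz then provides $\E[\max_t \|M_t\|_\infty] \le \sqrt{4n\alpha\sigma \bar H_2}$. Squaring the Gronwall bound, taking expectations, and using $\E[(a+X)^2] = a^2 + 2a\,\E[X] + \E[X^2]$ collapses the right-hand side into $(\alpha\sigma L \bar H + \sqrt{4n\alpha\sigma \bar H_2})^2 \exp(2L\sigma)$. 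Markov's inequality applied to $(\max_t \tilde e_t)^2$ delivers the stated probability bound.

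The main subtlety I anticipate is that the Lipschitz estimate requires $Y_s \in \mathcal Y$ throughout the horizon, which is not guaranteed a priori. I would handle this with a stopping-time argument: let $\tau := \inf\{t : \|e_t\|_\infty > \epsilon\}$. By hypothesis, $\bar Y(s)$ remains at distance at least $\epsilon$ from $\mathcal Y^c$ for $s \in [0,\sigma]$, so on $\{\tau > t\}$ the iterate $Y_t$ lies in the $\epsilon$-tube around $\bar Y(\alpha t)$ and thus in $\mathcal Y$. Running the above analysis on the stopped process $Y_{t \wedge \tau}$ (which still satisfies the recursive decomposition, with the stopped martingale still a martingale) and observing that the event in the bound is exactly $\{\tau \le \lfloor \sigma/\alpha \rfloor\}$ closes the loop.
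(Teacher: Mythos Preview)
Your proposal is correct and takes essentially the same approach as the paper: Picard--Lindel\"of for part~1; for part~2, the same three-term decomposition (Lipschitz drift, time-discretization, martingale noise), discrete Gronwall, coordinatewise Doob $L^2$ maximal inequality, and Markov's inequality, together with a stopping-time argument to ensure the iterates stay in $\mathcal Y$ so the Lipschitz bound applies. The only cosmetic differences are that the paper uses the mean value theorem to isolate the discretization error and invokes Minkowski's inequality where you expand $\E[(a+X)^2]$; these are equivalent.
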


\begin{proof}
Denote by $s_t = \alpha t$ the corresponding time in the ODE for the discrete step $t$. The first part follows from Picard–Lindel\"of theorem because $\bar h$ is Lipschitz continuous.

We prove the second part in two steps. In the first step, use the Lipschitz continuity of the dynamics to show that deviations of $Y_t$ from the expected path $\bar Y(s_t)$ accumulate linearly and conclude by using a discrete version of Gronwall's Lemma to bound the absolute deviations in an almost sure sense. This first step performs a deterministic analysis of the deviations. In the second step, we use a concentration argument to bound the maximum deviation in a stochastic sense.

\paragraph{Step 1.} Introduce a time $\tau$ corresponding to the first time $t$ with $s_{t+1} \le \sigma$ such that $Y_{t+1} \not \in \mathcal Y$. Consider a step $t \ge 1$ under the event that $t \le \tau$, which implies that $Y_j \in \mathcal Y$ for all $j \le t$ and $\bar Y(s) \in \mathcal Y$ for all $s \le s_t$. Using the dynamics of the stochastic process and the ODE, we obtain that
\begin{align*}
    Y_{t+1} - \bar Y(s_{t+1}) &= Y_t - \bar Y(s(t)) + \alpha h_t ( Y_t) - \int_{s_t}^{s_{t+1}} \bar h (\bar Y(s)) ds\,.
\end{align*}
From the mean value theorem, because the solution to the ODE is absolutely continuous, we know there exists $\zeta_i \in [s_t, s_{t+1}]$ such that
\begin{align*}
    \int_{s_t}^{s_{t+1}} \bar h_i (\bar Y(s)) ds &= (s_{t+1} - s_t) \bar h_i (\bar Y(\zeta_i))\\
    &= \alpha \bar h_i (Y_t) + \underbrace{\alpha \left( \bar h_i (\bar Y(s_t)) - \bar h_i(Y_t) \right) + \alpha \left(  \bar h_i(\bar Y(\zeta_i)) - \bar h_i (\bar Y(s_t)) \right)}_{\beta_{t,i}}\,.
\end{align*}
Therefore, we have that
\begin{align*}
    Y_{t+1} - \bar Y(s_{t+1}) &= Y_t - \bar Y(s(t)) + \alpha \Delta_t + \beta_t\,.
\end{align*}
where $\Delta_t = h_t(Y_t) - \bar h(Y_t)$. We refer to $\Delta_t$ as a stochastic error and $\beta_t$ as the integration error. Using that $h$ is $L$-Lipschitz continuous in $\mathcal Y$, the integration error can be bounded as follows:
\begin{align*}
    |\beta_{t,i}| &\le \alpha   \left| \bar h_i (\bar Y(s_t)) - \bar h_i(Y_t) \right| + \alpha \left|(  \bar h_i(\bar Y(\zeta_i)) - \bar h_i (\bar Y(s_t)) \right|\\
    &\le \alpha L \|Y_t - \bar Y(s_t)\|_\infty + \alpha L \|\bar Y(\zeta_i) - \bar Y(s_t)\|_\infty\\
    &\le  \alpha L \|Y_t - \bar Y(s_t)\|_\infty + \alpha^2 L \bar H\,,
\end{align*}
where the last inequality follows because from the mean value theorem there exists $\zeta_j'' \in [s_t, \zeta_i]$ such that $| \bar Y_j(\zeta_i) - \bar Y_j(s_t)| = |(\zeta_i - s_t) \bar h( Y(\zeta_j''))| \le 
\alpha \bar H$  together with the fact that $| \zeta_i - s_t| \le \alpha$ and $|\bar h(y)| \le \bar H$.

Therefore, summing over steps $j = 0,\ldots,t$ and using that the initial conditions satisfy $Y_0 = \bar Y(0)$, we obtain that the following is true under the event $t\le \tau$:
\begin{align*}
    \left\|  Y_{t+1} - \bar Y(s_{t+1}) \right\|_\infty 
    &= \Bigg\|  \sum_{j=0}^{t} \left( \alpha \Delta_j + \beta_j \right) \Bigg\|_\infty\\
    &\le \alpha \left\| M_{t} \right\|_\infty + \alpha L \sum_{j=1}^{t} \|Y_j - \bar Y(s_j)\|_\infty + \alpha s_{t+1} L \bar H\,,
\end{align*}
where the we denote by $M_t = \sum_{j=0}^{t} \Delta_j = \sum_{j=0}^t h_j(Y_j) - \bar h(Y_j)$ and last inequality follows from the triangle inequality together with $s_t = \alpha t$.

We  next apply the following discrete version of Gronwall's Lemma.

\begin{lemma}[Discrete Gronwall's Lemma] Let $x_t \ge 0$ be a sequence satisfying $x_t \le a+b \sum_{j=1}^{t-1} x_j$ with $a,b\ge 0$. Then, $x_t \le a \exp(bt).$
\end{lemma}

Setting $x_t = \left\|  Y_t - \bar Y(s_t) \right\|_\infty$ and choosing $a,b$ appropriately, we obtain that
\[
     \left\|  Y_{t+1} - \bar Y(s_{t+1}) \right\|_\infty \le 
     \left( \alpha s_{t+1} L \bar H + \alpha \max_{\ell=0,\ldots,t} \left\| M_\ell \right\|_\infty \right) \exp (L s_{t+1})\,.
\]

\paragraph{Step 2.} Denote by $\mathcal F_t = \sigma(h_0, \ldots, h_t)$ the sigma-algebra generated by the random functions up to step $t$. We have that $M_t$ is a martingale because $M_t \in \mathcal F_t$ and $\E[M_{t+1} | \mathcal F_t] = M_t$. Moreover, $Y_0,\ldots, Y_{t+1} \in \mathcal F_t$ and $\tau$ is a stopping time with respect to $\mathcal F_t$ because $\tau \in \mathcal F_t$.

Taking expectations over the maximum of all steps up to $\tau$, we obtain that
\begin{align*}
    \left( \mathbb E\left[ \max_{j=1,\ldots,\min(t+1,\tau)}  \left\|  Y_j - \bar Y(s_j) \right\|_\infty^2\right] \right)^{1/2} 
    \le \left( \alpha s_{t+1} L \bar H + \alpha \left( \mathbb E\left[\max_{j=0,\ldots,\min(t,\tau)} \left\| M_j \right\|_\infty^2\right] \right)^{1/2}  \right) \exp (L s_{t+1})\,,
\end{align*}
where the first inequality follows from Minkowski inequality. It is sufficient to bound each coordinate at a time because
\[
    \max_{j=0,\ldots,\min(t,\tau)} \left\| M_j \right\|_\infty^2
    =\max_{i=1,\ldots,n} \max_{j=0,\ldots,\min(t,\tau)} \left|M_j \right|^2 \le \sum_{i=1}^n \max_{j=0,\ldots,\min(t,\tau)} \left| M_{j,i} \right|^2\,,
\]
where the first equation follows from exchanging maximums and the second since $\|x\|_\infty \le \|x\|_1$. Using that $\tau$ is a stopping time and $M_t$ is a martingale that
\begin{align*}
    \E\left[\max_{j=0,\ldots,\min(t,\tau)} \left| M_{j,i} \right|^2\right] &= 
    \E\left[\max_{j=0,\ldots,t} \left| M_{\min(j ,\tau),i} \right|^2\right]\\ 
    & \le 4\E\left[M_{\min(t ,\tau),i}^2\right]\\
    &= 4\E\left[\left(\sum\nolimits_{j=0}^t \Delta_j \mathbf 1\{j \le t\} \right)^2 \right]
    = 4\sum_{j=0}^t \E\left[\Delta_j^2 \mathbf 1\{j \le t\} \right]\\
    &\le 4\sum_{j=0}^t \E\left[\Delta_j^2  \right]
    \le 4 (t+1) \bar H_2\,,
\end{align*}
where the first inequality follows from Doob's Martingale Inequality because the stopped martingale $M_{\min(t ,\tau),i}$ is a martingale,
the second equality because martingale differences are orthogonal, and the last our bound on the variance of the random function. Putting everything together, we obtain that
\begin{align}\label{eq:l2-norm-bound}
    \left( \mathbb E\left[ \max_{j=0,\ldots,\min(t+1,\tau)}  \left\|  Y_t - \bar Y(s_t) \right\|_\infty^2\right] \right)^{1/2} 
    \le \left( \alpha s_{t+1} L \bar H + \sqrt{4 n \alpha s_{t+1} \bar H_2} \right) \exp (L s_{t+1})\,.
\end{align}
To conclude that if $t$ is the first time win which $\|Y_t - \bar Y(s_t)\|_\infty > \epsilon$, then we must have $\|Y_{t-1} - \bar Y(s_{t-1})\|_\infty \le \epsilon$, which implies that $Y_{t-1} \in \mathcal Y$ (because if $Y_{t-1} \not\in \mathcal Y$, we would have that $\|Y_{t-1} - \bar Y(s_{t-1})\|_\infty$ because $\alpha {t-1} \le \sigma$ and the definition of $\sigma$). The latter imples that $\tau \ge t-1$ or $t+1 \le \tau$. Therefore, we can write the event in the statement as
\begin{align*}
     \mathbb P\left\{ \max_{t : \alpha t \le \sigma } \left\| Y_t - \bar Y(\alpha t) \right\|_\infty \ge \epsilon \right\} 
     &= \mathbb P\left\{ \max_{t : \alpha t \le \sigma } \left\| Y_t - \bar Y(\alpha t) \right\| _\infty \mathbf 1\{ t+1 \le \tau \} \ge \epsilon \right\}\\
     &= \mathbb P\left\{ \max_{t : \alpha t \le \sigma, t \le \tau-1} \left\| Y_t - \bar Y(\alpha t) \right\| _\infty  \ge \epsilon \right\}\\
     &\le \epsilon^{-2} \E\left[\max_{t : \alpha t \le \sigma, t \le \tau-1} \left\| Y_t - \bar Y(\alpha t) \right\| _\infty^2 \right]\\     
     &\le \epsilon^{-2} \left( \alpha s_{t} L \bar H + \sqrt{4 n \alpha s_{t} \bar H_2} \right)^2 \exp (2 L s_{t})\,,
\end{align*}
where the first inequality follows from an application of Markov's inequality and the last from \eqref{eq:l2-norm-bound}. We conclude by noting that $s_t \le \sigma$.
\end{proof}

We apply Lemma~\ref{lemma:ode-deviations} to $Y_t = (\log \lambda_t, \log \mu_t)$ and set the random function $h_t$ to be the gradients of the ROS and budget constraints, respectively. That is,
\[
    h_t(y) = h_t(\log \lambda, \log \mu) = -\left( v_t \cdot x_t\left( v_t \cdot k^{\min}(\lambda, \mu)\right) - p_t\left( v_t \cdot k^{\min}(\lambda, \mu)\right), \rho - p_t\left( v_t \cdot k^{\min}(\lambda, \mu)\right) \right)\,.
\]
This choice reduces the stochastic process in the statement of the lemma to the update rule of the algorithm. Taking expectations, we obtain that
\[
    \bar h(\log \lambda, \log \mu) = -\left( \gr \left( k^{\min}(\lambda, \mu)\right), \gb\left( k^{\min}(\lambda, \mu)\right) \right)
\]
By assumption, the expected gradients are bounded and have finite variance. For Lipschitz continuity we need to show that for $g = \gb,\gr$
\[\left|g(k^{\min}(\lambda, \mu)) - g(k^{\min}(\lambda', \mu')) \right| \le L \max\left(\|\log \lambda - \log \lambda'\|, \|\log \mu - \log \mu'\|\right).
\]
The expected gradients, however, are not Lipschitz continuous for all multipliers because of the logarithmic transformation. To guarantee Lipschitz continuity, we restrict the set of dual solutions to lie in the set 
\[
    \mathcal Y = \left\{ (\log(\lambda), \log(\mu)) \in \mathbb R^2 : \lambda > \epsilon/2 \text{ or } \mu > \epsilon/2\right\}\,.
\]
For example, the gradient of the budget constraint be written as
\begin{align*}
    \gb\left( k^{\min}(\lambda, \mu)\right)
    &= \gb\left( k^{\min}(\exp(\log(\lambda), \exp(\log(\mu))\right)\\
    &= \gb\left( \min\left(\exp(-\log(\lambda)) + 1\right), \exp(-\log(\mu)) \right)\\
    &= \gb\left( \exp \min\left(\log( \exp(-\log(\lambda)) + 1), -\log(\mu)\right) \right)\,.
\end{align*}
Because the minumum $\min(x,y)$ and the log-sum-exp function $\log(\exp(x) + 1)$ are 1-Lipschitz continuous, we obtain that 
\[\min\left(\log( \exp(-\log(\lambda)) + 1), -\log(\mu)\right)
\]
is 1-Lipschitz continuous in $(\log \lambda, \log \mu)$. For $(\log \lambda, \log \mu) \in \mathcal Y$ we have that $k^{\min}(\lambda, \mu) < 2/\epsilon$, which implies that $\gb\left( k^{\min}(\lambda, \mu)\right)$ is $2L_g/\epsilon$-Lipscthiz continuous because the exponential $\exp(x)$ function is $\exp(a)$-Lipschitz continuous in $[0,a]$.

In Lemma~\ref{lemma:ode-deviations}, we set $\sigma$ as the time it takes the ODE to reach the set $\mathcal O_{\epsilon/2}$ and $\epsilon := \epsilon/2$. Under the good event $A = \left\{\max_{t : \alpha t \le \sigma \ } \left\| Y_t - \bar Y(\alpha t) \right\| _\infty \le \epsilon/2\right\}$, we have by Lemma~\ref{lemma:ode-path} that $(\lambda_t, \mu_t) \not\in[0,\epsilon]^2$ and, thus, the dynamics are Lipschitz continuous. Moreover, we because the step size is $\alpha \approx T^{-1/2}$ we have that at time $\tau = \lfloor \sigma / \alpha \rfloor = O\left(T^{1/2}\right)$ the state of the algorithm reaches the orbit $\mathcal O_\epsilon$ with high probability. More formally, we have proved the following result.

\begin{prop}\label{prop:prob-identification} For every initial dual solution $(\lambda_1,\mu_1) \not \in[0,\epsilon)^2$, there exists a time $\tau = O\left(T^{1/2}\right)$ such that the probability of not hitting the orbit is bounded by
\[
    \mathbb P\left\{ (\lambda_\tau, \mu_\tau) \not\in \mathcal O_\epsilon \right\} = O\left(T^{-1/2}\right)\,.
\]    
\end{prop}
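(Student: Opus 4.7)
The plan is to reduce the discrete stochastic dynamics of $(\lambda_t,\mu_t)$ to the ODE \eqref{eq:ode} via Lemma \ref{lemma:ode-deviations}, then invoke Lemma \ref{lemma:ode-path} to guarantee that the ODE reaches the orbit $\mathcal O_{\epsilon/2}$ in a finite time $\sigma$. Setting $\tau = \lfloor \sigma/\alpha\rfloor$ with $\alpha$ of order $T^{-1/2}$ gives $\tau = O(T^{1/2})$, and Lemma \ref{lemma:ode-deviations} provides the required $O(T^{-1/2})$ bound on the probability that the algorithm deviates from the ODE path by more than $\epsilon/2$ before time $\sigma$.

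First, I would set $Y_t = (\log\lambda_t,\log\mu_t)$, which turns the multiplicative updates \eqref{eq:ros-dual-update}--\eqref{eq:budget-dual-update} into additive updates of the form $Y_{t+1} = Y_t + \alpha h_t(Y_t)$, where $h_t(\log\lambda,\log\mu)$ is the pair of instantaneous (stochastic) gradients of the ROS and budget constraints evaluated at $k^{\min}(\lambda,\mu)$. Taking expectations recovers precisely the RHS of the ODE \eqref{eq:ode} up to sign. Boundedness of $\bar h$ follows because values, payments, and $\rho$ lie in $[0,1]$, and boundedness of the second moment follows from Assumption \ref{ass:second-moment}.

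The main technical obstacle is verifying the Lipschitz hypothesis of Lemma \ref{lemma:ode-deviations}: the expected gradient, viewed as a function of $(\log\lambda,\log\mu)$, is not Lipschitz globally because the mapping $(\log\lambda,\log\mu)\mapsto k^{\min}(\lambda,\mu)=\min((1+\lambda)/\lambda,1/\mu)$ blows up as $\lambda$ or $\mu$ tends to zero. To handle this, I would restrict to the open set
\[
    \mathcal Y = \{(\log\lambda,\log\mu)\in\mathbb R^2 : \lambda > \epsilon/2 \text{ or } \mu > \epsilon/2\},
\]
on which $k^{\min}(\lambda,\mu) < 2/\epsilon$, and then use the identity $\log((1+\lambda)/\lambda) = \log(\exp(-\log\lambda)+1)$ together with the $1$-Lipschitz continuity of $\min(\cdot,\cdot)$ and of the log-sum-exp function to show that $k^{\min}$ is Lipschitz in $(\log\lambda,\log\mu)$ on $\mathcal Y$, with a Lipschitz constant of order $1/\epsilon$ absorbing the $L_g$ factor from Assumption \ref{ass:single-crossing}.

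Finally, Lemma \ref{lemma:ode-path} (applied in the direction of the case where the ROS or budget constraint binds) guarantees the existence of a finite $\sigma$ with $(\bar\lambda(\sigma),\bar\mu(\sigma))\in\mathcal O_{\epsilon/2}$ and $(\bar\lambda(s),\bar\mu(s))\notin[0,\epsilon)^2$ for $s\in[0,\sigma]$, which is exactly the hypothesis the conclusion of Lemma \ref{lemma:ode-deviations} requires (with its parameter $\epsilon$ taken to be $\epsilon/2$). The bound from Lemma \ref{lemma:ode-deviations} then yields
\[
    \mathbb P\{\max_{t:\alpha t\le\sigma}\|Y_t - \bar Y(\alpha t)\|_\infty > \epsilon/2\} = O(\alpha \sigma) = O(T^{-1/2}),
\]
with $\alpha = \Theta(T^{-1/2})$. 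On the complement event, $(\lambda_\tau,\mu_\tau)\in\mathcal O_\epsilon$ by the triangle inequality, completing the proof. The only delicate point beyond bookkeeping is verifying that the algorithm stays inside $\mathcal Y$ throughout so that the Lipschitz bound applies; this is exactly how the stopping time $\tau$ is handled inside the proof of Lemma \ref{lemma:ode-deviations}, so no additional argument is needed here.
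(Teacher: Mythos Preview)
Your proposal is correct and follows essentially the same approach as the paper: apply Lemma~\ref{lemma:ode-deviations} to $Y_t=(\log\lambda_t,\log\mu_t)$, restrict to the set $\mathcal Y=\{\lambda>\epsilon/2\text{ or }\mu>\epsilon/2\}$ to secure Lipschitz continuity via the log-sum-exp/min argument, invoke Lemma~\ref{lemma:ode-path} to obtain the finite hitting time $\sigma$ for the ODE into $\mathcal O_{\epsilon/2}$, and set $\tau=\lfloor\sigma/\alpha\rfloor$ with $\alpha\approx T^{-1/2}$. The only cosmetic difference is that the paper phrases the conclusion as ``under the good event the algorithm reaches $\mathcal O_\epsilon$'' rather than explicitly naming the triangle inequality, but the content is identical.
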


\subsubsection{Step 2: Orbital Stability}

We next show that once the iterates reach the orbit $\mathcal O_\epsilon$, they stay in the orbit for the rest of the horizon with high probability. To prove this result we show that the sum of Bregman divergence $V_h$ induced by the negative entropy $h(u) = u \log u$ constitutes a stochastic Lyaponuv function. The Lyaponuv function is given by
\[
    V(\lambda,\mu) = V_h(\lambda^*, \lambda) + V_h(\mu^*, \mu)\,,
\]
where the Bregman divergence $V_h(y, x) = h(y)-h(x) - h^\prime(x)\cdot (y-x)$ is $V_h(y, x) = y \log (y/x) - y + x$. Note that we can choose $m > 0$ such that $V(\lambda,\mu) < m$ for all $(\lambda,\mu) \in \mathcal O_\epsilon$.

Assume that the ROS constraint is binding so that $\mu^* = 0$. Here, we have that $V_h(\mu^*, \mu) = \mu$. Let 
\[
\gr_t = v_t \cdot x_t\left( v_t \cdot k^{\min}(\lambda_t, \mu_t)\right) - p_t\left( v_t \cdot k^{\min}(\lambda_t, \mu_t)\right) \quad \text{and} \quad \gb_t = \rho - p_t\left( v_t \cdot k^{\min}(\lambda_t, \mu_t)\right)
\]
be the empirical gradients at time $t$. The multiplicative weight update implies that
\[
    V(\lambda_{t+1},\mu_{t+1}) \le 
    V(\lambda_{t},\mu_{t})
    - \alpha \gr_t \cdot (\lambda_t - \lambda^*) 
    - \alpha \gb_t \cdot  (\mu_t - \mu^*)
    + \alpha^2 \frac{\bar {G_2}} {\lambda^* - \epsilon}\,,
\]
where we used that second moments of the gradients are bounded by Assumption~\ref{ass:second-moment} and that the Bregman divergence is $(\lambda^* - \epsilon)^{-1}$-local-strong-convex in $\mathcal O_\epsilon$ by Lemma~\ref{lem:localconv}. Taking expectations conditional on the current iterates, we obtain that
\begin{align*}
    \mathbb E\left[ V_h(\lambda_{t+1},\mu_{t+1}) 
 \mid \lambda_{t},\mu_{t} \right] &\le 
    V_h(\lambda_{t},\mu_{t})
    - \alpha \gr\left(k^{\min}(\lambda_t,\mu_t) \right) \cdot (\lambda_t - \lambda^*) 
    - \alpha \gb\left(k^{\min}(\lambda_t,\mu_t) \right) \cdot (\mu_t - \mu^*)\\
    &+ \alpha^2 \frac{\bar {G_2}} {\lambda^* - \epsilon}\,.
\end{align*}
For the budget constraint, we know that in the set $\mathcal O_\epsilon$ there exists $\underline g > 0$ such that $\gb\left(k^{\min}(\lambda,\mu)\right) \ge \underline g$. Therefore, using that $\mu^* = 0$ and $\mu_t \ge 0$ we obtain that
\[
    \gb\left(k^{\min}(\lambda_t,\mu_t) \right) \cdot (\mu_t - \mu^*) = \gb\left(k^{\min}(\lambda_t,\mu_t) \right) \cdot \mu_t \ge \underline g \cdot \mu_t \ge \underline g  \cdot V_h(\mu^*, \mu_t)\,. 
\]
For the ROS constraint, use that $k^* = 1/\lambda^* + 1$ and $k_t:=k^{\min}(\lambda_t,\mu_t) = (1+\lambda_t)/\lambda_t$ for $(\lambda_t,\mu_t) \in \mathcal O_\epsilon$ to obtain that
\begin{align*}
     \gr\left(k^{\min}(\lambda_t,\mu_t) \right) \cdot (\lambda_t - \lambda^*)
    &=-\gr\left(k_t \right) \cdot (k_t - k^*) \cdot \frac{\lambda^* - \lambda_t}{k_t - k^*} =  -\gr\left(k_t \right) \cdot (k_t - k^*) \cdot \lambda_t \cdot \lambda^*\\
    &\ge \ell (k_t - k^*)^2 \cdot \lambda_t \cdot \lambda^* = \frac{\ell}{\lambda_t \cdot \lambda^*} (\lambda_t - \lambda ^*)^2 \\
    &\ge \frac{\ell}{\lambda^* (\lambda^*+ \epsilon)} \cdot (\lambda - \lambda^*)^2\\
    &\ge \frac{2\ell(\lambda^* - \epsilon)^2}{(\lambda^*)^2 (\lambda^*+ \epsilon)} \cdot V_h(\lambda^*,\lambda)^2\,,
\end{align*}
where the first inequality follows from the strong monotonicity condition in Assumption~\ref{ass:strongly-monotone} and that dual variables are non-negative, the second inequality because $\lambda_t \le \lambda^* + \epsilon$ for all $(\lambda_t,\mu_t) \in \mathcal O_\epsilon$, and  the last inequality follows because the Bregman divergence of the negative entropy function satisfies $V_h(y,x) \le y/(2\min(x,y)^2) (y-x)^2$ for $x,y > 0$ together with $\lambda_t \ge \lambda^* - \epsilon$ for all $(\lambda_t,\mu_t) \in \mathcal O_\epsilon$. Putting everything together, we obtain that there exists constant $C_1, C_2$ such that
\begin{align*}
    \mathbb E\left[ V(\lambda_{t+1},\mu_{t+1}) 
 \mid \lambda_{t},\mu_{t} \right] &\le 
    (1-\alpha C_1) V(\lambda_{t},\mu_{t})
    + \alpha^2 C_2\,.
\end{align*}

We are now ready to invoke the following classical theorem on stochastic stability.

\begin{theorem}[{\citet[p.~86]{kushner1967stochastic}}] Let $x_t, t=1,\ldots,T$ be a Markov process and $V(x)$ a continuous non-negative function with 
\[
    \mathbb E\left[V(x_{t+1}) \mid x_t \right] \le V(x_t) / \beta + \phi
\]
for every $x$ such that $V(x) < m$, where $\beta > 1$ and $\phi \ge 0$. Then
\[ 
    \mathbb P\left\{ \max_{t=1,\ldots,T} V(x_t) \ge m\right\} \le \frac{V(x_1)}{\beta^T m} + \frac{(1-\beta^{-T}) \phi \beta}{(\beta-1)m}\,.
\]
\end{theorem}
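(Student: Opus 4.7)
The plan is to reduce the maximum-inequality to a Markov bound on a suitably stopped process. Let $\tau = \min\{t \ge 1 : V(x_t) \ge m\}$ with the usual convention that $\tau = \infty$ if this never occurs; then $\{\max_{1 \le t \le T} V(x_t) \ge m\} = \{\tau \le T\}$, so it suffices to bound $\mathbb{P}\{\tau \le T\}$. I would analyze the stopped process $\tilde V_t := V(x_{t \wedge \tau})$ and apply Markov's inequality to $\tilde V_T$ at the end.

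The first key step is to upgrade the drift inequality---which the hypothesis guarantees only on the sub-level set $\{V(x_t) < m\}$---to an unconditional one for the stopped process. On the event $\{t < \tau\}$ we have $V(x_t) < m$, so the hypothesis gives $\mathbb{E}[V(x_{t+1}) \mid x_t] \le V(x_t)/\beta + \phi$; on $\{t \ge \tau\}$ the stopped process is frozen, so $\mathbb{E}[\tilde V_{t+1} \mid x_t] = \tilde V_t$. Combining both cases yields, for every $t$, the unconditional inequality $\mathbb{E}[\tilde V_{t+1} \mid \mathcal{F}_t] \le \tilde V_t/\beta + \phi$. Taking total expectations and iterating from $\tilde V_1 = V(x_1)$ converts this into a scalar recursion $a_{t+1} \le a_t/\beta + \phi$, and the geometric sum gives $\mathbb{E}[\tilde V_T] \le V(x_1)/\beta^{T-1} + \phi\beta(1-\beta^{-(T-1)})/(\beta-1)$; a bookkeeping adjustment of the starting index (e.g., initializing at $t=0$) accounts for the $\beta^T$ that appears in the stated bound.

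The final step is a direct Markov estimate: since $\tilde V_T \ge V(x_\tau)\cdot\mathbf{1}\{\tau \le T\} \ge m\cdot \mathbf{1}\{\tau \le T\}$ and $\tilde V_T \ge 0$ always, we conclude $\mathbb{P}\{\tau \le T\} \le \mathbb{E}[\tilde V_T]/m$, which combined with the iterated bound delivers the theorem. The main (and essentially only) subtlety is the stopping argument: one must recognize that the conditional-drift hypothesis is not directly iterable because after one step the process may have crossed $m$, and that the standard fix is to freeze the process at $\tau$, which gives a genuine supermartingale-type recursion valid for all $t$. Everything else---the geometric sum and Markov's inequality---is routine.
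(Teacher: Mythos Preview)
The paper does not prove this statement; it is quoted verbatim from Kushner (1967) and invoked as a black box in the orbital-stability step. So there is no in-paper proof to compare against, and your proposal must stand on its own.

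Your overall plan (stop at the first exit time, iterate a drift recursion, finish with Markov) is the standard one, but the central step has a real gap. You assert that ``combining both cases yields, for every $t$, the unconditional inequality $\mathbb{E}[\tilde V_{t+1}\mid\mathcal F_t]\le \tilde V_t/\beta+\phi$.'' This is false. On $\{t\ge\tau\}$ the stopped process is frozen, so $\mathbb{E}[\tilde V_{t+1}\mid\mathcal F_t]=\tilde V_t$, and for your inequality to hold there you would need $\tilde V_t\le \tilde V_t/\beta+\phi$, i.e.\ $\tilde V_t\le \phi\beta/(\beta-1)$. But on $\{t\ge\tau\}$ one has $\tilde V_t=V(x_\tau)\ge m$, and in the only nontrivial regime (where the stated bound is below $1$) we have $m>\phi\beta/(\beta-1)$; the inequality therefore fails. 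The freezing trick preserves a plain supermartingale inequality $\mathbb E[\cdot\mid\mathcal F_t]\le(\cdot)$, but it does \emph{not} preserve a strict-contraction inequality $\mathbb E[\cdot\mid\mathcal F_t]\le(\cdot)/\beta+\phi$ with $\beta>1$; that is exactly the distinction you elided. As a concrete check that your iterated bound $\mathbb E[\tilde V_T]\le V(x_1)\beta^{-(T-1)}+\cdots$ is wrong, take $\phi=0$, $\beta=2$, $m=10$, $V(x_1)=5$, and let $V(x_2)=20$ with probability $1/8$ and $0$ otherwise (this satisfies the drift hypothesis since $\mathbb E[V(x_2)\mid x_1]=2.5=5/2$); then $\mathbb E[\tilde V_T]=2.5$ for every $T\ge 2$, not $5\cdot 2^{-(T-1)}$.

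The usual repair is to stop the \emph{exponent} as well: work with the compensated process $Z_t=\beta^{\,(t\wedge\tau)-1}\tilde V_t-\phi\sum_{j=1}^{(t\wedge\tau)-1}\beta^{\,j}$, which \emph{is} a genuine supermartingale both before and after $\tau$ (before $\tau$ the contraction cancels the growing $\beta^t$ and the sum absorbs the drift; after $\tau$ everything is frozen). One then combines this with a maximal/Markov inequality. Note, incidentally, that the same $\phi=0$ example above also shows the first term of the displayed bound cannot literally be $V(x_1)/(\beta^T m)$ (the exit probability is $1/8$ while $5/(2^T\cdot 10)\to 0$); the version that follows from the supermartingale argument has $V(x_1)/m$ there, which is harmless for the paper's application since only the $O(\alpha)$ second term matters.
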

Setting $\beta = 1/(1-C_1 \alpha)$ and $\phi = C_2\alpha^2$, we obtain that for $C_1 \alpha < 1$, which holds for large enough $T$
\begin{align*}
     \mathbb P\left\{ \exists t : (\lambda_t, \mu_t) \not\in \mathcal O_\epsilon, \tau < t \le T  \mid (\lambda_\tau, \mu_\tau) \in \mathcal O_\epsilon\right\} &\le \mathbb P\left\{ \max_{t=\tau+1,\ldots,T} V(\lambda_t,\mu_t) \ge m \mid (\lambda_\tau, \mu_\tau) \in \mathcal O_\epsilon \right\} \\
     &\le\beta^{-T} + \frac{(1-\beta^{-T}) \phi \beta}{(\beta-1)m}\\ 
     &\le \exp(-C_1 T\alpha) + \frac {C_2 \alpha}{C_1 m}\,,
\end{align*}
where the last equation follows because $\phi \beta / (\beta-1) = C_2 \alpha /C_1$, $\beta^{-T} = (1-C_1\alpha)^T \le \exp(-C_1 T \alpha)$. Setting $\alpha \approx T^{-1/2}$ we obtain the following result.

\begin{prop}\label{prop:prob-stability}
The algorithm is orbital stable, that is, the probability of leaving the orbit after time $\tau$ is bounded by
\[
    \mathbb P\left\{ \exists t : (\lambda_t, \mu_t) \not\in \mathcal O_\epsilon, \tau < t \le T  \mid (\lambda_\tau, \mu_\tau) \in \mathcal O_\epsilon\right\} = \left(T^{-1/2}\right)\,.
\]
\end{prop}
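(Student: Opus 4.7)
The plan is to construct a Lyapunov function that decays in expectation along the algorithm's trajectory while inside the orbit, and then convert this per-step drift into a uniform high-probability bound on the Lyapunov function over the remaining horizon by invoking a classical stochastic stability theorem.

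First, I would define the Lyapunov function $V(\lambda,\mu) = V_h(\lambda^*,\lambda) + V_h(\mu^*,\mu)$, where $V_h$ is the Bregman divergence induced by the negative entropy $h(u) = u\log u$. This choice is natural because the multiplier updates in Algorithm~\ref{alg:joint-pacing} are exactly mirror-descent steps with respect to $h$, so the standard mirror-descent inequality gives a one-step estimate of the form
\[
V(\lambda_{t+1},\mu_{t+1}) \le V(\lambda_t,\mu_t) - \alpha\bigl(\gr_t(\lambda_t-\lambda^*) + \gb_t(\mu_t-\mu^*)\bigr) + \alpha^2 \cdot C,
\]
where $\gr_t,\gb_t$ are the empirical gradients and the $\alpha^2$ term follows from local strong convexity of $V_h$ on the orbit (where the relevant coordinates are bounded away from $0$) combined with the bounded second moments of the gradients guaranteed by Assumption~\ref{ass:second-moment}. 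Taking conditional expectation replaces $(\gr_t,\gb_t)$ by $(\gr(k_t),\gb(k_t))$ with $k_t = k^{\min}(\lambda_t,\mu_t)$.

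Next, I would lower bound each inner-product term by a constant multiple of the corresponding Bregman divergence, yielding an expected contraction of the form $\mathbb E[V(\lambda_{t+1},\mu_{t+1})\mid \lambda_t,\mu_t] \le (1-C_1\alpha) V(\lambda_t,\mu_t) + C_2\alpha^2$. For the non-binding budget coordinate, the orbit $\mathcal O_\epsilon$ was defined so that $\gb(k_t)\ge\underline{g}>0$; together with $\mu^*=0$ this gives $\gb(k_t)(\mu_t-\mu^*) \ge \underline{g}\,\mu_t = \underline{g}\,V_h(0,\mu_t)$. For the binding ROS coordinate, I would change variables through $k_t=(1+\lambda_t)/\lambda_t$, use Assumption~\ref{ass:strongly-monotone} to write $-\gr(k_t)(k_t-k^*)\ge \ell(k_t-k^*)^2$, and then unwind the reparameterization via $k_t-k^* = (\lambda^*-\lambda_t)/(\lambda_t\lambda^*)$ to obtain a quadratic in $\lambda_t-\lambda^*$. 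Using the standard estimate $V_h(y,x)\le y/(2\min(x,y)^2)(y-x)^2$, which applies because the orbit keeps $\lambda_t$ bounded away from both $0$ and $\lambda^*+\epsilon$, the quadratic in $\lambda_t-\lambda^*$ dominates a linear term in $V_h(\lambda^*,\lambda_t)$ on $\mathcal O_\epsilon$.

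With the drift condition established, I would choose $m>0$ small enough that $\{V<m\}\subseteq\mathcal O_\epsilon$ and apply the Kushner stochastic stability theorem with $\beta = 1/(1-C_1\alpha)$ and $\phi = C_2\alpha^2$, which gives
\[
\mathbb P\!\left\{\max_{\tau<t\le T} V(\lambda_t,\mu_t)\ge m \,\Bigg|\,(\lambda_\tau,\mu_\tau)\in\mathcal O_\epsilon\right\} \le \exp(-C_1 T\alpha) + \frac{C_2\alpha}{C_1 m}.
\]
Setting $\alpha = \Theta(T^{-1/2})$ makes the exponential term $e^{-\Theta(\sqrt T)}$ negligible and the second term $O(T^{-1/2})$, which is the bound claimed. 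The main obstacle is the ROS drift step: because the non-binding and binding coordinates behave asymmetrically and because the reparameterization $k=(1+\lambda)/\lambda$ is nonlinear, one must carefully verify that $\epsilon$ was chosen small enough so that strong monotonicity in $k$, local strong convexity of $V_h$ in $\lambda$, and the sign conditions characterizing the orbit all hold simultaneously throughout $\mathcal O_\epsilon$.
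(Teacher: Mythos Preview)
Your proposal is correct and follows essentially the same approach as the paper: the same KL-divergence Lyapunov function, the same one-step mirror-descent drift inequality, the same treatment of the budget coordinate via $\gb\ge\underline g$ and of the ROS coordinate via Assumption~\ref{ass:strongly-monotone} plus the reparameterization $k=(1+\lambda)/\lambda$, and the same application of Kushner's stochastic stability theorem with $\beta=1/(1-C_1\alpha)$, $\phi=C_2\alpha^2$. Your choice of $m$ so that $\{V<m\}\subseteq\mathcal O_\epsilon$ is in fact the direction needed for the drift condition and the final probability inequality to go through; the paper states the opposite inclusion, which is a minor slip on its part.
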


\subsubsection{Step 3: Regret Analysis}

As before, suppose that the ROS constraint is binding at the optimal solution. Consider an alternate algorithm that (1) behaves as the original algorithm up to time $\tau$ and (2) after time $\tau$ always uses the multiplier of the ROS constraint and projects the dual variable to $[0,\lambda^* + \epsilon]$. Let $\hlambda_t$ be the dual variable in this new algorithm and denote the multiplier used by $\hat k_t = (1+\hlambda_t)/\hlambda_t$. We denote by $E_t = \left\{ k^{\min}(\lambda_t, \mu_t) = \hat k_t\right\}$ the event that the multipliers used by both algorithm match. 

Let  $\taub$ be a stopping time as defined in \cref{defn:Defs} corresponding to the first time the budget is depleted for some initial budget $B=\rho T$. Because values are non-negative, we can lower bound the reward of $\mpacing$ given any $\gseq$ by summing over the value collected only in iterations from $\tau$ up to $\taub$ and conditioning on the event $E_t$
\begin{align*}
\rew(\mpacing,\gseq, \rho)&\geq \sum_{t=\tau}^{\taub} \vt\cdot\xt\left(\vt \cdot k^{\min}(\lambda_t, \mu_t) \right) \cdot E_t\\
&= \sum_{t=\tau}^{\taub} \vt\cdot\xt\left(\vt \cdot (1 +\hlambda_t)/\hlambda_t \right) \cdot E_t\\
&\ge\underbrace{\sum_{t=\tau}^{\taub} \vt\cdot\xt\left(\vt \cdot (1 +\hlambda_t)/\hlambda_t \right)}_{(I)} - \underbrace{\sum_{t=\tau}^{T} v_t \cdot (1 - E_t)}_{(II)}
\end{align*}
where the first equation follows from the definition of the event $E_t$, and the last inequality follows because $x_t \le 1$ and adding back periods after $\taub$. We bound each term at a time.

For the first term, use that the alternate algorithm always bid according to the ROS constraint to write
\[
    \vt\cdot\xt\left(\vt \cdot (1 +\hlambda_t)/\hlambda_t \right)
    = 0\cdot\rho + \ftscombined(\hlambda_t ,0)- \sum_{t=\tau}^{\taub}\hlambda_t \cdot \gr_t(\hlambda_t,0)
\]
Taking expectations, we can use that $\taub$ is a stopping time and a martingale argument to obtain that
\begin{align*}
\E_{\gseq_{\!T}\sim \calp^T}\left[(I)\right]
&\ge \E_{\gseq_{\!T}\sim \calp^T}\left[ \sum_{t=\tau}^{\taub} D(\hlambda_t, 0)  - \sum_{t=\tau}^{\taub}\hlambda_t \cdot \gr_t(\hlambda_t,0) \right]\,,
\end{align*}
where $D(\lambda,\mu)$ is the dual function. Let $\bar\lambda = (\taub+1-\tau)^{-1}  \sum_{t=\tau}^{\taub}\hlambda_t$ be the average dual variable for the ROS constraint. Using the convexity of the dual function we obtain that
\begin{align*}
     \sum_{t=\tau}^{\taub} D(\hlambda_t, 0) &\ge (\taub+1-\tau) D(\bar \lambda, 0) 
     \ge \E_{\gseq\sim \calp^T} \left[ \rew(\opt,\gseq) \right] - O\left(T^{1/2}\right)\,,
\end{align*}
where we used that $(\bar \lambda,0)$ is dual feasible and weak duality together with $\tau = O\left(T^{1/2}\right)$ and $T - \taub = O\left(T^{1/2}\right)$. Because the alternate algorithm projects dual variables to $[0,\lambda^*+\epsilon]$, Lemma~\ref{lem:localconv} implies that the Bregman divergence of the generalized negative entropy is $1/(\lambda^*+\epsilon)$-strongly convex. Applying the mirror descent guarantee in Lemma~\ref{lem:onlineMD} to the linear functions $w_t(\lambda) = \lambda \cdot \gr_t(\hlambda_t,0)$ we obtain that
\[
    \sum_{t=\tau}^{\taub}\hlambda_t \cdot \gr_t(\hlambda_t,0)
    = \sum_{t=\tau}^{\taub} w_t(\hlambda_t) - w_t(0) = O\left(T^{1/2}\right)\,,
\]
because the alternate algorithm updates the dual variable of the ROS constraint according to $\gr_t(\hlambda_t,0)$. Therefore, we have that
\begin{align}\label{eq:first-term}
\E_{\gseq_{\!T}\sim \calp^T}\left[(I)\right]
&\ge \E_{\gseq\sim \calp^T} \left[ \rew(\opt,\gseq) \right] - O\left(T^{1/2}\right)\,.
\end{align}

For the second term, using that values are independent of the event $E_t$ we obtain
\begin{align}
    \E_{\gseq_{\!T}\sim \calp^T}\left[(II)\right] &= 
    \sum_{t=\tau}^{T} \E[v_t] \cdot \mathbb P\left\{ E_t^\complement \right\}\nonumber\\
    &\le T \cdot \E[v] \cdot \mathbb P\left\{ \cup_{t=\tau}^T E_t^\complement \right\}\nonumber\\
    &= T \cdot \E[v] \cdot \left( \mathbb P\left\{ \cup_{t=\tau}^T E_t^\complement \mid A \right\} \mathbb P\left\{ A \right\} + \mathbb P\left\{ \cup_{t=\tau}^T E_t^\complement \mid A^\complement \right\} \mathbb P\left\{ A^\complement \right\} \right)\nonumber\\
    &\le T \cdot \E[v] \cdot \left( \mathbb P\left\{  \exists t : (\lambda_t, \mu_t) \not\in \mathcal O_\epsilon, \tau < t \le T \mid (\lambda_\tau, \mu_\tau) \in \mathcal O_\epsilon\right\} + \mathbb P\left\{ (\lambda_\tau, \mu_\tau) \not\in \mathcal O_\epsilon \right\} \right)\nonumber\\
    &=O\left(T^{1/2}\right) \label{eq:second-term}
    \,,
\end{align}
where the first inequality follows because values are i.i.d.~and $\mathbb P\{ E_t^\complement \} \le \mathbb P\{ \cup_{t=\tau}^T E_t^\complement \}$ for all $t=\tau,\ldots,T$, the second equality follows from conditioning on the event $A = \{ (\lambda_\tau, \mu_\tau) \in \mathcal O_\epsilon\}$, the second inequality follows because probabilities are at most one and if for some $t$ the event $E_t^\complement$ is true then it must be the case that $(\lambda_t,\mu_t) \not\in \mathcal O_\epsilon$ since the algorithm bids according to the ROS multiplier in the orbit of the optimal dual solution, and the last inequality follows from Proposition~\ref{prop:prob-identification} and Proposition~\ref{prop:prob-stability}.

Combining \eqref{eq:first-term} and \eqref{eq:second-term} we conclude that
\[
    \reg(\mpacing,\calp^T) = O\left(T^{1/2}\right)\,.
\]
\subsection{Online Mirror Descent Results}\label[app]{sec:omd}
The following are some known results of Online Mirror Descent that we used in our previous analysis.
\begin{lem}[\cite{bubeck2015convex}, Theorem $4.2$]\label{lem:onlineMD}
Let $h$ be a mirror map which is $\rho$-strongly convex on $\mathcal{X}\cap \mathcal{D}$ with respect to a norm $\|{}\cdot{}\|$. Let $f$ be convex and $L$-Lipschitz with respect to $\|{}\cdot{}\|$. Then, mirror descent with step size $\alpha$ satisfies \[ \sum_{s=1}^t \left( f(x_s) - f(x) \right) \leq \frac{1}{\alpha} V_h(x, x_1) + \alpha \frac{L^2t}{2\rho}.\] 
\end{lem}

\begin{lem}[\cite{allen2014using}]
\label{lem:localconv}
The Bregman divergence of the generalized negative entropy satisfies ``local strong convexity'': for any $x, y>0$, \[ V_h(y, x) = y \log(y/x) + x- y\geq \frac{1}{2\max(x,y)}\cdot (y-x)^2.\] 
\end{lem}
\begin{proof}
The claimed inequality is equivalent to \[t \log t  \geq (t-1) + \frac{1}{2\max(1,t)}\cdot (t-1)^2\numberthis\label[ineq]{eq:initLSCineq}\] for $ t >0$. 
Suppose $t\geq 1$. Then, choosing $u = 1-1/t$, \cref{eq:initLSCineq} is equivalent to 
\[-\log (1-u) \geq u + \frac{1}{2} u^2, \] for $u \in [0, 1)$, which holds by Taylor series. Suppose $0 < t \leq 1$. Then \cref{eq:initLSCineq}  is equivalent to \[ \log t - \frac{1}{2}\left( t - \frac{1}{t}\right)\geq 0,\] which may be checked by observing that the function is decreasing and equals zero at $t=1$. This completes the proof of the claim. 
\end{proof}
\section{Proofs of Theorem \ref{thm:constaints}}\label{app:constraints}

To prove the result, we first show the next lemma, which bounds the constraint violations for time $t$.
\begin{lem}\label{prop:gradientProperties}
Recall $\gr_t = v_t \cdot x_t\left( b_t \right) - p_t\left( b_t\right)$ and $\gb_t = \rho - p_t\left( b_t\right)$ with $\lambda_t,\mu_t$ being the dual variables for the ROS and the budget constraint  respectively, and $b_t = v_t \cdot k^{\min}(\lambda_t, \mu_t)=v_t\cdot \min\left\{1/\mu_t,1+1/\lambda_t \right\}$ being the bid used by the algorithm. If the payment and allocation functions satisfy $0\leq p_t(b)\leq b\cdot x_t(b_t)$ for any bid $b>0$ (e.g. truthful auctions), then we have
\[
\gr_t \geq -\frac{1}{\lambda_t} \quad \text{and} \quad \gb_t \geq \rho -\frac{1}{\mu_t}
\]
\end{lem}
\begin{proof} Our condition only says the payment is always non-negative and at most the bid. Recall we also normalize the functions so that $v_t$, $p_t$ and $x_t$ all have range $[0,1]$. For the ROS constraint, since $b_t\leq \v_t\cdot (1+1/\lambda_t)$ and $p_t(b_t)\leq b_t\cdot x_t(b_t)$, we get
\[
\gr_t \geq (v_t-b_t)\cdot x_t(b_t) \geq -\frac{v_t}{\lambda_t} \cdot x_t(b_t) \geq - \frac{1}{\lambda_t}. 
\]
Similarly, for the budget constraint because $b_t\leq \v_t/\mu_t$, we get
\[
\gb_t \geq \rho -b_t\cdot x_t(b_t) \geq \rho -\frac{v_t}{\mu_t} \cdot x_t(b_t) \geq \rho - \frac{1}{\mu_t}. 
\]
\end{proof}

The first result in Theorem \ref{thm:constaints} on ROS constraint violation can be obtained from the below lemma.
\begin{lem}
\label{lem:ros_constraint_violation}
Consider a run of the min pacing algorithm starting at $\lambda_1>0$ and $\alpha =\frac{1}{\sqrt{T}}$, then for any outcome $\gseq$ over the $T$ iterations, the ROS constraint violation satisfies \[\sum_{t=1}^T p_t(b_t) -v_t \cdot x_t({b}_t) =-\sum_{t = 1}^T \gr_t \leq 2\sqrt{T}\log \frac{T}{\lambda_1}.\] 
\end{lem}
\begin{proof}
Equation~\eqref{eq:ros-dual-update} in the algorithm implies $\lambda_{t+1} =\exp\left[-\alpha\sum_{t'=1}^{t}\gr_{t'}\right]$. If $-\sum_{t = 1}^T \gr_t \leq \sqrt{T}\log\frac{T}{\lambda_1}$, we are done. Otherwise, let $T^{\prime}$ be the last time that $-\sum_{t = 1}^{T^{\prime}} \gr_t \leq \sqrt{T}\log \frac{T}{\lambda_1}$, so we know for any $t>T^{\prime}$, the dual variable $\lambda_t$ must be larger than $T$ since 
\[
\lambda_t =\lambda_1\cdot\exp\left[-\alpha\sum_{t'=1}^{t}\gr_{t'}\right]>\lambda_1\cdot \exp\left[\alpha\sqrt{T}\log \frac{T}{\lambda_1}\right] = T
\]
By~\cref{prop:gradientProperties} we know $\gr_t \geq -\frac{1}{\lambda_t}$, so $
-\gr_t\leq \frac{1}{\lambda_t}\leq \frac{1}{T}$ for all the iterations $t$ after $T'$. Since there are at most $T$ such iterations, we get
\[ -\sum_{t=1}^T \gr_t = -\sum_{t=1}^{T^{\prime}}\gr_t - \sum_{t > T^{\prime}}\gr_t\leq \sqrt{T}\log \frac{T}{\lambda_1} + 1\leq 2\sqrt{T}\log \frac{T}{\lambda_1}. \] 
\end{proof}

The second result in Theorem \ref{thm:constaints} on stopping time can be obtained from the below lemma.
\begin{lem}
\label{lem:bud_constraint_violation}
Let $\mu^{\max}=1/\rho + 1$, and consider a run of the min pacing algorithm starting at $\mu_1\in (0,\mu^{\max}]$ and $\eta =\frac{1}{\sqrt{T}}$, then for any outcome $\gseq$ over the $T$ iterations, we have $\mu_t\leq \mu^{\max}$ for all $t\leq \taub$, and $T-\taub\leq \frac{\sqrt{T}}{\rho}\cdot\log\frac{10\mu^{\max}}{\mu_1}=O(\sqrt{T})$
\end{lem}
\begin{proof}
The part of $\mu_t\leq \mu^{\max}$ follows inductively. If $\mu_t\leq \mu^{\max}$, either $\mu_t\leq 1/\rho$, then since the step-size $\eta$ is chosen to be small enough we have $\mu_{t+1}\leq 1/\rho + 1$, otherwise if $\mu_t> 1/\rho$, by ~\cref{prop:gradientProperties} we know $\gb_t>0$ and thus $\mu_{t+1}\leq \mu_{t}\leq \mu^{\max}$.

The part of $\taub$ can be shown by contradiction. Suppose $\taub<T-\frac{\sqrt{T}}{\rho}\cdot\log\frac{10\mu^{\max}}{\mu_1}$, it means $\sum_{t=1}^{\taub-1} p_t(b_t)\geq \rho\cdot T-2$  and thus 
\[
\sum_{t=1}^{\taub-1} \gb_t \leq \rho\cdot\taub - (\rho\cdot T-2)\leq - \sqrt{T}\cdot\log\frac{10\mu^{\max}}{\mu_1} + 2.
\]
Similar to the ROS case, note $\mu_{\taub}=\mu_1\cdot \exp\left[-\eta\sum_{t=1}^{\taub-1}\gb_{t}\right] \geq \mu^{max}$, which gives a contradiction.
\end{proof}
\section{Analysis of sequential algorithm }\label{app:sequential_fails}
We prove Proposition~\ref{prop:sequencial} in this section. That is, we will show that for any initialization of the sequential pacing algorithm, i.e. choice of initial values $\mu_0,\lambda_0$ of the dual variables and their respective step-sizes $\eta,\alpha$, there will always be some instance on which the algorithm performs poorly, i.e. it either violates the ROS constraint by at least $\Omega(T)$ or has a regret at least $\Omega(T)$. 

Without loss of generality, we assume $\mu_0$ and $\lambda_0$ are both $O(1)$. All the instances we use in the proof will be deterministic, i.e. $v,x(\cdot),p(\cdot)$ are drawn i.i.d from a point distribution. In particular, all instances we consider have fixed values $v_t=1$, $x_t(b)=\min(\frac{b}{4},1)$ and $p_t(b)=\min(\frac{b^2}{8},2)$ for all $b\geq 0$. Effectively the bid ranges from $0$ to $4$, and is equivalent to the bid multiplier as $v=1$. We pick these values for notation simplicity, and it is easy to scale all quantities down to satisfy our model where $v,x,p$ are all in $[0,1]$. Note that the payment function $p$ is the truthful pricing corresponding to the allocation function $x$ in our example. We start with the following observations for our instance.
\begin{observation}
    \label{obs:even-pacing}
It is straightforward to see that in each iteration, the value is a concave function on the payment, i.e. $(v\cdot x)=\sqrt{p/2}$ (Figure\ref{fig:sequential_proof}), and thus if we fix some total spend $P$ over some $t$ iterations, the largest total value is achieved by spending evenly (i.e. $P/t$) in each of the $t$ iterations. Similarly because of concavity, if there is an additional constraint that the per-iteration spend is at least $l\geq P/t$, the optimal total value is achieved by spending $l$ per-iteration (over any $P/l<t$ iterations).
\end{observation}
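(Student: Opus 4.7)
The plan is to verify the three claims of the observation in sequence, all of which rest on the strict concavity of a simple one-variable function.

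First, I would derive the formula $v \cdot x = \sqrt{p/2}$ for a single iteration. The relevant bid range is $b \in [0, 4]$, since for $b > 4$ both $x(b) = \min(b/4, 1)$ and $p(b) = \min(b^2/8, 2)$ saturate (and larger bids are then weakly dominated by $b = 4$). On $[0, 4]$ the formulas reduce to $x(b) = b/4$ and $p(b) = b^2/8$, so inverting the payment gives $b = \sqrt{8p}$ and hence $v \cdot x = b/4 = \sqrt{p/2}$ as a function of $p$ on the achievable range $[0, 2]$. This function is strictly concave, so I can treat a single iteration as a choice of payment $p \in [0,2]$ with value $V(p) := \sqrt{p/2}$.

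For the second claim---fixed total spend $P$ over $t$ iterations is maximized by spending $P/t$ in each iteration---I would apply Jensen's inequality to $V$, giving
\[
\sum_{i=1}^t V(p_i) \;\leq\; t \cdot V\!\left(\tfrac{1}{t}\sum_{i=1}^t p_i\right) \;=\; t \cdot V(P/t),
\]
with equality iff $p_i = P/t$ for every $i$. No further work is needed here.

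For the third claim, the constraint is $p_i \geq l$ on every iteration that is actually used, with $l \geq P/t$. Let $t'$ denote the number of iterations used; feasibility of $\sum_{i=1}^{t'} p_i = P$ with each $p_i \geq l$ forces $t'l \leq P$, i.e.\ $t' \leq P/l$ (and $P/l \leq t$ by the hypothesis $l \geq P/t$). For any fixed such $t'$, the Jensen step above applied to those $t'$ coordinates shows the value is maximized at the uniform allocation $p_i = P/t'$, which is feasible since $P/t' \geq l$, yielding total value $t' \cdot V(P/t') = \sqrt{t'P/2}$. Since $\sqrt{t'P/2}$ is strictly increasing in $t'$, the overall optimum is attained at the largest feasible $t'$, namely $t' = P/l$, where the allocation is $p_i = l$ uniformly. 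The only mildly subtle point is that the optimum is a corner solution driven by monotonicity of $\sqrt{t'P/2}$ rather than an interior stationary point of a relaxed problem; there is no real technical obstacle, and integer-rounding when $P/l \notin \mathbb{Z}$ is immaterial for the downstream use of this observation.
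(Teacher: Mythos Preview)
Your proposal is correct and takes essentially the same approach as the paper: the paper treats this observation as self-evident from concavity of $p \mapsto \sqrt{p/2}$ and does not give a separate proof, so your explicit invocation of Jensen's inequality and the monotonicity argument for the constrained case simply spell out the details the paper leaves implicit.
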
 
\begin{observation}
    \label{obs:max-slack}
In each iteration, the largest ROS slack one can achieve is at most $1/8$, i.e., $\max_b \left\{ v\cdot x(b)-p(b)\right\} = 1/8$ by bidding $b=1$.
\end{observation}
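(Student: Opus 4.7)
\textbf{Proof proposal for Observation \ref{obs:max-slack}.} The plan is a direct optimization of the single-iteration slack function $f(b) := v \cdot x(b) - p(b) = x(b) - p(b)$ over $b \ge 0$, exploiting the piecewise structure of $x$ and $p$. Since $x(b)=\min(b/4,1)$ and $p(b)=\min(b^2/8,2)$, the natural breakpoints to consider are $b=4$ (where $x$ saturates) and $b=4$ (where $p$ saturates, since $b^2/8=2$ at $b=4$). So there are really only two regimes to analyze: $b \in [0,4]$ and $b > 4$.

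In the regime $b \in [0,4]$, we have $f(b) = b/4 - b^2/8$, which is a downward-opening concave quadratic. I would set $f'(b) = 1/4 - b/4 = 0$, obtaining the unique interior critical point $b^* = 1$, and evaluate $f(1) = 1/4 - 1/8 = 1/8$. The endpoints give $f(0) = 0$ and $f(4) = 1 - 2 = -1$, both smaller, so the maximum over $[0,4]$ is $1/8$ attained at $b=1$.

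In the regime $b > 4$, both $x$ and $p$ are constant: $x(b) = 1$ and $p(b) = 2$, so $f(b) = -1 < 1/8$. Combining the two regimes, $\max_{b \ge 0} f(b) = 1/8$, attained at $b=1$, which is exactly the claim.

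The argument is entirely elementary and there is no real obstacle; the only thing worth flagging is the mild coincidence that the allocation cap $b=4$ and the payment cap $b=4$ occur at the same bid, which is why the two-piece case analysis suffices without a third interval where only one of the two functions has saturated.
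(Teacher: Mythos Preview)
Your argument is correct. The paper states this as an observation without an explicit proof, so your direct case analysis and one-variable calculus is exactly the natural verification a reader would carry out; there is nothing to compare against.
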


\begin{figure}[h]
    \centering
    \resizebox{0.4\textwidth}{!}{
    \begin{tikzpicture}[scale=1,every node/.style={scale=1}]
        \begin{axis}[xlabel={spend},ylabel={conversion value}, ytick={0}, xtick={0},extra x ticks={2},extra y ticks={1}, legend pos=outer north east, legend cell align=left, xmin=0, xmax=2.2, ymin=0, ymax=1.2,axis x line = middle,axis y line = left]       
        \addplot[domain=0:2,color=black,line width=1pt,samples=100]{(x/2)^0.5};
        \addplot[domain=0:2,color=black,dashed,line width=1pt,samples=2]{x};        
        \legend{Landscape,RoS constraint}
        \end{axis}
    \end{tikzpicture}
    }
    \caption{Achievable spend vs conversion value for the sequential example. }
    \label{fig:sequential_proof}

\end{figure}
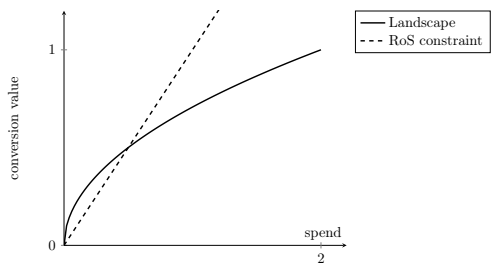

Fix any $\mu_0,\lambda_0,\eta,\alpha$, we will consider a pair of instances. The first instance $\hat{\mathcal{I}}$ has $\hat{\rho}=1.9$ (and the $v,x,p$ as described above). It is easy to see for $\hat{\mathcal{I}}$ that $k^*=\kr=2$ and $\kb=\sqrt{8\cdot 1.9}>2$, and it is optimal to spend $0.5$ per iteration and get $T\cdot v\cdot \frac{2}{4}=T/2$ total value. It is also easy to check that this instance satisfies all the assumptions we need for the min-pacing algorithm. Consider the sequential pacing algorithm with two cases
\begin{enumerate}
    \item If the total spend over $T$ iterations is at least $P\geq 0.6\cdot T$. The maximum total value in this case is achieved by spending $P/T$ per iteration (Observation~\ref{obs:even-pacing}), which means bidding $b=\sqrt{8P/T}$ and get value $\sqrt{P/(2T)}$. Thus the total value is at most $\sqrt{P\cdot T/2}$, so the total ROS constraint violation is at least 
    \[
    P-\sqrt{P\cdot T/2} = (P/T - \sqrt{\frac{P}{2T}})\cdot T.
    \]
    It is easy to check this is at least $\Omega(T)$ when $P\geq 0.6\cdot T$, so the ROS constraint violation would be linear in $T$.
    \item If the total spend over $T$ iterations is at most $0.6\cdot T$. Consider any iteration after the first $0.7\cdot T$ iterations, and we know $\mu_t = \mu_0\exp\Big(-\eta\cdot(\hat{\rho}\cdot t -\sum_{t'<t}p_{t'}(b_{t'})\Big)\leq \mu_0\exp\Big(-\eta\cdot 0.73\cdot T\Big)$ for any $t\geq 0.7T$ since $\hat{\rho}=1.9$ and total spend is at most $0.6\cdot T$. There are two sub-cases:
    \begin{itemize}
        \item If $\mu_0\exp\Big(-\eta\cdot 0.73\cdot T\Big)\leq 1/3$, we know $\mu_t\leq 1/3$ and thus $b_t = \frac{1}{\mu_t}\cdot \frac{\lambda_t+1}{\lambda_t}\geq 3$ for all $t\geq 0.7\cdot T$, which means we will have a per iteration ROS violation of at least $9/8-3/4 = 0.375$ (with $b_t=3$) in each of the last $0.3\cdot T$ iterations, since the ROS violation increases with $b$ over the region $b\geq 3$. In each of the first $0.7\cdot T$ iterations, the ROS slack we can gain is at most $1/8$ (Observation~\ref{obs:max-slack}), so the total ROS constraint violation is at least $0.375\cdot 0.3 \cdot T - 0.7\cdot T / 8 \geq 0.025\cdot T$. 
        \item If $\mu_0\exp\Big(-\eta\cdot 0.73\cdot T\Big)> 1/3$, then we know $\mu_0>1/3$ and $\eta \leq \frac{\ln(3\mu_0)}{0.73\cdot T}$. 
        
    \end{itemize}
\end{enumerate}
We can conclude from the above discussion that the only possible scenario where an instantiation of the sequential pacing algorithm won't incur a $\Omega(T)$ violation of the ROS constraint on the instance $\hat{\mathcal{I}}$ is in the last sub-case, which means the step-size $\eta$ of the budget dual variable is $O(1/T)$. If that is the case, it is easy to see such an instantiation must perform poorly on a budget-binding instance when we need $\eta$ to be large so the budget dual variable $\mu$ can increase fast enough to lower the bid sufficiently. 

More specifically, when $\mu_0\exp\Big(-\eta\cdot 0.73\cdot T\Big)> 1/3$ holds, we consider the instance $\tilde{\mathcal{I}}$ with $\tilde{\rho}=\frac{1}{200\mu_0^4}$. Note $\tilde{\rho}$ is $\Theta(1)$ since we assume $\mu_0$ is $O(1)$ and in this case $\mu_0>1/3$. We have $k^*=\kb=\frac{1}{5\mu_0^2}<1.8$, $\kr=2$, and the maximum total value is $\frac{T}{20\mu_0^2}$ achieved by bidding $\kb$ and spending $\tilde{\rho}$ in each iteration. Since the pacing algorithm guarantees to obey the budget constraint (by not bidding above the remaining budget at any time), the total spend is at most $\tilde{\rho}\cdot T = \frac{T}{200\mu_0^4}\leq 0.73\cdot T$ (as $\mu_0>1/3$), so we know that $\sum_{t'\leq t}\tilde{\rho}-p_{t'}(b_{t'})\geq -\sum_{t'\leq t}p_{t'}(b_{t'}) \geq -0.73\cdot T$ for any $t$. Thus
\[
\mu_t = \mu_0\exp\Big(-\eta\cdot \sum_{t'\leq t}\left(\tilde{\rho}-p_{t'}(b_{t'})\right) \Big)\leq \mu_0\exp\Big(\eta\cdot 0.73\cdot T\Big) < \mu_0\cdot (3\mu_0)=3\mu_0^2,
\]
where the last inequality follows from $\mu_0\cdot \exp\Big(-\eta\cdot 0.73\cdot T\Big)> 1/3$. Consequently, we have $b_t = \frac{1}{\mu_t}\cdot \frac{\lambda_t+1}{\lambda_t}\geq \frac{1}{3\mu_0^2}$ and thus spend at least $\frac{1}{72\mu_0^4}$ in all iterations before the budget is depleted. It is straightforward to see that the maximum possible total value under this condition is obtained when bidding exactly $\frac{1}{3\mu_0^2}$ (and spending exactly $\frac{1}{72\mu_0^4}$) per iteration until the budget depletes (Observation~\ref{obs:even-pacing}). This gives a value of $\frac{1}{12\mu_0^2}$ per iteration, and the budget is depleted after $\tilde{\rho}\cdot T/\left(\frac{1}{72\mu_0^4}\right)=\left(\frac{T}{200\mu_0^4}\right)/\left(\frac{1}{72\mu_0^4}\right)=\frac{72\cdot T}{200}$ iterations. The total value obtained by sequential pacing in this case is at most $\frac{3T}{100\mu_0^2}$, which is at least $\Omega(T)$ smaller than the optimal value of $\frac{T}{20\mu_0^2}$ (as $\mu_0$ is $O(1)$ by assumption). 

This completes our argument that given any instantiation of the sequential pacing algorithm, there exists an instance, which satisfies all the assumptions we need for the min pacing algorithm, such that the sequential pacing algorithm either incurs at least $\Omega(T)$ violation of the ROS constraint, or has a regret at least $\Omega(T)$.

\section{Supplementary material for Empirical study}
\label{app:empirical}
\subsection{Semi-synthetic Dataset Construction}
\label{app:empirical-dataset}
\paragraph{Generative Model. } We will use a i.i.d. stochastic model to generate the $x_t(b_t)$ and $p_t(b_t)$ in iteration $t$ as a function of bid $b_t$ (as discussed earlier, we slightly abuse notation to use $b_t$ to be the multiplier to $tcpa \cdot pconv$). We use a Poisson distribution for $x_t(b_t)$ (i.e. number of clicks in an iteration at a bid $b_t$). With parameter $\lambda$, its probability mass function is $f(x;\lambda) = \frac{\lambda^x e^{-\lambda}}{x!}$. The parameter $\lambda$ is the expected number of clicks in an iteration, and we set it using the bidding landscape. In particular, for the model corresponding to a campaign $C$, the expected number of clicks at bid $b_t$ in an iteration would be $\lambda_C(b_t)=click_{C}(b_t)/T$ where $T$ is the total number of iterations in a day. In our empirical evaluation, we pick $T=144$, which translates to each iteration being a $10$-minute period, i.e., the dual variables of the algorithms are updated every 10 minutes instead of after every auction. We also derive from the bidding landscape a cost-per-click $cpc_C(b_t)=\frac{cost_{C}(b_t)}{click_{C}(b_t)}$.

To summarize, the value, click and cost at bid $b_t$ are as follows:

$\boxed{
v_t = tcpa(C)\cdot pconv_t(C), \qquad 
x_t(b_t) \sim  Poisson(\lambda_C(b_t)) 
,\qquad p_t(b_t) = x_t(b_t)\cdot cpc(b_t) \cdot noise_p}$

where we introduce i.i.d. non-negative multiplicative noise 
$noise_p$ with expected value $1$ to the cost. In our evaluation, we use a Gaussian distribution centered at $1$ with standard deviation $0.1$ and truncated to be within $[0,2]$ (so it's non-negative and has expected value $1$). Also, when empirically evaluating the tCPA campaigns, the conversion rates $pconv_t(C)$ are drawn from a Gaussian distribution centered at the average $pconv$ of the campaign with a standard deviation of $0.1$ and truncated to be in $[0,2]$.

It is easy to see that in the stochastic i.i.d. model, the optimal value of~\eqref{eq:obj_relaxation} is an upper bound on the expectation of the ex-post optimal value. In our generative model, by design we have

\paragraph{Bidding Landscape. } To see how the auction performance of a campaign determines its model parameters in the generative model, it is useful to begin with the notion of a bidding landscape. For each campaign $C$, we construct a bidding landscape as a function from bids to the (predicted) number of clicks and cost. This is done first at a per-query level using auction simulation. In more detail, for an ad opportunity (a.k.a. query) $q$ where campaign $C$ is eligible to show its advertisement, we look at the logged bids of all the other campaigns participating in the auction for this query $q$, and simulate the auction for any bid $b$ of $C$ to know if/where $C$'s ad would be shown. This gives us the predicted number of clicks and cost per click corresponding to any particular bid $b$, and we refer to them as  $click_{C,q}(b)$ and $cost_{C,q}(b)$. In our model, we use the actual (i.e., advertiser submitted) target cost per acquisition of $C$ as $tcpa(C)$, and the logged average predicted conversion probability generated by the production machine-learning model as $pconv_q(C)$. 
For a query $q$, bids are given by $b = k \cdot v_q$ where $v_q = tcpa (C) \cdot pconv_q(C)$ is the value of the query.

We aggregate these single-query landscape functions to get $C$'s daily bidding landscape by summing up the respective functions over all the queries in a day, e.g., $click_{C}(k)=\sum_{q}click_{C,q}(k \cdot v_q)$, and $cost_{C}(k)=\sum_{q}cost_{C,q}(k \cdot v_q)$. Note that these functions are non-decreasing in $k$. The per-query bidding landscapes are inherently step functions represented by the various bid thresholds that makes $C$'s ad to be displayed at various positions (or not displayed at all). While the aggregated landscapes are already smoother than the per-query landscapes, we further smooth the aggregated landscapes by linearly interpolating between consecutive thresholds. See Figure~\ref{fig:landscape} for an example of the aggregated daily bidding landscape of an ad campaign.\footnote{We normalize the values of click, value and cost in all the plots of this section, so the quantities shown do not represent real traffic or revenue.}

\begin{figure}[h]
  \begin{center}  
    \includegraphics[width=0.75\textwidth]{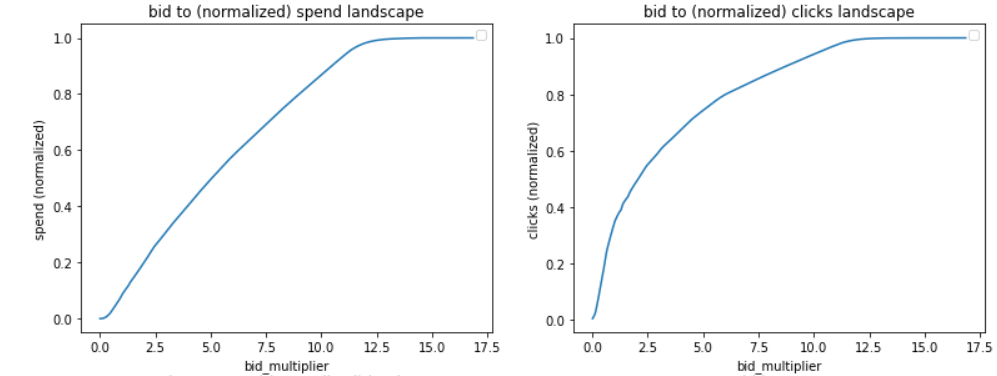}
    \caption{The bidding landscape of an example campaign. The x-axis is the bid (as a multiplier to value), and the y-axis are the daily cost and number of clicks (all normalized to be in $[0,1]$) respectively.}
    \label{fig:landscape}
  \end{center}
\end{figure}

\begin{figure}
	\centering
	\begin{subfigure}{\textwidth}
  		\centering
  		\includegraphics[width=\textwidth,height=6cm]{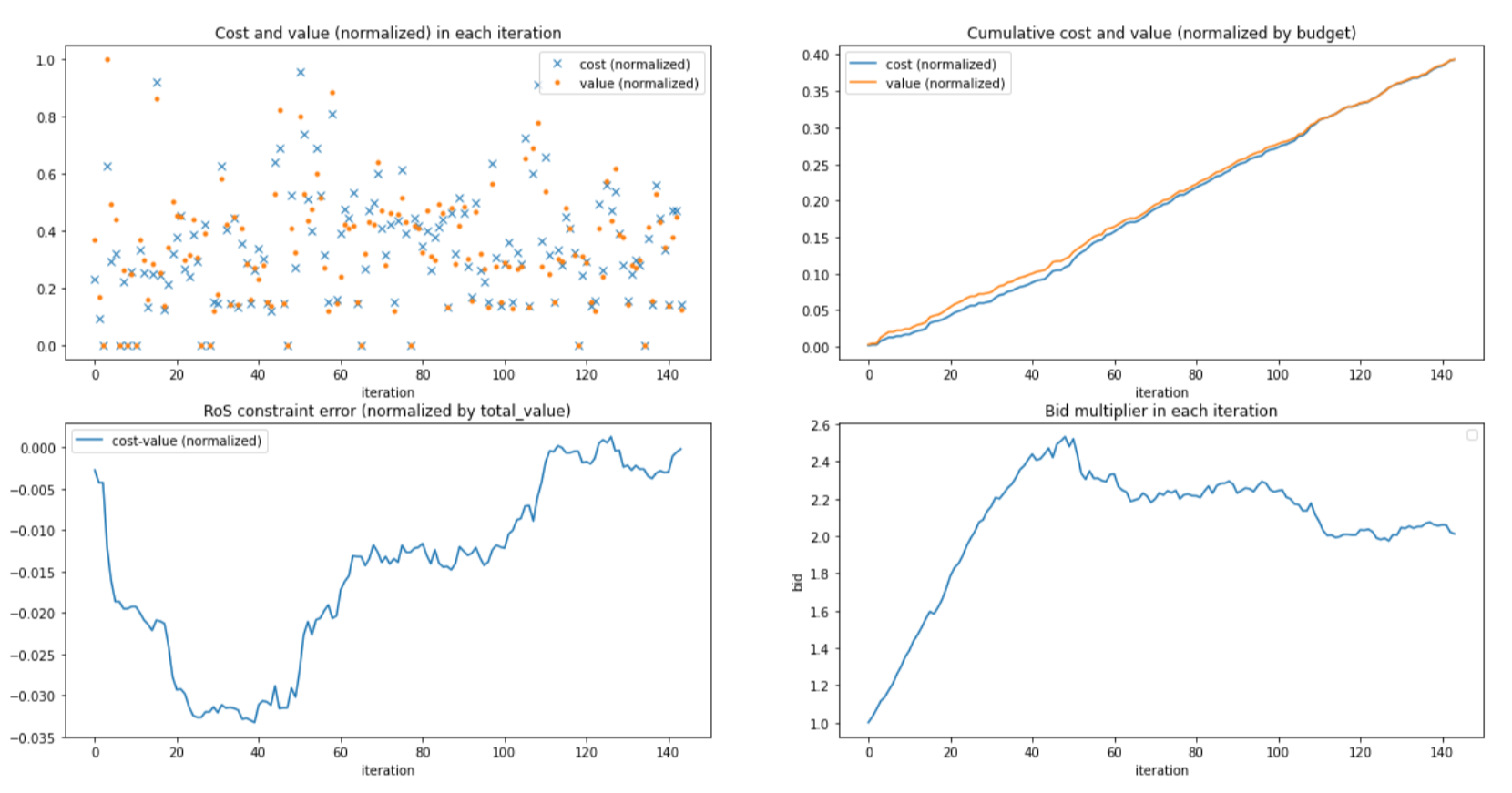}
  		\caption{Dual-optimal Pacing Algorithm}
  		\label{fig:joint_example}
	\end{subfigure}
 	\begin{subfigure}{\textwidth}
  		\centering
  		\includegraphics[width=\textwidth,height=6cm]{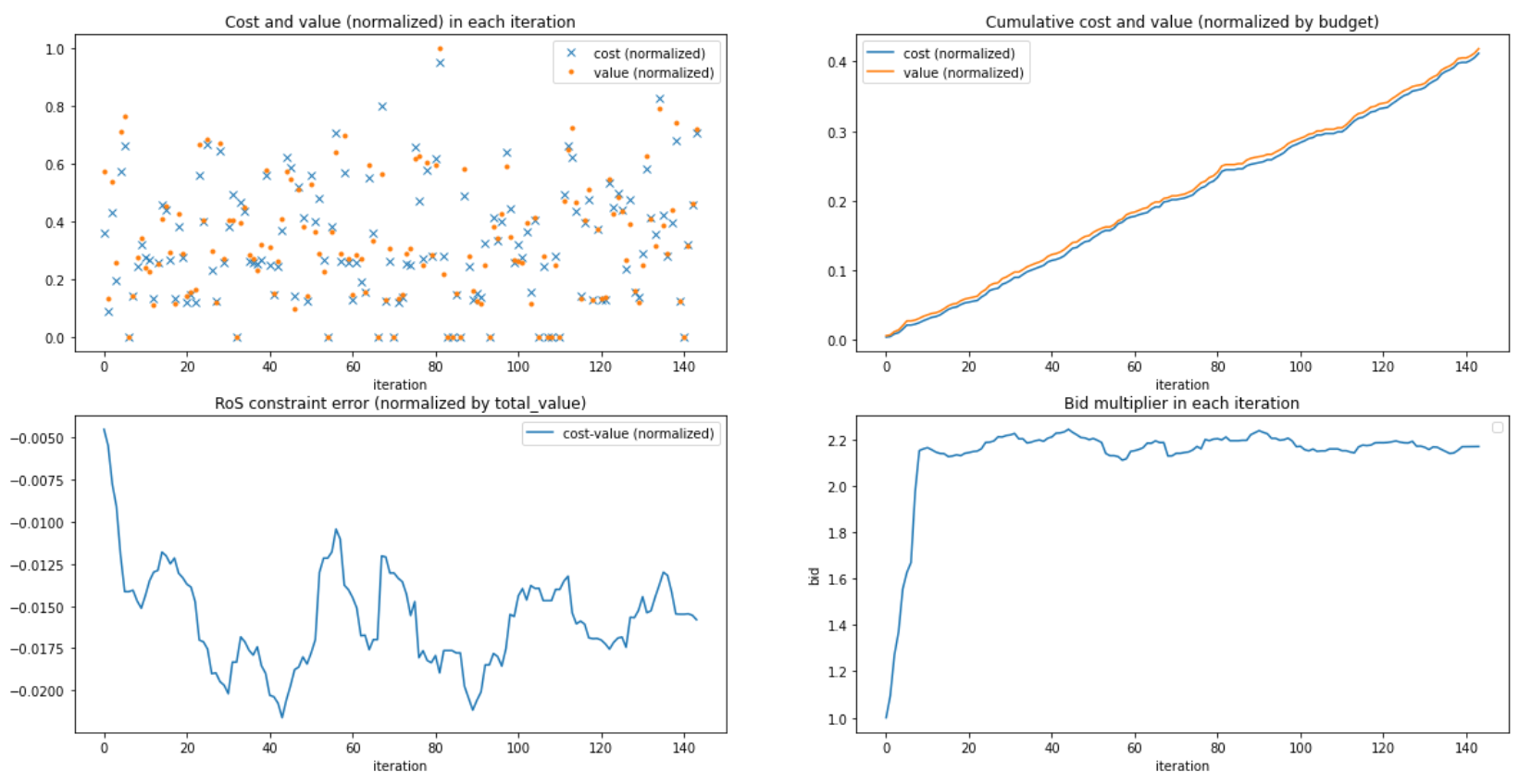}
  		\caption{Min Pacing Algorithm}
  		\label{fig:min_example}
	\end{subfigure}
	\begin{subfigure}{\textwidth}
  		\centering
  		\includegraphics[width=\textwidth,height=6cm]{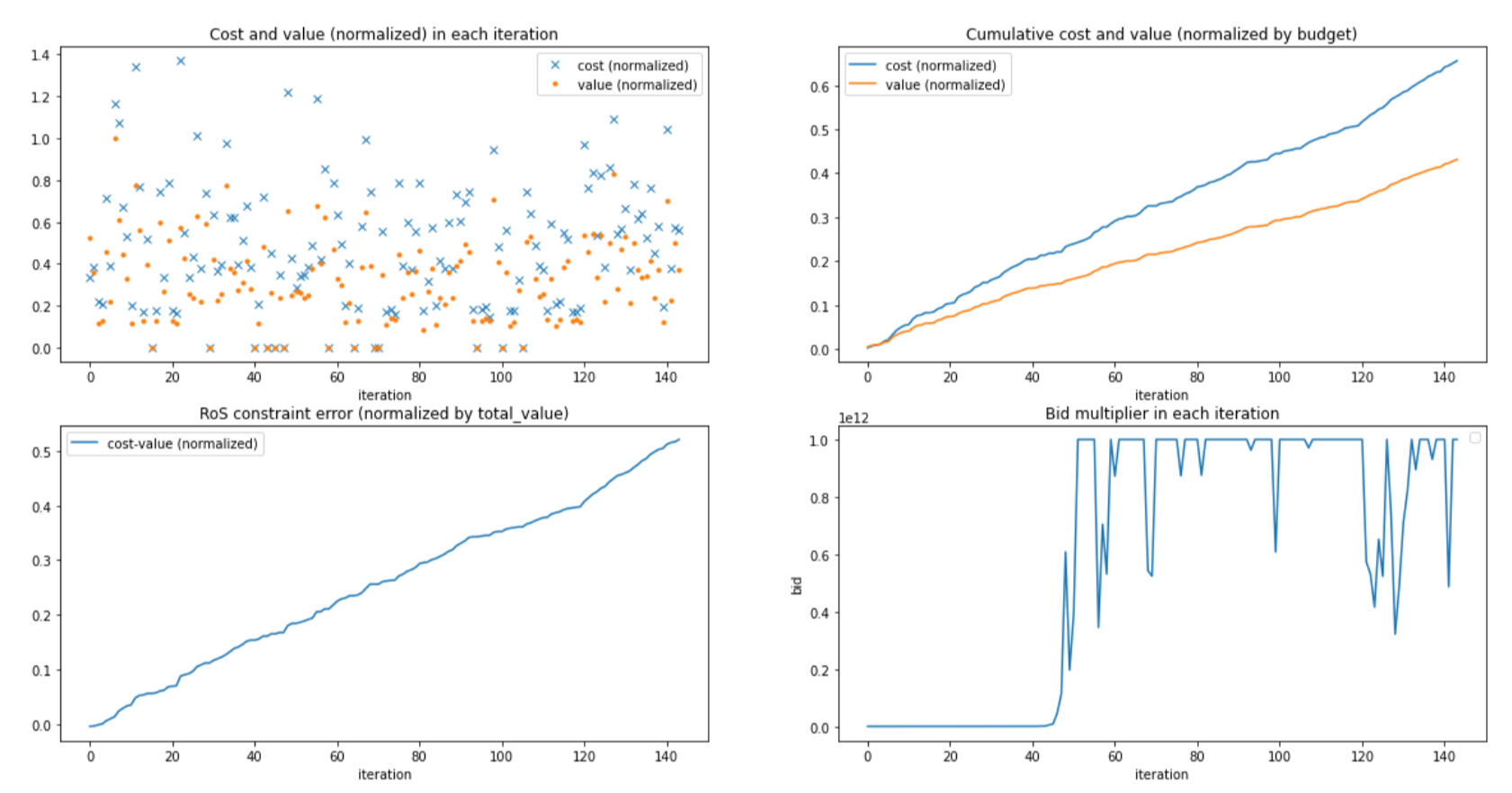}
  		\caption{Sequential Pacing Algorithm}
  		\label{fig:seq_example}
	\end{subfigure}
	\caption{Simulation of the dual-optimal bidding (top),  min bidding (middle) and sequential bidding algorithms (bottom) on an example campaign. We plot the per-iteration $value$ and $cost$ (normalized so $value\in[0,1]$), cumulative $value$ and $cost$ (normalized by budget), cumulative ROS error (as $cost-value$ normalized by total $value$), and bids $k$.} 
\label{fig:example-execution}
\end{figure}
\subsection{Empirical Evaluation}
\label{app:empirical-eval}
In our empirical study, we evaluate an algorithm on a campaign as follows. In each iteration, after the algorithm gives the bid it wants to submit, we compute the allocation and payment $x_t,p_t$ using our generative model to get the number of clicks and cost of that iteration, and let the algorithm update the bid for the next iteration. We sum up the total cost and value through all $T$ iterations. For the budget constraint, we follow the common practice to always strictly enforce it as follows: if in an iteration the generated cost is larger than the remaining budget, we modify that iteration's cost and value both to be $0$. We do not enforce the ROS constraint strictly\footnote{Note that it is always possible for an ROS constraint to be temporarily violated after $t$ rounds, but in the $t+1$-th round it could become satisfied because of a really high value query coming through at low cost. Therefore it is suboptimal to stop serving right after ROS constraint gets violated in a round. This is not the case for budget constraint: once violated, it always remains violated because cumulative spend is monotonically increasing.}, but of course, measuring how much the different algorithms violate the ROS constraint is an important aspect of this study and will be discussed here. In Figure~\ref{fig:example-execution} we visualize the pacing algorithms on an example campaign.

In our empirical evaluation, for each campaign, we simulate an algorithm $10$ times to take the average total $\spend$ and total $\conv$ or conversion value as the result of the algorithm on that campaign. For each algorithm, we take the $10,000$ $(\spend,\conv)$ pairs from all the campaigns, and arrange them into buckets based on the relative ROS constraint error\footnote{ROS constraint states that $\spend \leq \conv$. So a constraint violation would imply $\spend > \conv$, i.e., $\spend/\conv-1 > 0$.} $\max\left(0,\spend/\conv - 1\right)$. For each bucket, we sum up the $\conv$ of all the campaigns in it. Moreover, for each algorithm, we do a grid search over the step-sizes used in the dual variables' updates. Each pair of step-sizes (one for each dual variable) is evaluated over the entire dataset, and for each algorithm we pick the best pair of step-sizes according to the total $\conv$ in the bucket of zero ROS constraint error. We compare the results associated with the best step-sizes for each algorithm.
\subsection{Benchmark}
\label{app:empirical-benchmark}
For each campaign, our benchmark (i.e.~\eqref{eq:obj_relaxation}, copied below) is the fluid relaxation of~\eqref{eq:obj_x}, but restricted to uniform bidding, i.e., $b_t = k\cdot v_t$ for all $t$. 
\begin{equation*}
\begin{array}{ll}
\underset{k \ge 0}{\mbox{maximize}} & \sum_{t=1}^T \ex[v_{t}\cdot x_t(k
\cdot v_t)]\\
\mbox{subject to } &  \sum_{t=1}^T \ex[p_t(k\cdot v_t)] \leq  \sum_{t=1}^T \ex[v_{t}x_t(k\cdot v_t)],
\\
& \sum_{t=1}^T \ex[p_t(k \cdot v_t)] \leq \rho T\,.
\end{array}
\end{equation*}


It is easy to see that in the stochastic i.i.d. model, the optimal value of~\eqref{eq:obj_relaxation} is an upper bound on the expectation of the ex-post optimal value. In our generative model, by design we have
\begin{align*}
  \conv_C(k) &= \ex[ v_{t} x_t(b_t)]=\frac{tcpa(C)} T\cdot \ex[pconv_t(C)\cdot click_C\left(k \cdot tcpa(C)\cdot pconv_t(C)\right)]\,,\\
\spend_C(k) &= \ex[p_t(b_t)]=\ex\left[x_t(b_t) \frac{cost_C(b_t)}{click_C(b_t)}\right]\ex[noise_p]=\frac{1}{T} \ex[cost_C\left(k \cdot tcpa(C)\cdot pconv_t(C)\right)]\,,  
\end{align*}
where the expectation is taken with respect to the distribution of conversion probabilities of the different queries.

Our benchmark for campaign $C$ in~\eqref{eq:obj_relaxation} becomes
\begin{equation}
\label{eq:benchmark}
\begin{array}{ll}
\underset{k\ge0}{\mbox{maximize}} &  \conv_C(k)\\
\mbox{subject to } &  \spend_C(k) \leq  \conv_C(k),
\\
& \spend_C(k) \leq \rho\,.
\end{array}
\end{equation}
In our experiments, we approximate $\conv_C(k)$ and $\spend_C(k)$ by performing a certainty equivalent approximation in which we replace random quantities (i.e., the predicted conversion probabilities) by their expected values.
We solve the above optimization problem on the bidding landscape functions by finding the largest bid multiplier $k^*$ such that $\spend_{C}(k^*)$ is below $C$'s budget and $\conv_C(k^*)\geq \spend_{C}(k^*)$. Such multiplier $k^*$ is easy to find using a line search since our landscape functions are all monotone in $k$. Furthermore, the restriction to uniform bidding in ~\eqref{eq:obj_relaxation} is without loss of generality when the $\conv_C(k)$ versus $\spend_C(k)$ function is concave, which qualitatively holds in our data (e.g. Figure~\ref{fig:opt_exp}).


We use $\conv_C(k^*)$ as computed above as the benchmark for $C$ (see Figure~\ref{fig:opt_exp} for examples). Note this captures the expected optimal solution, but algorithms running on the generative model of $C$ may achieve better ex-post value than the benchmark due to the stochasticity of the model. We add up the expected optimal value over all campaigns as the overall benchmark. Figure~\ref{fig:opt_exp} shows the pairs of spend and conversion value levels that can be achieved by varying the bidding multiplier $k$ for a typical campaign. The achievable curves $\left( \spend_C(k), \conv_C(k)\right)_{k \ge 0}$ lie in $\mathbb R_+^2$, start at the origin for $k=0$, increase along both axis as the bid multiplier increases, and end at $k \rightarrow \infty$.
\begin{figure}[h]
  \begin{center}  
    \begin{subfigure}{.5\textwidth}
        \centering
        \includegraphics[width=.8\linewidth]{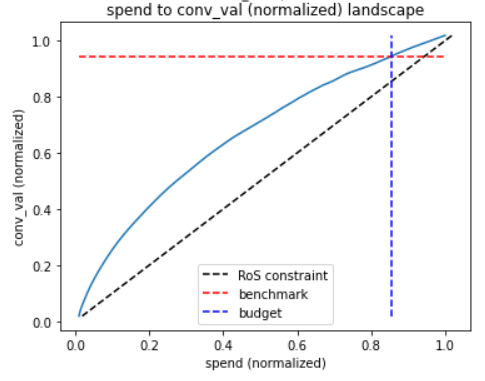}
        \caption{Budget constraint is binding.}
    \end{subfigure}%
    \begin{subfigure}{.5\textwidth}
        \centering
        \includegraphics[width=.8\linewidth]{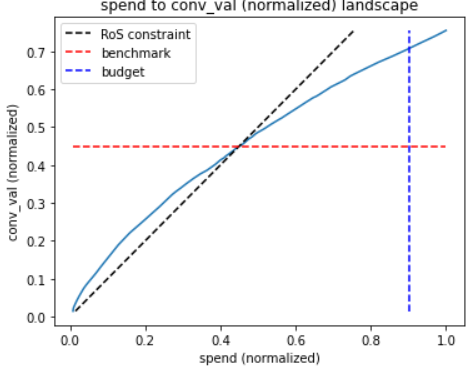}
        \caption{ROS constraint is binding.}
    \end{subfigure}
    \caption{The optimal operating points of an example campaign. The achievable curve (solid blue) delineates the pairs of spend-conversion value pairs that can be achieved by different bidding multipliers. The black diagonal dotted line captures the ROS constraint (feasible pairs should lie above this line), the blue vertical dotted line captures the budget constraint (feasible pairs should lie to the left of this line). The optimal operating point is the smallest of the intersection points of the achievable curve with one of the constraints and is shown using the red horizontal dotted line. In (a) the budget constraint is binding, while in (b) the ROS constraint is binding.}\label{fig:opt_exp}
  \end{center}
\end{figure}
\end{document}